\newcommand{\PATH}{} 
\newcommand{\term}[1]{\ensuremath{\mathtt{#1}}\xspace}
\newcommand{\OldProblem}{BIC incentivized exploration\xspace}
\newcommand{\ALGG}{full-disclosure path\xspace}
\newcommand{\ALGGs}{full-disclosure paths\xspace}
\newcommand{\SubH}[1]{\mathcal{H}_{#1}} 
\newcommand{\fdpSup}{\mathrm{fd}}
\newcommand{\fdpL}{L^\fdpSup_K} 
\newcommand{\fdpP}{p^\fdpSup_K} 
\newcommand{\fdpN}[1][a]{N^\fdpSup_{K,#1}}
\newcommand{\GdT}{\fdpL} 
\newcommand{\GdP}{\fdpP} 
\newcommand{\estN}{N_{\term{est}}}
\newcommand{\estC}{C_{\term{est}}}
\newcommand{\conf}[1]{\mathtt{conf}\left(#1\right)}
\newcommand{\event}[1]{\ensuremath{\mathtt{event}_{#1}}\xspace}
\newcommand{\cG}{\mathcal{G}}
\def\2LEVEL{two-level policy}
\def\NG{\sigma}
\def\E{\mathbb{E}}
\def\reg{\mathrm{Reg}}
\def\A{\mathcal{A}}
\def\cD{\mathcal{D}}
\def\cT{\mathcal{T}}
\def\z{\tau}
\title{Incentivizing Exploration with Selective Data Disclosure%
\footnote{
A preliminary version has been accepted as a full paper at \emph{ACM EC 2020} (ACM Conference on Economics and Computation), and published as a two-page abstract in the conference proceedings.
\vspace{1mm}

\xhdr{Version History.}
A working paper has been available at {\tt https://arxiv.org/abs/1811.06026} since Nov 2018. The conference publication corresponds to the Feb'20 version. These initial versions contained all results except robustness and the numerical study (\Cref{sec:robust,sec:expts}), which were added in Dec'20 and Nov'24, respectively. New discussions (\Cref{sec:transparency,sec:discussion-model}) were added in March'26, as well as a partial reframing of the motivating story to emphasize transparency and deemphasize commitment. Also, the March'26 revision removed the additional  technical assumptions on agent behavior, leaving only Assumption~\ref{ass:embehave}.
\vspace{1mm}

\xhdr{Acknowledgements.}
We would like to thank Robert Kleinberg for a brief collaboration in the early stages of this project, and Ian Ball and Nicholas Lambert for helpful feedback. We would like to thank seminar attendees at Columbia, Michigan, MIT-Harvard, NYU, Princeton, Yale, the Simons Institute for the Theory of Computing, Stanford; and workshop attendees at the Arizona State University Economic Theory Conference, the Stony Brook Game Theory Center, and the UBC-HKU Summer Theory Conference.
\vspace{1mm}
}}
\date{First version: November 2018\\This version: March 2026}
\author{
Nicole Immorlica \thanks{Microsoft Research (Cambridge, MA) and Yale University (New Haven, CT).
Email: nicimm@microsoft.com.\newline
Research done while employed full-time at Microsoft Research.}
\and
Jieming Mao \thanks{Google, New York, NY.
Email: maojm@google.com.\newline
Research done during an internship at Microsoft Research NYC, while J.M. was a PhD student at Princeton University.}
\and
Aleksandrs Slivkins  \thanks{Microsoft Research, New York, NY.
Email: slivkins@microsoft.com.}
\and
Zhiwei Steven Wu  \thanks{Carnegie-Mellon University, Pittsburgh, PA.
Email: zstevenwu@cmu.edu.\newline
Research done during a postdoc at Microsoft Research NYC.}
}
\begin{document}

\maketitle

\begin{abstract}
We propose and design recommendation systems that incentivize efficient exploration.  Agents arrive sequentially, choose actions and receive rewards, drawn from fixed but unknown action-specific distributions. The recommendation system presents each agent with actions and rewards from a subsequence of past agents, chosen ex ante. Thus, the agents engage in sequential social learning, moderated by these subsequences. We asymptotically attain optimal regret rate for exploration, using a flexible frequentist behavioral model and mitigating rationality 
assumptions inherent in prior work. We suggest three components of effective recommendation systems: independent focus groups, group aggregators, and interlaced information structures.

\end{abstract}

\setcounter{page}{0}
\thispagestyle{empty}

\newpage
\addtocontents{toc}{\setcounter{tocdepth}{0}}

\onehalfspacing

\section{Introduction}
\label{sec:intro}

A prominent feature of online platform markets is the pervasiveness of reviews and ratings.  
%
The review and rating ecosystem creates a dilemma for  market design. On the one hand, platforms would like to allow each consumer to make an informed choice
by presenting the most comprehensive and comprehensible information. On the other hand, platforms need to encourage consumers to explore infrequently-selected alternatives in order to learn more about them. Extensive exploration may be required in settings, like ours, where the reward of an alternative is stochastic.
The said exploration, while beneficial for the common good, is often misaligned with incentives of individual consumers. Being short-lived, individuals prefer to exploit available information, selecting alternatives that look best based on this information.
This behavior can cause herding in which all consumers take a sub-optimal alternative if, for example, all consumers see all prior ratings.
Aside from such extreme behaviors, some alternatives may get explored at a very suboptimal rate, or suffer from selection bias.
Thus platforms must incentivize exploration.

\cite{Kremer-JPE14} and \cite{Che-13} introduced the problem of incentivized exploration in the context of platform design.  Their work, along with extensive follow-up work, leverages information asymmetry to mitigate the tension between exploration and exploitation.
The platform chooses a single recommendation for each consumer based on past ratings, and does not disclose any other information about the ratings. Assuming, as is standard, that consumers are Bayesian rational,
platforms can incentivize sufficient exploration to enable efficient social learning. However, this assumption can be problematic in practice: consumers may hesitate to follow recommendations because of limited rationality, behavioral biases, or a desire for transparency stemming from a preference for detailed and interpretable information.


Our work also leverages information asymmetry to induce social learning, but does so with a restricted class of platform policies which enable a more permissive behavioral model.
We restrict the platform to
delivering messages, which we call order-based disclosure policies, which provide each consumer with a subhistory of past choices and ratings. Specifically, a partial order on the arrivals is fixed ex-ante (and can be made public w.l.o.g.), and each consumer observes the chosen alternatives and the ratings for everyone who precedes her in this partial order. Put differently, an order-based disclosure policy constructs a communication network for the consumers, and lets them engage in social learning on this network. We assume consumers act like frequentists:
to estimate the reward of a given alternative, they follow the empirical mean of past ratings and form a confidence interval. The actual estimate can lie in a wide range consistent with the confidence interval.%
\footnote{Our frequentist model encompasses Bayesian agents with well-specified Beta-Bernoulli beliefs (see the last paragraph of Section~\ref{sec:model-agents}), and allows substantial deviations from Bayesian rationality.}
This is justified because each provided subhistory is unbiased -- it cannot be biased to make a particular action look good as it is chosen ex ante -- and transitive -- it contains the information sets of all consumers therein.

Our framework provides several key benefits in terms of transparency and rationality.
First, due to the unbiasedness and transitivity properties mentioned above, the only ratings that can possibly influence beliefs of a given consumer  are those included in her subhistory. This provides the maximal possible transparency, in the sense that the consumer is presented with full history of the mechanism relevant to this consumer.%
\footnote{This point, as well as the next point regarding taking data at face value, are made formal in \Cref{sec:transparency}.}
Second, the subhistory can be taken ``at face value", as if the chosen alternatives were fixed ahead of time.  This simplifies consumer's reasoning: in particular, she does not need to reason about the strategic rationale for the observed prior choices. While in prior work on incentived exploration the consumer had to interpret the recommendation in light of the recommendation policy, this type of strategic reasoning is longer needed in our model.
Third, a simplified model of rationality suffices: we only need to define how consumers react to full history, taking data ``at face value". We do \emph{not} need to specify how consumers reason about other consumers under incomplete data.
Fourth, we only need an agent to understand that she is given some subhistory which is unbiased and transitive. It does not matter to the agent what subset of arrivals is covered by this subhistory, and how it is related to the other agents' subsets.
Finally, the consumer model can be flexible. The consumers can deviate from exact utility-maximization and exhibit considerable behavioral biases. Their biases, beliefs and preferences can differ from one consumer to another, and need not be known to the platform.

We design several order-based disclosure policies in the context of this framework, of increasing complexity and improving performance guarantees.  Our policies intertwine subhistories in a certain way, provably providing consumers with enough information to converge on the optimal alternative.  Our best policy matches the best possible convergence rates, even absent incentive constraints. This policy also ensures that each consumer sees a substantial fraction of the history.

Our work suggests the importance of several design considerations.
First, independent focus groups provide natural exploration due to random fluctuations in observed rewards.  These natural experiments can then be provided to future consumers.
Second, improving beyond very suboptimal learning rates requires adaptive exploration which gradually zooms in on the better alternatives.
For example, if the focus groups learn the optimal alternative quickly, then this information should be propagated; otherwise additional exploration is required. This adaptivity can be achieved, even with subhistories chosen ex ante, by introducing group aggregators that see the subhistory of some, but not all, focus groups.  Third, optimal learning rates require reusing observations; otherwise too many consumers make choices with limited information.  The reused observations must be carefully interlaced to avoid contamination between experiments.





We start with a simple policy which runs a full-disclosure policy in parallel on several disjoint subsets of consumers (the focus groups mentioned above), collects all data from these runs, and discloses it to all remaining consumers.%
\footnote{The full-disclosure policy on a given subset $S$ of consumers reveals to each consumer in $S$ the history of ratings for all previous consumers in $S$.}
 We think of this policy  as having two levels: Level 1 contains the parallel runs, and Level 2 is everyone else (corresponding to exploration and exploitation, respectively). While this policy provably avoids herding on a suboptimal alternative, it over-explores bad alternatives and/or under-explores the good-but-suboptimal ones, which makes for very inefficient learning.

Our next step is a proof-of-concept implementation of adaptive exploration, achieving a proof-of-concept improvement over the previous construction.
We focus on the case of two alternatives, and upgrade the simple two-level policy with a middle level. Each consumer in this new level is a
group aggregator who receives the data collected by its respective group:  a subset of parallel runs from the first level.
These consumers explore only if the gap between the best and second-best alternative is sufficiently small, and exploit otherwise.  When the gap is small, the parallel runs do not have sufficient time to distinguish the two alternatives before herding on one of them.  However, a given alternative can, with some probability, empirically outperform the others within in a given group, inducing the group aggregator to explore it more. This provides the third-level consumers with enough samples to distinguish the two alternatives.



The main result essentially captures the full benefits of adaptive exploration.
We extend the three-level construction to multiple levels, connected in fairly intricate ways, using group aggregators and reusing observations as discussed above. For each piece of our construction, we prove that consumers' collective self-interested behavior guarantees a certain additional amount of exploration if, and only if, more exploration is needed at this point. The  guarantee substantially depends on the parameters of the problem instance (and on the level at which this piece resides), and critically relies on how the pieces are wired together.

Our framework is directly linked to multi-armed bandits, a popular abstraction for designing algorithms to balance exploration and exploitation. An order-based policy incentivizes consumers to implement some bandit algorithm, and consumers' welfare is precisely the total reward of this algorithm. The two-level policy implements a well-known bandit algorithm called explore-then-commit, which explores in a pre-defined way for a pre-set number of rounds, then picks one alternative for exploitation and stays with it for the remaining rounds. Our multi-level policy implements a ``batched" bandit algorithm which can change its exploration schedule only a small number of times, each change-point corresponding to a level in our construction.


We analyze our policies in terms of regret, a standard notion from the literature on multi-armed bandits, defined as the difference in total expected rewards between the best alternative and the algorithm.%
\footnote{Essentially, this is how much one regrets not knowing the best arm in advance.} We obtain regret rates that are sublinear in the time horizon, implying that the average expected reward converges to that of the best alternative. The multi-level policy matches the optimal regret rates for bandits,
for a constant number of alternatives.
The two-level policy matches the standard (and very suboptimal) regret rates of bandit algorithms such as explore-then-commit that do not use adaptive exploration. And the three-level policy admits an intermediate  guarantee.
Moreover,
regret bounds for the multi-level policy decrease drastically for easy instances of multi-armed bandits where the marginal benefit of the best option is high, a qualitative improvement compared to the two- and three-level policies.

Our performance guarantees are robust in that they hold in the worst case over a class of reward distributions, and do not rely on priors. Moreover, our constructions are robust to small amounts of misspecification. First, all parameters can be increased by at most a constant factor (and the two-level construction allows a much larger amount of tweaking). Second, we accommodate some information leakage, \eg rounds that are observable by other focus groups.

While this paper is mainly theoretical, we conduct a limited-scope numerical study to assess feasibility of our approach, focusing on the first level of our constructions (and with clear implications for the two-level construction).

\xhdr{Map of the paper.}
\Cref{sec:related-work} surveys related work. \Cref{sec:model-prelims} presents the model of incentivized exploration and our approach within this model (\ie order-based policies with frequentist agents). The next three sections present our results on order-based policies, progressing from two to three to multiple levels as discussed above. \Cref{sec:robust,sec:expts} are, resp., on robustness and the numerical study.
All proofs are deferred to the appendix, as well as additional discussions of some prior work (\Cref{sec:discussion}) and some realistic concerns beyond the scope of our model (\Cref{sec:discussion-model}).

\section{Related work}
\label{sec:related-work}
The problem of incentivized exploration, as studied in our paper, was introduced in \citet{Kremer-JPE14} and was motivated, as is our work, by recommendation systems.\footnote{In a simultaneous and independent work, \citet{Che-13} formalize a similar motivation using a very different model with continuous information flow and a continuum of agents.}
Similar to this literature, our work features short-lived agents whose actions are coordinated by a central principal with an eye towards maximizing long-run welfare.
However, the models in prior work required a strong assumption of Bayesian rationality,
even faced with complex, non-transparent policies. Under this assumption, a platform's policy can be reduced to a multi-armed bandit algorithm which recommends an action to each agent and satisfies Bayesian incentive-compatibility (\emph{BIC}).%
\footnote{We call this problem \emph{\OldProblem}. Various facets of this problem have been investigated:
optimal solutions for deterministic rewards
    \citep[and two arms:][]{Kremer-JPE14},
regret-minimization for stochastic rewards
    \citep[and many arms:][]{ICexploration-ec15,Selke-PoIE-ec21},
exploration-maximization for heterogenous agents
    \citep{ICexplorationGames-ec16,Jieming-multitypes18},
information leakages
    \citep[for deterministic rewards and two arms:][]{Bahar-ec16,Bahar-ec19},
large structured action sets and correlated priors
    \citep{IncentivizedRL,CombiIE-neurips22},
and time-discounted rewards
    \citep{Bimpikis-exploration-ms17}.
Surveys of this work can be found in \citet[Chapter 11]{slivkins-MABbook} and \cite{IncentivizedExploration-chapter}. A version of BIC incentivized exploration with monetary incentives but without information asymmetry  was studied in \citet{Frazier-ec14,Kempe-colt18}. }
Thus, the agents either need to trust the BIC property or to verify it. The former is arguably a lot to take on faith, and the latter typically requires a detailed knowledge of the algorithm and a sophisticated Bayesian reasoning. Moreover, agents may be irrationally averse to recommendations without any supporting information, or to the possibility of being singled out for exploration.

The novelty in our work is twofold: (i) the principal is restricted to disclosing past reviews (whereas prior work permitted an arbitrary message space, and w.l.o.g. focused on direct recommendations), and (ii) the agents are allowed a broad class of data-driven behaviors.
%
We design policies that incorporate these realistic assumptions without sacrificing performance, and illuminate novel insights that go beyond prior work. As discussed in the Introduction, we find that recommendation systems benefit from partitioning early agents into ``focus groups;'', providing subsequent agents with information gathered by ``independent'' focus groups, so as to avoid herding; carefully ``reuse'' this information to speed up the learning process. 

Our restriction to disclosure policies has precedence in the literature on strategic disclosure initiated by \cite{grossman1981informational,milgrom1981good,milgrom1986relying}. As in those models, our paper assumes only a selection of facts can be disclosed to the potential consumer, and those facts (other consumers' reviews in our case) can not be manipulated by the platform.  Those papers often exhibit unraveling due to the ability of the sender to select which facts to disclose based on the facts themselves.  In contrast, our policies commit to a selection of reviews up front, before they are revealed, and so avoid unraveling.

Prior work has also considered full disclosure, especially in the context of recommendation systems, in which every prior review is revealed to each consumer. A full-disclosure policy implements the ``greedy" bandit algorithm which only exploits, and suffers from herding on a suboptimal alternative
(see Section~\ref{sec:discussion}).
However,
under strong assumptions on
the primitives of the economic environment, including
the structure of rewards and diversity of agent types,
full disclosure avoids herding and performs well for heterogenous agents  \citep{kannan2018smoothed,bastani2017exploiting,externalities-colt18,AcemogluMMO19}.%
\footnote{Most of these results use a mathematically equivalent framing in terms of multi-armed bandits.}
Our work avoids herding by design, leveraging partial disclosure policies that guarantee a degree of ``independence'' between information flows. \cite{vellodi2018ratings} similarly observes that suppressing reviews can improve recommendation systems relative to full disclosure policies, albeit due to endogenous entry of firms.

Our behavioral assumptions have roots in prior work on non-Bayesian models of behavior.
In much of this literature, agents use (often naive) variants of statistical inference
which infer the world state from samples,
\eg ``case-based decision theory" of \citet{CaseBased-qje95}.
Our frequentist agents similarly rely on simple forms of data aggregation to form beliefs, albeit with good justification as the subhistories they observe are unbiased and transitive.%
\footnote{Our model of frequentist agents is technically a special case of that in \citet{CaseBased-qje95}.
We note that our model also admits Bayesian-rational agents with certain priors, as discussed in Section~\ref{sec:model-agents}.}
Such behavioral models are prominent in the literature on social learning, starting from \citet{DeGroot74}.
They are well-founded in experiments, as good predictors of human behavior in some social learning scenarios
\citep{chandrasekhar2020testing,dasaratha2019experiment}.
The behaviors we study are also reminiscent of the inference procedures studied in \cite{salant2020statistical} and the learning algorithms analyzed in \cite{cho2020machine,liang2019games}, albeit in different settings.

Our work can be interpreted as coordinating social learning (by designing a network on which the social learning happens). However, all prior work on social learning studies models that are very different from ours, including a variant of sequential social learning. We defer the detailed comparison to
Section~\ref{sec:discussion}. Interestingly, \cite{dasaratha2020learning} optimize the social network for the sequential variant mentioned above, under ``naive" agents' behavior, and observe that silo structures akin to our two-level policy improve learning rates.

Our perspective of multi-armed bandits is very standard in machine learning theory -- the primary community where bandit algorithms are designed and studied over the past 2-3 decades -- but perhaps less standard in economics and operations research. In particular, algorithms are designed for vanishing regret without time-discounting (rather than Bayesian-optimal time-discounted reward, a more standard economic perspective), and compared theoretically based on their asymptotic regret rates. A key distinction emphasized in the machine-learning literature as well as in our paper is whether the exploration schedule is fixed in advance or optimally adapted to past observations. The vast literature on regret-minimizing bandits is summarized in
\citep{Bubeck-survey12,slivkins-MABbook,LS19bandit-book}. The social-planner version of our model corresponds to \emph{stochastic bandits}, a standard, basic version with i.i.d. rewards and no auxiliary structure. ``Batched" version of stochastic bandits (which underpins our multi-level construction) was introduced in \citet{Perchet2015BatchedBP} and subsequently studied, \eg
in \citet{BatchedBantits-neurips19,BatchedBandits-aaai21,BatchedBandits-neurips20}.
Markovian, time-discounted bandit formulations \citep{Gittins-book11,Bergemann-survey06} and various other connections between bandits and mechanism design (surveyed, \eg in \citet[Chapter 11.7]{slivkins-MABbook}) are less relevant to our model.



\OMIT{ 
The performance metric in prior work, as in ours, can be cast as regret. \nicomment{not really sure about the econ ones...}
The problem of optimizing regret for \emph{\OldProblem} was largely resolved in \cite{Kremer-JPE14} and the subsequent work \cite{ICexploration-ec15,ICexplorationGames-ec16} under an assumption on the number of actions $K$. For $K=2$ actions \cite[\eg][]{Kremer-JPE14,Che-13,Bimpikis-exploration-ms17,Bahar-ec16}, or a constant number of actions $K$ \citep[\eg][]{ICexploration-ec15,ICexplorationGames-ec16}, these papers achieve the optimal regret rate (see Equation~\eqref{eq:model-OptRegret}) for bandit algorithms without incentives. \niedit{\footnote{In contrast to our model, much of this prior work restricts attention to deterministic rewards, e.g., \citep[\eg][]{Kremer-JPE14,Bahar-ec16}.}} \nicomment{does che-horner have stochastic rewards?} The results involves a multiplicative ``constant" that can get arbitrarily large depending on the Bayesian prior.
Our result achieves a similar regret rate and similarly depends on a parameter in our choice model. The regret bounds in the prior work, as well as ours, scale exponentially in $K$.\footnote{In comparison, bandit algorithms without incentives achieve regret rates that scale as $\sqrt{K}$.} Thus, up to this exponential dependence on $K$ also present in prior work, our policies exhibit optimal performance.\footnote{In work subsequent to ours, \citet{Selke-PoIE-ec21} achieve $\poly(K)$ regret scaling, albeit only when the Bayesian prior is independent across arms and only for Bayesian regret (\ie regret in expectation over the Bayesian prior).}
} 

\section{Our model and approach}
\label{sec:model-prelims}


We study the incentivized exploration problem, in which a platform (\emph{principal}) faces a sequence of $T$ myopic consumers (\emph{agents}). There is a set $\A$ of possible actions (\emph{arms}). At each round $t\in [T] := \{1,2 \LDOTS T\}$, a new agent $t$ arrives, receives a \emph{message} $m_t$ from the principal, chooses an arm $a_t\in \A$, and collects a binary reward $r_t\in \{0,1\}$.
The message $m_t$ could be arbitrary, \eg a recommended action or, in our case, a subset of past reviews. The principal chooses messages according to a decision rule called the \emph{disclosure policy}. The agents' response $a_t$ is defined as a function of round $t$ and message $m_t$. Both the disclosure policy and agent behavior are constrained to a subset of well-motivated choices, as per Sections~\ref{sec:model-policies} and~\ref{sec:model-agents}.
The reward from pulling an arm $a\in \A$ is drawn independently from Bernoulli distribution $\cD_a$ with an unknown mean reward $\mu_a$. A problem instance is defined by (known) parameters $|\A|$, $T$ and (unknown) mean rewards $\mu_a:\,a\in\A$.

The information structure is as follows. Each agent $t$ does not observe anything from the previous rounds, other than the message $m_t$. The chosen arm $a_t$ and reward $r_t$ are observed by the principal (which corresponds, \eg, to the consumer leaving a rating or review on the platform).

We assume that mean rewards are bounded away from $0$ and $1$, to ensure sufficient entropy in rewards. For concreteness, we posit $\mu_a\in [\tfrac13,\tfrac23]$.

While we focus on the paradigmatic case of Bernoulli rewards, we can handle arbitrary rewards $r_t \in [0,1]$ with only minor modifications to the analysis.\footnote{We can round each reward $r_t$ up or down using an independent Bernoulli draw with mean $r_t$, and then using these ``rounded rewards" instead of the true ones. This corresponds to binary ratings: thumbs up or thumbs down. Alternatively, we can assume that the reward distribution for each arm places at least a positive-constant probability on (say) subintervals $[0,\nicefrac14]$ and $[\nicefrac34,1]$.}
In essence, the range of rewards is small compared to the number of samples, like in all prior work on incentivized exploration. This is a very standard assumption throughout machine learning, and it is justified in small-stakes applications such as recommendation systems for movies, restaurants, etc.
We could also handle sub-Gaussian reward distributions with variance $\leq 1$ in a similar manner. For arbitrary reward distributions with support $[0,R]$, regret bounds scale linearly in $R$.


\subsection{Objective: regret}
\label{sec:prelims-MAB}

The principal's objective is to maximize agents' rewards. The social-planner version, when the principal can directly choose actions $a_t$ without any restrictions, is precisely the basic version of multi-armed bandits termed \emph{stochastic bandits}. We characterize the principal's performance using the notion of \emph{regret}, a standard objective in stochastic bandits. Formally, regret is defined as
\begin{align}\label{eq:intro-regret-defn}
\textstyle
  \reg(T)
  = T \max_{a\in \A} \mu_a -
  \sum_{t\in [T]} \E[\mu_{a_t}],
\end{align}
where the expectation is over the randomness in rewards and the messaging policy.  Thus, regret is the difference, in terms of the total expected reward, between the principal's policy and the first-best policy which knows the mean rewards a priori. Notably, the rewards in this objective are not discounted over time.%
\footnote{This is a predominant modeling choice in the literature on bandits over the past 20+ years, as well as in most prior work on incentivized exploration. A standard motivation is that the algorithm/mechanism sees many users in a relatively short time period.}

Following the bandit literature, we focus on the dependence on $T$, the number of agents (which is, effectively, the time horizon). Assuming regret is sublinear in $T$, the average expected reward converges to that of the best arm at rate $\reg(T)/T$. We are mainly interested in robust upper bounds on regret that hold \emph{in the worst case} over all (valid) mean rewards. This provides guarantees (even) for a principal that has no access to a prior or simply does not make use of one due to extreme risk aversion.
We are also interested in performance of a policy at a given round $t$, as measured by \emph{instantaneous regret}
    $\max_{a\in \A} \mu_a - \E[\mu_{a_t}]$,
also known as \emph{simple regret}.%
\footnote{Note that summing up the simple regret over all rounds $t\in[T]$ gives $\reg(T)$.}

\begin{remark}
Our regret bounds are stated in terms of expected regret, as per \refeq{eq:intro-regret-defn}. Our analysis also yields similar regret bounds with high probability, namely bounds on pseudo-regret
$T \max_{a\in \A} \mu_a -
  \sum_{t\in [T]} \mu_{a_t}$
that hold with probability at least $1-T^{-c}$. This can be achieved for any fixed absolute-constant $c>0$ via a minor modification in absolute constants in our analysis.
\end{remark}



Regret in our model can be directly compared to regret in the stochastic bandit problem with the same mean rewards.
Following the literature, we define the \emph{gap parameter} $\Delta$ as the difference between the largest and second largest mean rewards (informally, the difference in quality between the top two options). The gap parameter is not known (to the platform in incentivized exploration, or to the algorithm in bandits). Large $\Delta$, \ie the best option being far better than the second-best, naturally corresponds to ``easy" problem instances. The literature is mainly concerned with asymptotic upper bounds.%
\footnote{We use standard asymptotic notation to characterize regret rates: $O(f(T))$ and $\Omega(f(T))$ mean, resp., at most and at least $f(T)$, up to constant factors, for large enough $T$. Similarly, $\tildeO(f(T))$ notation suppresses $\polylog(T)$ factors.}
on regret in terms of the time horizon $T$, as well as parameters $\Delta$ and the number of arms $K$.
Throughout, we assume that the number of arms $K=|\A|$ is constant. However, we explicitly note the dependence on $K$ when appropriate, \eg we use $O_K(\cdot)$ notation to note that the ``constant" in $O()$ can depend on $K$ (and nothing else).

Optimal regret rates in stochastic bandits
\citep{bandits-ucb1,bandits-exp3,Lai-Robbins-85} are
\begin{align}\label{eq:model-OptRegret}
\reg(T) \leq O\rbr{\min\rbr{
    \sqrt{KT\log T},\; \tfrac{K}{\Delta} \log T
    }}.
\end{align}
This includes a \emph{worst-case} regret rate $O(\sqrt{KT\log T})$ which applies to all problem instances, and a \emph{gap-dependent} regret rate of
    $O(\tfrac{K}{\Delta} \log T)$.
We match both regret rates for a constant number of arms. Either regret rate can only be achieved via \emph{adaptive exploration}: \ie when the exploration schedule is adapted to the observations. It is particularly notable that we implement adaptive exploration in our environment, even though the platform must commit to the subhistories ex ante. This constraint requires a substantial number of agents to naturally serve an exploration role in some outcomes and an exploitation role in others.

A simple example of \emph{non}-adaptive exploration is the \emph{explore-then-commit} algorithm which samples arms uniformly at random for the first $N$ rounds, for some pre-set number $N$, then chooses one arm and sticks with it till the end. We implement this algorithm in our economic environment with our $2$-level policy outlined in Section~\ref{sec:warmup}. Such algorithms suffer from $\Omega(T^{2/3})$ regret, both in the worst case and for each problem instance.%
\footnote{More precisely, there is a tradeoff between the worst-case and per-instance performance: if the algorithm achieves regret $O(T^{\gamma})$ for all instances, for some $\gamma\in [\nicefrac23,1)$, then its regret for each instance can be no better than
    $\Omega\rbr{T^{2(1-\gamma)}}$.
The latter is $\Omega(T^{2/3})$ when $\gamma = \nicefrac23$. This result extends to a more general model of non-adaptive exploration, where each round either gives up on exploitation (namely: the chosen arm does not depend on the previous observations), or does not contribute to exploration (namely: its reward cannot be used in the future). This result is from \citet{MechMAB-ec09}, but the worst-case lower bound has been ``folklore knowledge" in the community.}


Regret in incentivized exploration is subject to two important limitations, even under the BIC model from prior work. \citet{ICexploration-ec15} matches \eqref{eq:model-OptRegret} for a constant number of arms $K$, but with (i) a multiplicative factor that can get arbitrarily large depending on the prior, and (ii) an exponential dependence on $K$. Both limitations are essentially inevitable \citep{Selke-PoIE-ec21}.%
\footnote{Subsequently to the conference version of our paper, \citet{Selke-PoIE-ec21} achieve $\poly(K)$ scaling in regret, albeit only when the Bayesian prior is independent across arms, only in expectation over the prior, and only under a substantial technical assumption on the family of feasible priors.}
Our result matches \eqref{eq:model-OptRegret} in a similar way (dependence on the prior is replaced with that on a parameter in the agents' choice model). We also note that much of the prior work on incentivized exploration targets $K=2$ actions \cite[\eg][]{Kremer-JPE14,Che-13,Bimpikis-exploration-ms17,Bahar-ec16}.

\subsection{Disclosure policies: order-based}
\label{sec:model-policies}
We focus on messaging policies of a particular form (and achieve optimal regret rate, in terms of $T$, despite these limitations).
First, we use \emph{disclosure policies}, where the message $m_t$ in each round $t$ discloses the \emph{subhistory} for some subset $S = S_t$ of past rounds. Formally, the subhistory is defined as
    $\SubH{S} = \cbr{ (s,a_s,r_s):\;s\in S }$,
where the tuple $(s,a_s,r_s)$ is the \emph{outcome} for a given agent $s\in S$.
The subhistory can correspond, for example, to a subset of past reviews.
Second, we assume that the subset $S_t$ is chosen ex ante, before round $1$, and therefore does not depend on the previous observations. Such a message is called \emph{unbiased subhistory}; it means the platform can not bias the set of reviews it shows a consumer, \eg by selecting only those in which a particular arm has positive ratings.
Third, we fix a partial order on the rounds, and define each $S_t$ as the set of all rounds that precede $t$ in the partial order. The resulting disclosure policy is called \emph{order-based}.


Order-based disclosure policies are \emph{transitive}, in the following sense:
\begin{align}\label{eq:transitivity}
 t\in S_{t'} \Rightarrow S_t\subset S_{t'}
    \qquad \text{for all rounds $t,t'\in [T]$}.
\end{align}
In words, if agent $t'$ observes the outcome for some previous agent $t$, then she observes the entire message revealed to that agent. In particular, agent $t'$ does not need to second-guess which message has caused agent $t$ to choose action $a_t$.

For convenience, we will represent an order-based policy as an undirected graph, where nodes correspond to rounds, and any two rounds $t<t'$ are connected if and only if $t\in S_{t'}$ and there is no intermediate round $t''$ with
    $t\in S_{t''}$ and $t''\in S_{t'}$.
This graph is henceforth called the \emph{information flow graph} of the policy, or \emph{info-graph} for short. (As an illustration, see Figure~\ref{fig:path} below.) We assume that this graph is common knowledge.

\begin{figure}[h]
\begin{center}
\includegraphics[height=.8cm]{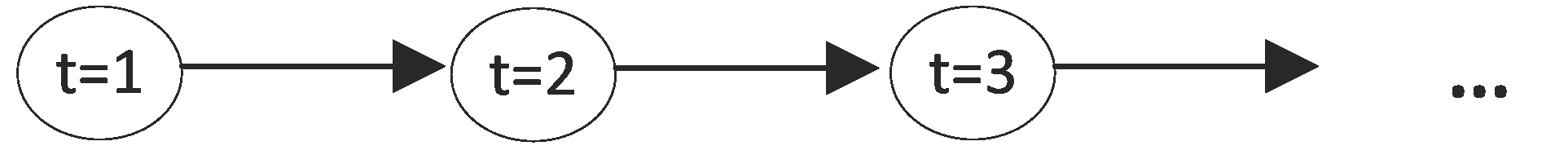}
\vspace{-5mm}
\end{center}
\caption{The information flow graph for a full disclosure policy.}
\label{fig:path}
\end{figure}

While detail-oriented agents may prefer to observe full data, our policies show all but a few past datapoints to all but a few agents,  and our main result shows a certain fraction of the full history to \emph{all} agents. Besides, even a small fraction of the full history would typically contain a large number of observations (preselected in an unbiased way), probably more than a typical agent ever needs.

\subsubsection{Discussion: Transparency}
\label{sec:transparency}

Consider the disclosure policy restricted to rounds in $S_t$, for some agent $t$, with the same subhistories as the original policy; call it the \emph{$t$-restricted policy}. This policy can be interpreted as a standalone order-based disclosure policy, since (by unbiasedness and transitivity) it cannot be affected by any agents that are not in $S_t$. In particular, all data points in the subhistories are endogenous to the restricted policy.  Moreover, this policy affects agent $t$ same way as the original mechanism. Therefore, an order-based disclosure policy can be equivalently reformulated so that each agent $t$ is subjected to the $t$-restricted policy, and sees the full history of this policy. To summarize: each agent sees the full history of a policy that she is subjected to.

Consequently, the agents are justified at taking subhistory can be taken ``at face value", as if the chosen arms recorded in the subhistory were fixed ahead of time. While this is a standard point, making it formal requires some care; we spell it out here for the sake of completeness. We endow agents with Bayesian beliefs over the mean rewards, and consider their posteriors.%
\footnote{We do not assume any specific mapping from agent posteriors to their decisions. For the rest of the paper, we only assume a frequentist model spelled out in the \Cref{sec:model-agents}, whether or not it is microfounded via Bayesian beliefs.}
Generally, the Bayesian posterior is formed given the new data and the process that generated this data. In our case, the  subhistory
    $\SubH{t}:= \SubH{S_t}$ is the new data for agent $t$,
and the data-generating process consists of the $t$-restricted disclosure policy, the random rewards, and the bandit algorithm that chooses arms. (The algorithm, in our case, consists of the agents in $S_t$ making their decisions.) Now, whatever that bandit algorithm is, it holds that the subhistory $S_t$ comprises the full history of the data-generating process. It follows that the Bayesian posterior does not depend on the bandit algorithm:

\begin{lemma}\label[lemma]{lem:face-value}
Suppose agent $t$ has a Bayesian belief over the mean rewards $(\mu_a:\,a\in\A)$
 (not necessarily the same belief as the other agents). Then the Bayesian posterior of agent $t$ given any fixed realization $H_t$ of subhistory $\SubH{t}$ is the same for any bandit algorithm such that $\Pr\sbr{\SubH{t} = H_t}>0$.
\end{lemma}

\begin{corollary}\label[corollary]{cor:face-value}
In particular, the posterior could have been generated by the ``canonical" algorithm for $H_t$ that deterministically chooses arms as recorded in $H_t$.
\end{corollary}

Thus, the formal meaning of ``taking data at face value" is expressed via
the ``canonical algorithm" from \Cref{cor:face-value}. This is a standard argument that underpins Bayesian bandit algorithms, see \citet[Ch. 3.1.2]{slivkins-MABbook} for a self-contained proof of \Cref{lem:face-value}.


\subsection{Agents' behavior: frequentist and flexible}
\label{sec:model-agents}

We assume agents behave as frequentists in response to order-based policies.
In light of the discussion above, how would a frequentist agent choose an action given the full history of observations? She would construct a confidence interval on the expected reward of each action, taking into account the average reward of this action and the number of observations, and place the action's estimate somewhere in this confidence interval. The system can provide summary statistics, so that agents would not even need to look at the raw data.

We formalize this behavior as follows. Each agent $t$ uses its observed subhistory $m_t$ to form a reward estimate $\hat{\mu}_{t,a} \in [0,1]$ for each arm $a\in \A$, and chooses an arm with a maximal estimate.%
\footnote{To simplify proofs, ties between the reward estimates are broken according to some fixed, deterministic ordering over the arms. This is to rule out adversarial manipulation of the tie breaking, and to ensure that all agents with the same data choose the same arm.}
A paradigmatic instantiation of the reward estimates is as follows:

\begin{example}\label{ex:paradigmatic}
Reward estimate $\hat{\mu}_{t,a}$ is the sample average for arm $a$ over the subhistory $m_t$, as long as it includes at least one sample for $a$; else, $\hat{\mu}_{t,a}=\tfrac12$.
\end{example}

We consider a much more permissive model, where agents can form arbitrary reward estimates as long as they lie within some ``confidence range" of the sample average. Moreover, we allow agents to have strong initial beliefs, whose effect is eventually drowned out. Formally, we make the following assumptions.


\begin{assumption}\label{ass:embehave}
Let $N_{t,a}$ and $\bar{\mu}_{t,a}$ denote the number of pulls and the empirical mean reward of arm $a$ in subhistory $m_t$. Then for some absolute constant $\estN\in \N$ and $\estC=\tfrac{1}{16}$, and for all agents $t\in [T]$ and arms $a\in\A$ it holds that
\begin{align}
\text{if}\; N_{t,a} \geq \estN
    &\quad\text{then}\quad
    \left|\hat{\mu}^t_a - \bar{\mu}^t_a \right| <
                \frac{\estC}{\sqrt{N_{t,a}}}
\label{eq:ass:embehave}\\
\text{if}\; N_{t,a} =0
    &\quad\text{then}\quad
    \hat{\mu}^t_a\geq\nicefrac13.
    \label{eq:ass:embehave-init}
\end{align}
\end{assumption}

Thus, strong(er) initial beliefs correspond to large(r) $\estN$. Note that we make no assumptions if $1\leq N_{t,a}<\estN$. The $\nicefrac13$ threshold in \refeq{eq:ass:embehave-init} can be replaced with an arbitrary strictly positive constant, with very minor changes.


\OMIT { 

We make a technical assumption: reward estimates do not depend on the ordering of the data.

\begin{assumption}\label{ass:simplifying}
In each round $t$, the estimates $(\hat{\mu}_{t,a}:\, a\in\A)$ depend only on the multiset
    $m'_t = \left\{\; (a_s,r_s):\;s\in S_t \;\right\}$,
called \emph{anonymized subhistory}.\ascomment{I forgot why we needed it!} Each agent $t$ forms its estimates according to some function $f_t$ from anonymized subhistories to $[0,1]^{|\A|}$, so that
        $(\hat{\mu}_{t,a}:\, a\in\A) = f_t(m'_t)$.
\ascomment{This is just a defn, no? So, remove from the assn?}
\end{assumption}

This function $f_t$ constitutes agent $t$'s \emph{behavioral type}. We allow multiple behavioral types, but posit that they drawn from a fixed distribution. \ascomment{Can't we just consider fixed behavioral types?? (so, remove the next assn.}

\begin{assumption}\label{ass:types}
The behavioral type of each agent $t$ is drawn independently from some fixed (but otherwise arbitrary) distribution.
\end{assumption}

} 

Several discussion points are in order:

\begin{description}

\item[(Flexibility)] Our model allows for significant flexibility in agent behavior. An optimistic (resp., pessimistic) agent may choose a reward estimate as a value towards the top (resp., bottom) of its confidence interval. Moreover, an agent can randomize its choices, by randomizing its reward estimates within their confidence intervals. This flexibility can be arm-specific as well. First, the way $\hat{\mu}_{t,a}$ depends on the data for arm $a$ can vary from one arm to another. For instance, an agent interested in restaurants can be optimistic about Chinese restaurants and pessimistic about Italian ones. Second, the reward estimates can depend on the data for other arms: \eg an agent is more optimistic about Chinese restaurants if the Italian ones are good. Third, the estimates can be arbitrarily correlated across arms: \eg it is sunny today, and an agent feels optimistic about \emph{all} restaurants.

\item[(Bayesian interpretation)]
Our model also encompasses Bayesian agents with appropriate beliefs. Specifically, suppose agents believe that each $\mu_a$ is drawn independently from some Beta-Bernoulli distribution $\mathcal{P}_a$. Then the reward estimate $\hat{\mu}_{t,a}$ is the posterior mean reward given the subhistory $m_t$. This is consistent with Assumption~\ref{ass:embehave} for a large enough prior-dependent constant $\estN$.%
\footnote{Essentially, this is because for Beta-Bernoulli priors the absolute difference between the posterior mean and the empirical mean scales as $1/\text{\#samples}$.}
Beta-Bernoulli beliefs are \emph{well-specified} in that their support necessarily contains the true model. While such beliefs are \emph{inconsistent} with the restriction that $\mu_a \in [\nicefrac{1}{3}, \nicefrac{2}{3}]$, one could argue that Bayesian agents might be unaware of this restriction. Also, our disclosure policies guarantee that the posterior beliefs are ``asymptotically consistent" with $\mu_a \in [\nicefrac{1}{3}, \nicefrac{2}{3}]$, in the sense that they get arbitrarily close to $[\nicefrac{1}{3}, \nicefrac{2}{3}]$ over time.%
\footnote{The formal statement is as follows: for some $\alpha,\beta\in(0,1)$, each agent $t>T^{\alpha}$ forms a posterior belief such that
    $\Pr\sbr{\mu_a\not\in [\nicefrac{1}{3}, \nicefrac{2}{3}]}< O(T^{-\beta})$
for each arm $a$. This is because our disclosure policies guarantee that such agents $t$ observe sufficiently many samples of each arm.}

\item[(A consistency issue)]
The fact that $\hat{\mu}_{t,a}$ falls below $\nicefrac13$ after a long sequence of low rewards appears inconsistent with the restriction that $\mu_a \in [\nicefrac{1}{3}, \nicefrac{2}{3}]$. However, one could argue that agents are unaware of this restriction because they have incomplete information and/or are unsophisticated. Alternatively, all reward estimates can be projected into the $[\nicefrac{1}{3}, \nicefrac{2}{3}]$ interval,%
\footnote{That is, truncate the reward estimate at $\nicefrac13$ if it is too low, and at $\nicefrac23$ if it is too high.} assuming random tie-breaking when multiple arms achieve the highest reward estimate. This variant works with minimal changes.

\end{description}


\section{A simple two-level policy}
\label{sec:warmup}

We first design a simple policy that exhibits asymptotic learning (i.e., sublinear regret).  While not achieving an optimal regret rate, this policy illuminates a key feature: initial agents are partitioned into \emph{focus groups}. Each agent sees the history for all previous agents in the same focus group (and nothing else).
The information generated by these focus groups is then presented to later agents. We think of this policy as having two \emph{levels}: the exploration level containing the focus groups, followed by the exploitation level. All agents in the latter observe full history.

We first describe the structure of a single focus group. Consider a disclosure policy that reveals the full history in each round $t$, \ie $S_t = [t-1]$; we call it the \emph{full-disclosure policy}. The info-graph for this policy is a simple path. Intuitively, all agents in a the path of this full-disclosure policy are in a single focus group.

\begin{definition}
A subset of rounds $S\subset [T]$ is called a \emph{full-disclosure path} in the  info-graph $G$ if the induced subgraph $G_S$ is a simple path, and it connects to the rest of the graph only through the terminal node $\max(S)$, if at all.
\end{definition}

\noindent Full-disclosure paths are useful primitives for exploration as they guarantee that each arm is sampled with a positive-constant probability. This happens due to stochastic variation in outcomes; some agents in a focus group will get uncharacteristically bad rewards from an arm, inducing others to pull a different arm. We prove that path length at least $\fdpL$ suffices to guarantee this, for some parameter $\fdpL$ that depends only on $K$, the number of arms (see Lemma~\ref{lem:greedy}). We build on this fact throughout the paper.


Our \emph{\2LEVEL} policy builds upon this primitive, allowing future agents to exploit the information that early agents explore, thereby closely following the well-studied ``explore-then-commit'' paradigm from multi-armed bandits. For a parameter $T_1$ fixed later, the first $N = T_1\cdot \fdpL$ agents comprise the ``exploration level." These agents are partitioned into $T_1$ full-disclosure paths of length $\fdpL$ each, where $\fdpL$ depends only on the number of arms $K$. In the ``exploitation level", each agent $t>N$ receives
the full history, \ie $S_t = [t-1]$.
\footnote{For the regret bounds, it suffices if each agent in the exploitation level only observes the history from the exploration level, or any superset thereof.}
The info-graph for this disclosure policy is shown in Figure~\ref{fig:2level}.

\begin{figure}[t]
\centering
\begin{tikzpicture}
 \filldraw[fill=blue!20!white]
 (0,2)--(10,2)--(10,3)--(0,3)--cycle;
  \filldraw[fill=red!20!white]
  (0,0)--(1,0)--(1,1)--(0,1)--cycle;
  \draw (0.5,1)--(5,2);
  \filldraw[fill=red!20!white]
  (1,0)--(2,0)--(2,1)--(1,1)--cycle;
  \draw (1.5,1)--(5,2);
  \filldraw[fill=red!20!white]
  (2,0)--(3,0)--(3,1)--(2,1)--cycle;
  \draw(2.5,1)--(5,2);
  \filldraw[fill=red!20!white]
  (3,0)--(4,0)--(4,1)--(3,1)--cycle;
  \draw(3.5,1)--(5,2);
  \filldraw[fill=red!20!white]
  (9,0)--(10,0)--(10,1)--(9,1)--cycle;
  \draw(9.5,1)--(5,2);
  \node at(5,0.5){$\cdots$};
  \node at(6,0.5){$\cdots$};
  \node at(7,0.5){$\cdots$};
  \node at(8,0.5){$\cdots$};
  \node at(5,2.5){all remaining rounds};
  \node at(0.5,0.5){$\fdpL$};
  \node at(1.5,0.5){$\fdpL$};
  \node at(2.5,0.5){$\fdpL$};
  \node at(3.5,0.5){$\fdpL$};
  \node at(9.5,0.5){$\fdpL$};
  \node at(-1,0.5){\textbf{Level 1}};
  \node at(-1,2.5){\textbf{Level 2}};
  \draw[->] (11,0)--(11,3);
  \node at(11.5,1.5)[ rotate=90]{Time};

  \draw [decorate,decoration={brace,amplitude=10pt},xshift=0pt,yshift=0pt] (10,-0.2) -- (0,-0.2) node [black,midway,yshift=-0.6cm]
  {$T_1$ full-disclosure paths of length $\fdpL$ each};
\end{tikzpicture}

\caption{Info-graph for the 2-level policy. }
\label{fig:2level}
\end{figure}

We show that this policy incentivizes the agents to perform non-adaptive exploration, and achieves a regret rate of  $\tilde O_K(T^{2/3})$. The key idea is that since one full-disclosure path collects one sample of a given arm with (at least) a positive-constant probability, using many full-disclosure paths ``in parallel" ensures that sufficiently many samples of this arm are collected with very high probability. The proof of the following theorem can be found in \Cref{sec:pfs-2level}.

\begin{theorem}\label{thm:2level}
The \2LEVEL with parameter $T_1 = T^{2/3}\,(\log T)^{1/3}$ achieves regret
\[ \reg(T) \leq O_K\left( T^{2/3}\, (\log T)^{1/3} \right).\]
\end{theorem}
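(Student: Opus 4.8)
The plan is to decompose regret into contributions from the two levels and bound each separately. Level~1 consists of $T_1$ full-disclosure paths, each of length $\fdpL = O_K(1)$, so it involves $N = T_1 \cdot \fdpL = O_K(T_1)$ agents; since each agent incurs at most $\Delta \le 1$ regret (in fact at most $\max_a\mu_a - \min_a \mu_a \le \tfrac13$), the total Level~1 regret is $O_K(T_1) = O_K(T^{2/3}(\log T)^{1/3})$, which is within the claimed bound. The work is therefore all in Level~2: showing that once the $N$ agents of Level~1 have acted, the collected subhistory $\SubH{[N]}$ is, with high probability, informative enough that \emph{every} Level~2 agent (there are at most $T$ of them) pulls the best arm, except on a low-probability failure event whose contribution to regret is $O(1)$ per agent times a $\tfrac1T$-ish probability, i.e.\ $\tilde O(1)$ overall, or more crudely $O(\sqrt{T\log T})$.

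The key steps, in order: (1) Apply Lemma~\ref{lem:greedy} to each of the $T_1$ full-disclosure paths. Each path independently yields at least one sample of each arm with probability $\ge \fdpP = \Omega_K(1)$. These events are independent across paths because the paths are vertex-disjoint and connect to the rest of the graph only through their terminal nodes (so their internal randomness — the relevant entries of the reward tape — are disjoint). (2) By a Chernoff bound, with probability $\ge 1 - K\exp(-\Omega_K(T_1))$, at least $\tfrac12 \fdpP T_1 =: n_0$ of the paths produce a sample of any given arm $a$; union-bounding over the $K$ arms, with probability $\ge 1 - K\exp(-\Omega_K(T_1))$ \emph{every} arm is sampled at least $n_0$ times within $\SubH{[N]}$. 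Note $n_0 = \Omega_K(T^{2/3}(\log T)^{1/3})$. (3) Condition on this event. Now every Level~2 agent sees the \emph{same} subhistory $\SubH{[N]}$, in which every arm $a$ has $N_{t,a} \ge n_0 \ge \estN$ samples (for $T$ large). By Assumption~\ref{ass:embehave}, each agent's estimate satisfies $|\hat\mu^t_a - \bar\mu_a| < \estC/\sqrt{N_{t,a}} \le \estC/\sqrt{n_0}$, where $\bar\mu_a$ is the empirical mean of arm $a$ over $\SubH{[N]}$. (4) Apply a Chernoff bound to the empirical means: with probability $\ge 1 - 2K\exp(-2 n_0 \eps^2)$, we have $|\bar\mu_a - \mu_a| < \eps$ for all arms simultaneously, with $\eps := \estC/\sqrt{n_0}$ (or any comparable choice). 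Combining (3) and (4), on this event every Level~2 agent has $|\hat\mu^t_a - \mu_a| < 2\eps = O(1/\sqrt{n_0})$ for all $a$. (5) Conclude: if the gap $\Delta$ between the best and second-best arm exceeds $4\eps$, every Level~2 agent strictly prefers the best arm and incurs zero regret; if $\Delta \le 4\eps = O(1/\sqrt{n_0}) = O((\log T)^{-1/6} T^{-1/3})$, then even pulling the worst arm costs at most $\Delta \cdot T = O(T^{2/3}(\log T)^{-1/6})$ over all Level~2 rounds, which is within the bound. Either way Level~2 contributes $O_K(T^{2/3}(\log T)^{1/3})$ to regret, once we also add the failure-event contribution: the failure probability from steps (2) and (4) is $\exp(-\Omega_K(n_0)) = \exp(-\Omega_K(T^{2/3}(\log T)^{1/3}))$, so even multiplied by $T$ it contributes $o(1)$.

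The main obstacle — really the only subtle point — is the independence claim in step~(1)–(2): one must check that the events ``path $i$ samples every arm'' are mutually independent (or at least negatively associated), which hinges on the full-disclosure paths being vertex-disjoint and isolated except at their terminal nodes, so that the portions of the reward tape they consume are disjoint. Given the reward-tape formalism set up in the preliminaries, this is clean: path $i$'s behavior is a deterministic function (given the drawn estimate functions $f_t$) of a set of reward-tape cells disjoint from those of path $j$, and the $f_t$'s are drawn independently. A secondary point worth stating carefully is the case analysis on $\Delta$ in step~(5): we do not get a per-instance $\tilde O(1/\Delta)$ bound here (that is the job of the later multi-level construction), only the worst-case $\tilde O(T^{2/3})$, so it suffices to observe that $\Delta$ small makes the per-round loss small while $\Delta$ large makes the number of lossy rounds zero, and the crossover is exactly calibrated by the choice $T_1 = T^{2/3}(\log T)^{1/3}$ so that $n_0 \asymp T^{2/3}(\log T)^{1/3}$ and $1/\sqrt{n_0}\cdot T \asymp T^{2/3}(\log T)^{-1/6}$.
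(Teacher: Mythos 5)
Your overall plan is the same as the paper's: write off Level~1 (at most $T_1\cdot\fdpL$ agents), show that with high probability the shared Level-1 subhistory contains $\Omega_K(T_1)$ samples of every arm (the paper does this via Lemma~\ref{lem:t1runs}, you do it via Lemma~\ref{lem:greedy} plus a Chernoff bound over the independent paths, which is essentially the same argument), then use Assumption~\ref{ass:embehave} plus concentration of empirical means to conclude every Level-2 agent picks a near-optimal arm. However, step~(4) contains a genuine error: with $\eps := \estC/\sqrt{n_0}$, the Chernoff exponent is $2 n_0 \eps^2 = 2\estC^2$, a \emph{constant} (indeed $2/256$ for $\estC = 1/16$), so the bound $1 - 2K\exp(-2 n_0\eps^2)$ is vacuous and your claimed failure probability $\exp(-\Omega_K(n_0))$ is false. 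To get a high-probability event you must take $\eps \asymp \sqrt{\log T / n_0}$ (this is exactly the paper's $\sqrt{2\log(T)/\tau}$ deviation in its event $W_2$). The fix is harmless for the theorem: in your small-gap case the bound becomes $T\Delta = O(T^{2/3}(\log T)^{1/3})$ instead of your (spurious) $O(T^{2/3}(\log T)^{-1/6})$, which is still the stated rate, and the calibration of $T_1$ is precisely what makes this work out.

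A secondary point you should handle explicitly: the number of samples $N_a$ of arm $a$ collected in Level~1 is itself reward-dependent (the agents' choices inside each path depend on realized rewards), so you cannot apply Chernoff once ``at sample size $n_0$'' to the empirical mean $\bar\mu_a$. The clean way --- and what the paper does --- is to use the reward tape and union-bound over all prefix lengths $\tau\in[T]$, asserting $|\bar\mu_{a,\tau} - \mu_a| \le \sqrt{2\log(T)/\tau}$ simultaneously for every prefix; combined with the event $N_a \ge n_0$ this yields the needed bound at whatever (random) count actually occurred. This union bound is also where the $\log T$ factor in $\eps$ comes from, so it resolves both issues at once. You already invoke the reward-tape formalism for independence across paths in steps~(1)--(2); you need it again here.
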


\begin{remark}
All agents $t>T^{2/3}\,(\log T)^{1/3}$ (\ie all but the vanishingly small fraction of agents who are in the exploration level) observe full history, and pull an arm with instantaneous regret at most $\tildeO\rbr{T^{-1/3}}$.
\end{remark}

\begin{remark}\label{rem:2level-tweaks}
Each full-disclosure path can be made arbitrarily longer, and more full-disclosure paths can be added (of arbitrarily length), as long as the total number of Level-$1$ agents increases by at most a constant factor. Same regret bounds are attained with minimal changes in the analysis.
\end{remark}

\begin{remark}
For a constant $K$, the number of arms, we match the optimal regret rate for non-adaptive multi-armed bandit algorithms. If the gap parameter $\Delta$ is known to the principal, then (for an appropriate tuning of parameter $T_1$) we can achieve regret
  $\reg(T) \leq O_K(\log(T) \cdot \Delta^{-2})$.
\end{remark}

One important quantity is the expected number of samples of a given arm $a$ collected by a full-disclosure path $S$ of length $\fdpL$, \ie the number of times this arm appears in the subhistory $\SubH{S}$. Indeed, this number, denoted $\fdpN$, is the same for all such paths. We use this quantity, here and in the subsequent sections, through a concentration inequality which aggregates the effect of having multiple full-disclosure paths (see Lemma~\ref{lem:t1runs}).

\subsection{Counterexamples}

Although simple, the two-level policy does exhibit some subtleties. First, it is important that the focus groups are independent. For example, a few initial agents observable by everyone may induce herding on a suboptimal arm. This might happen if, for example, the initial agents are celebrities, and their experiences leak to future agents outside the platform. Essentially, these ``celebrities" herd on a suboptimal arm with constant probability (because they observe each other), and this herding persists afterwards (because everyone else observes the ``celebrities"). We flesh out this point in the following example, analyzed in the online appendix.

\begin{example}\label{ex:robust-global}
Posit $K=2$ arms  such that
    $\nicefrac{3}{4} \geq \mu_1>\mu_2>\nicefrac{1}{4}$.
Suppose Assumption~\ref{ass:embehave} holds with $\estN=2$ so that each  arm is chosen in the first two rounds, and subsequently the mean reward of each arm $a$ is estimated by the sample average (\ie
    $\hat{\mu}^t_a := \bar{\mu}^t_a $
for all rounds $t>2$). If each of the first $R$ rounds are observable by all subsequent agents,  for a large enough $R = \Omega(\sqrt{\log(T)})$, then with (at least) a positive-constant probability it holds that all agents $t>R$ choose arm $2$.
\end{example}

Second, it is important that each focus group has a linear information flow. For example, the first few agents acting in isolation may force high-probability herding within the focus group, preventing the natural exploration that we rely on. This may happen, for example, if their reviews are submitted and/or processed with a substantial delay. Suppose these initial agents are pessimistic about arm $1$, so that each one in isolation pulls arm $2$. This builds certainty about the mean reward of arm $2$ which, for an appropriate setting of parameters, may exceed the initial reward estimate for arm $1$.  Then later agents viewing all this information will, with high probability, fail to pull arm $1$ (even though arm $1$ may be optimal).

\begin{example}\label{ex:robust-local}
Suppose there are only two arms, all agents initially prefer arm $2$, and have the same initial reward estimate $\hat{\mu}_1$ for arm $1$. Consider a full-disclosure path $P$ starting at round $t_0$. Suppose agent $t_0$ observes $R$ ``leaf agents" (each of which does not observe anybody else). Then, for any absolute constant $\mu_2 >\hat{\mu}_1$ and a sufficiently large $R = \Omega(\sqrt{\log(T)})$, each agent in $P$ will not try arm $1$ with probability, say, at least $1-O(T^{-2})$.
\end{example}

Third, it is important that there are enough focus groups and agents therein, but not too many. Indeed, we need enough agents in each focus group to overpower the initial biases (as expressed by reward estimators with $<\estN$ samples). Having enough focus groups ensures that the natural exploration succeeds. However, agents in the focus groups would have limited information and may make suboptimal choices, so having too many of them would induce high regret.



\section{Adaptive exploration with a three-level policy}
\label{sec:3level}

The two-level policy from the previous section implements the explore-then-commit paradigm using a basic design with focus groups. The next challenge is to implement \emph{adaptive exploration}, and go below the $T^{2/3}$ barrier. Standard multi-armed bandit algorithms achieve this by pulling sub-optimal arm on occasion, when and if the available information requires it. However, we can not adaptively add focus groups since we must fix our policy ahead of time.

Instead, we accomplish adaptive exploration using a construction that adds a middle level to the info-graph. Agents in this middle level are partitioned into subgroups, each responsible for aggregating information from a subset of focus groups; we call these agents \emph{group aggregators}. For simplicity, we assume $K=2$ arms. When one arm is much better than the other,
group aggregators have enough  information to discern it and \emph{exploit}.  However, when the two arms are close, group aggregators will be induced to pull different arms (depending on the outcomes in their particular focus groups), which induces additional exploration. This construction also provides intuition for the main result, the multi-level construction presented in the next section.



\begin{construction}
The \emph{three-level policy} is an order-based disclosure policy defined as follows. The info-graph consists of three levels: the first two correspond to \emph{exploration}, and the third implements \emph{exploitation}. Like in the two-level policy, the first level consists of multiple full-disclosure paths of length $\fdpL$ each, and each agent $t$ in the exploitation level sees full history (see Figure~\ref{fig:3level}).
\footnote{It suffices for the regret bounds if each agent in the exploitation level only observes the history from exploration (\ie from all agents in the first two levels), or any superset thereof.}

The middle level consists of $\NG$ disjoint subsets of $T_2$ agents each, called \emph{second-level groups}. All nodes in a given second-level group $G$ are connected to the same nodes outside of $G$, but not to one another.



The full-disclosure paths in the first level are also split into $\NG$ disjoint subsets, called \emph{first-level groups}. Each first-level group consists of $T_1$ full-disclosure paths, for the total of $T_1\cdot \NG\cdot \fdpL$ rounds in the first layer. There is a 1-1 correspondence between first-level groups $G$ and second-level groups $G'$, whereby each agent in $G'$ observes the full history from the corresponding group $G$. More formally, agent in $G'$ is connected to the last node of each full-disclosure path in $G$. In other words, this agent receives message
    $\SubH{S}$,
where $S$ is the set of all rounds in $G$.
\end{construction}

\begin{figure}[t]
\centering
\begin{tikzpicture}
 \filldraw[fill=green!20!white]
 (0,4)--(10,4)--(10,5)--(0,5)--cycle;
 \foreach \x in {0,3,8}
 {
 \filldraw[fill=blue!20!white]
 (\x+0,2)--(\x+2,2)--(\x+2,3)--(\x+0,3)--cycle;
 \draw (\x+1,3)--(5,4);
 \filldraw[fill=red!20!white]
 (\x+0,0)--(\x+2,0)--(\x+2,1)--(\x+0,1)--cycle;
 \draw (\x+1,1)--(\x+1,2);
 \draw(\x+0.33,1)--(\x+1,2);
 \draw(\x+1.66,1)--(\x+1,2);
 \draw(\x+0.66,0)--(\x+0.66,1);
 \draw(\x+1.33,0)--(\x+1.33,1);
 \node at(\x+1,2.5){$T_2$ rounds};
 \node at(\x+0.33, 0.5){$\GdT$};
 \node at(\x+1.0, 0.5){$\cdots$};
 \node at(\x+1.66, 0.5){$\GdT$};
 \draw [decorate,decoration={brace,amplitude=10pt},xshift=0pt,yshift=0pt] (\x+2,-0.2) -- (\x+0,-0.2) node [black,midway,yshift=-0.6cm] {$T_1$ paths};
 }
  \node at(5,4.5){all remaining rounds};
  \node at (6,0.5){$\cdots$};
  \node at (7,0.5){$\cdots$};
  \node at (6,2.5){$\cdots$};
  \node at (7,2.5){$\cdots$};
  \node at(-1,0.5){\textbf{Level 1}};
  \node at(-1,2.5){\textbf{Level 2}};
  \node at(-1,4.5){\textbf{Level 3}};
  \draw[->] (11,0)--(11,5);
  \node at(11.5,2.5)[ rotate=90]{Time};

  \draw [decorate,decoration={brace,amplitude=10pt,aspect=0.33},xshift=0pt,yshift=0pt] (10,1.8) -- (0,1.8) node [black,pos=0.33,xshift = 0cm,yshift=-0.6cm] {$\NG$ groups};

\end{tikzpicture}
\caption{Info-graph for the three-level policy. Each red box in level 1 corresponds to $T_1$ full-disclosure paths of length $\GdT$ each.}
\label{fig:3level}
\end{figure}

In more detail, the key idea is as follows. Consider the gap parameter $\Delta = |\mu_1-\mu_2|$. If it is is large, then each first-level group produces enough data to determine the best arm with high confidence, and so each agent in the upper levels chooses the best arm. If $\Delta$ is small, then due to \emph{anti-concentration} each arm gets ``lucky" within  at least once first-level group, in the sense that it appears much better than the other arm based on the data collected in this group. Then this arm gets explored by the corresponding second-level group.
To summarize, the middle level exploits if the gap parameter is large, and provides some more exploration if it is small.

\begin{theorem}
\label{thm:3level}
For two arms, the three-level policy
achieves regret
\[ \reg(T) \leq O\left( T^{4/7}\, \log T \right).\]
This holds for parameters
    $T_1 = T^{4/7}\log^{-1/7}(T)$,
    $\NG = 2^{10}\log(T)$, and
    $T_2 = T^{6/7}\log^{-5/7}(T)$.
\end{theorem}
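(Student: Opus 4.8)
The plan is to bound the regret level by level, using the gap $\Delta=|\mu_1-\mu_2|$ as the organizing parameter; assume w.l.o.g.\ that arm~$1$ is optimal. Write $n_1:=\fdpN\cdot\NG\cdot T_1=\tilde\Theta(T^{4/7})$ for the expected total number of samples of each arm produced by the first level, and recall from Lemma~\ref{lem:t1runs} that, with probability $1-T^{-\Omega(1)}$ (after a union bound over the $\NG$ groups and both arms), the realized sample counts in any single first-level group, and in the whole first level, are within a constant factor of $\fdpN T_1$ and of $n_1$ respectively. Two thresholds matter: $\Delta_{\mathrm{big}}:=c\sqrt{\log T/T_1}$ and $\Delta_{\mathrm{tiny}}:=c'\sqrt{\log T/T_2}$ for suitable constants $c,c'$, together with the intermediate scale $1/\sqrt{T_1}$. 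The key structural point is that whenever $\Delta$ is small, Levels~$1$ and~$2$ are automatically cheap: each round there has per-round regret at most $\Delta$, so Levels~$1$--$2$ contribute at most $\Delta\cdot(\fdpL\NG T_1+\NG T_2)$, which is $O(T^{4/7}\log T)$ as soon as $\Delta\le\Delta_{\mathrm{big}}$. Thus the whole difficulty is to show that agents learn the optimal arm except when $\Delta$ is so tiny that $\Delta\cdot T$ is already negligible.

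\emph{Case $\Delta\ge\Delta_{\mathrm{big}}$.} Here I would show that every Level-$2$ and Level-$3$ agent pulls arm~$1$. By a Chernoff bound and a union bound over the $\le 2\NG$ relevant events, in every first-level group $G$ both empirical means $\bar\mu_{G,a}$ lie within $O(\sqrt{\log T/T_1})$ of $\mu_a$ (using $N_{G,a}=\Theta(T_1)$ from Lemma~\ref{lem:t1runs}). With $c$ large enough, the margin $\bar\mu_{G,1}-\bar\mu_{G,2}\ge\Delta/2$ exceeds the confidence slack $\estC/\sqrt{N_{G,1}}+\estC/\sqrt{N_{G,2}}=O(1/\sqrt{T_1})$, so by Assumption~\ref{ass:embehave} any agent seeing $\SubH{G}$ forms $\hat\mu_1>\hat\mu_2$ and pulls arm~$1$; since every Level-$2$ agent sees exactly such a subhistory (property~\eqref{eq:group-defn}) and every Level-$3$ agent sees all of Level~$1$, the claim follows. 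The only regret is then $\le\Delta\cdot\fdpL\NG T_1\le\fdpL\NG T_1=O(T^{4/7}\log T)$, plus the $O(1)$ failure term.

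\emph{Case $\Delta<\Delta_{\mathrm{big}}$.} Levels~$1$--$2$ now cost at most $\Delta_{\mathrm{big}}\cdot(\fdpL\NG T_1+\NG T_2)=O(T^{4/7}\log T)$ for free, so it remains to control Level~$3$, which pools \emph{all} the data. If $\Delta<\Delta_{\mathrm{tiny}}$, Level~$3$ trivially costs $\le\Delta T<\Delta_{\mathrm{tiny}}T=O(T^{4/7}\log T)$, so assume $\Delta\ge\Delta_{\mathrm{tiny}}$ and split further. (i) If $\Delta\ge C_1/\sqrt{T_1}$ for a suitable constant $C_1$, then Level~$3$ already sees $n_1/2=\Omega(\NG T_1)$ samples of each arm from Level~$1$; because $\NG=2^{10}\log T$ is a sufficiently large multiple of $\log T$, a Chernoff bound gives empirical-mean deviation $O(\sqrt{\log T/(\NG T_1)})\le\Delta/4$, which (adding the negligible confidence slack) forces every Level-$3$ agent to pull arm~$1$. (ii) If $\Delta_{\mathrm{tiny}}\le\Delta<C_1/\sqrt{T_1}$, I use the ``lucky group'' argument below to guarantee that Level~$2$ in fact collects $\ge T_2$ samples of \emph{each} arm; then Level~$3$ sees $\ge T_2$ samples of each arm, has empirical deviation $O(\sqrt{\log T/T_2})\le\Delta/4$ (since $\Delta\ge\Delta_{\mathrm{tiny}}=c'\sqrt{\log T/T_2}$ with $c'$ large), and again pulls arm~$1$. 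Summing the cases yields $\reg(T)=O(T^{4/7}\log T)$.

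\emph{The lucky-group argument (technical heart), and the main obstacle.} Fix an arm $a$ and a first-level group $G$; by assigning each first-level path an independent copy of the reward tape, the $\NG$ groups are mutually independent. Call $G$ \emph{lucky for $a$} if (a) $N_{G,1},N_{G,2}\in[\tfrac12\fdpN T_1,\,2\fdpN T_1]$ and (b) $\bar\mu_{G,a}\ge\mu_a+\beta/\sqrt{T_1}$ while $\bar\mu_{G,a'}\le\mu_{a'}-\beta/\sqrt{T_1}$, where $\beta$ is a constant large enough (relative to $C_1$ and $\estC=\tfrac1{16}$) that $2\beta/\sqrt{T_1}-\Delta$ exceeds $\estC/\sqrt{N_{G,1}}+\estC/\sqrt{N_{G,2}}$, which is possible since $\Delta<C_1/\sqrt{T_1}$. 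Event (a) holds with probability $1-T^{-\Omega(1)}$ by Lemma~\ref{lem:t1runs}; conditioned on the sample counts, each empirical mean is an average of $\Theta(T_1)$ i.i.d.\ Bernoulli draws whose variance is bounded away from $0$ (as $\mu_a\in[\tfrac13,\tfrac23]$), so by the Berry--Esseen theorem the two one-sided deviations in (b) — each a constant number $\Theta(\beta)$ of standard deviations — occur, independently across the two arms, with probability at least an absolute constant $q>0$. With $\NG=2^{10}\log T$ independent groups, then, with probability $1-T^{-\Omega(1)}$ some group is lucky for arm~$1$ and some (other) group is lucky for arm~$2$. Finally, when $G$ is lucky for $a$, every agent in the corresponding second-level group sees exactly $\SubH{G}$ (property~\eqref{eq:group-defn}) and by (a)--(b) and Assumption~\ref{ass:embehave} forms $\hat\mu_a\ge\bar\mu_{G,a}-\estC/\sqrt{N_{G,a}}>\bar\mu_{G,a'}+\estC/\sqrt{N_{G,a'}}\ge\hat\mu_{a'}$, so all $T_2$ of them pull arm~$a$; hence Level~$2$ collects $\ge T_2$ samples of arm~$a$, as needed in case~(ii). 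The main obstacle I anticipate is making this anti-concentration step rigorous: the sample count $N_{G,a}$ is itself random and correlated both with the very rewards forming $\bar\mu_{G,a}$ (a stopping-time phenomenon) and, through the greedy dynamics, across arms; the reward-tape device together with conditioning on the high-probability ``good range'' of sample counts in event (a) is what reduces matters to a clean i.i.d.\ average to which Berry--Esseen applies, and pinning down the constants $\beta,c,c',C_1,\NG$ so that every margin comparison above goes through simultaneously is the remaining bookkeeping. Everything else is routine Chernoff/union-bound reasoning.
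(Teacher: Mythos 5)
Your overall architecture coincides with the paper's: the same four gap regimes (negligible, small, medium, large, with the same thresholds up to constants), the same regret accounting charging Levels 1--2 at rate $\Delta$ or trivially, and the same ``lucky group'' device to force both arms to receive $T_2$ pulls in Level~2 when $\Delta$ is small. The parameter arithmetic also checks out. The one step that does not survive scrutiny as written is the anti-concentration reduction at the heart of the lucky-group argument: you claim that ``conditioned on the sample counts, each empirical mean is an average of $\Theta(T_1)$ i.i.d.\ Bernoulli draws'' occurring ``independently across the two arms.'' This is false: in a full-disclosure path the counts $N_{G,a}$ are determined by the greedy dynamics, hence by the realized rewards themselves, so conditioning on $N_{G,a}$ lying in a range (your event~(a)) biases the reward sample and also couples the two arms; Berry--Esseen cannot be applied to the conditional law as if it were the unconditional i.i.d.\ one. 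You correctly flag this as the main obstacle and name the reward-tape device, but the fix you sketch (condition on the good range of counts, then treat the sample as clean i.i.d.) is not the mechanism that closes it.

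The paper's resolution avoids conditioning altogether: the high/low deviation events are defined on a \emph{fixed-length} prefix of each arm's per-group tape, namely the first $\fdpN T_1$ cells of $\cT^1_{a,s}$ (Lemma~\ref{3levelw4}). Since this length is deterministic, Berry--Esseen applies unconditionally, and independence across arms and across groups is automatic because distinct tapes are independent. The transfer from the fixed-prefix mean to the realized group mean $\bar\mu^{1,s}_a$ (which averages a random number of pulls) is then done on the high-probability event that $|N_{G,a}-\fdpN T_1|\le \fdpL\sqrt{T_1\log T}$ together with concentration of the tape segment between position $\fdpN T_1$ and position $N_{G,a}$; this costs only a constant fraction of the $1/\sqrt{\fdpN T_1}$ deviation and yields $\bar\mu^{1,s_a}_a>\mu_a+\tfrac{1}{4\sqrt{\fdpN T_1}}$, $\bar\mu^{1,s_a}_{3-a}<\mu_{3-a}-\tfrac{1}{4\sqrt{\fdpN[3-a]T_1}}$ (Lemma~\ref{lem:luck}), after which your margin comparison via Assumption~\ref{ass:embehave} goes through as you describe. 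A second, minor omission: in your cases (i) and (ii) the Level-3 empirical means are computed at a \emph{random} sample size (Level-1 plus whatever Level~2 contributed), so the Chernoff bound must be taken uniformly over all prefix lengths of each arm's pull sequence; the paper isolates this as a separate event (Lemma~\ref{3levelw3}), and it is indeed the routine union bound you allude to.
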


\begin{remark}
All agents $t>\tildeO(T^{6/7})$ (\ie all but a vanishingly small fraction of agents who are in the first two levels) observe full history, and pull an arm with instantaneous regret $\tildeO\rbr{T^{-3/7}}$.
\end{remark}

Let us sketch the proof; the full proof can be found in the online appendix.

\xhdr{The ``good events".}
We establish four ``good events" each of which occurs with high probability.
\begin{description}
\item[(\event{1})] \emph{Exploration in Level 1:} Every first-level group collects at least $\Omega(T_1)$ samples of each arm.
\item[(\event{2})] \emph{Concentration in Level 1:} Within each first-level group, empirical mean rewards of each arm $a$ concentrate around $\mu_a$.
\item[(\event{3})] \emph{Anti-concentration in Level 1:} For each arm, some first-level subgroup collects data which makes this arm look much better than its actual mean and the other arm look much worse than its actual mean.
\item[(\event{4})] \emph{Concentration in prefix:}
The empirical mean reward of each arm $a$ concentrates around $\mu_a$ in any prefix of its pulls. (This ensures accurate reward estimates in exploitation.)
\end{description}

The analysis of these events applies Chernoff Bounds to a suitable version of ``reward tape" (see the definition of ``reward tape" in \Cref{sec:pfs-prelims}). For example, \event{2} considers a reward tape restricted to a given first-level group.

\xhdr{Case analysis.}
We now proceed to bound the regret conditioned on the four ``good events". W.l.o.g., assume $\mu_1 \geq \mu_2$. We break down the regret analysis into four cases, based on the magnitude the gap parameter $\Delta = \mu_1-\mu_2$. As a shorthand, denote
    $\conf{n} = \sqrt{\log(T)/n}$.
In words, this is a confidence term, up to constant factors, for $n$ independent random samples.

The simplest case is very small gap, trivially yielding an upper bound on regret.

\begin{claim}[Negligible gap]
If
    $\Delta \leq 3\sqrt{2}\cdot  \conf{T_2}$
then
  $\reg(T)\leq O(T^{4/7} \log^{6/7}(T))$.
\end{claim}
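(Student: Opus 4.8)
The plan is to observe that this case needs none of the four ``good events'': it follows from a crude worst-case bound on per-round regret together with the assumed smallness of $\Delta$. Since $K=2$, in any round $t$ the agent either picks the optimal arm $1$ (instantaneous regret $0$) or the suboptimal arm $2$ (instantaneous regret exactly $\mu_1-\mu_2=\Delta$). Hence the instantaneous regret is at most $\Delta$ in every round, regardless of the agents' behavior or the policy, and therefore
\[
  \reg(T) = \sum_{t\in[T]}\E[\mu_1 - \mu_{a_t}] \;\le\; T\,\Delta .
\]

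Next I would substitute the hypothesis $\Delta \le 3\sqrt 2\,\conf{T_2}$ and unfold the definitions. Recall $\conf{n}=\sqrt{\log(T)/n}$ and, from the parameter setting in Theorem~\ref{thm:3level}, $T_2 = T^{6/7}\log^{-5/7}(T)$. Then
\[
  \conf{T_2} = \sqrt{\frac{\log T}{T^{6/7}\log^{-5/7}(T)}} = \sqrt{T^{-6/7}\log^{12/7}(T)} = T^{-3/7}\log^{6/7}(T),
\]
so $\reg(T) \le 3\sqrt 2\, T\cdot T^{-3/7}\log^{6/7}(T) = 3\sqrt 2\, T^{4/7}\log^{6/7}(T) = O(T^{4/7}\log^{6/7}(T))$, which is the claimed bound.

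There is essentially no obstacle here; the only thing to get right is the bookkeeping of exponents of $T$ and $\log T$ when plugging in $T_2$, and the remark that the two-arm assumption is what makes the per-round regret bound by $\Delta$ (rather than by $1$) valid. This ``negligible gap'' case is exactly the regime where adaptive exploration cannot help because the arms are statistically indistinguishable at the relevant sample sizes, so the trivial bound is already good enough; the substantive work is deferred to the remaining three cases with larger $\Delta$.
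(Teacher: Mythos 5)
Your proof is correct and matches the paper's own argument: the paper also handles this case by the trivial bound $\reg(T)\le T\Delta$ (``even always pulling the sub-optimal arm gives regret at most $T(\mu_1-\mu_2)$'') and then substitutes $T_2 = T^{6/7}\log^{-5/7}(T)$ to obtain $O(T^{4/7}\log^{6/7}(T))$. The exponent bookkeeping in your calculation of $\conf{T_2}$ is also right.
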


Another simple case is when $\Delta$ is sufficiently large, so that
the data collected in any first-level group suffices to determine the best arm. The proof follows from \event{1} and \event{2}.

\begin{lemma}[Large gap]\label{3levelbigcase}
If
  $ \Delta \geq 4 \sum_{a\in\A}\; \conf{\fdpN[a]\cdot T_1}$
then all agents in the second and the third levels pull arm 1.
\end{lemma}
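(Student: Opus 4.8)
The plan is to condition on the good events and show that every agent in Levels~$2$ and~$3$ assigns a strictly larger reward estimate to arm~$1$ than to arm~$2$; since each agent pulls the arm with the largest estimate, they then all pull arm~$1$. W.l.o.g.\ $\mu_1\ge\mu_2$, so $\Delta=\mu_1-\mu_2$. Throughout, $\fdpN[a]\ge\fdpP>0$ by Lemma~\ref{lem:greedy}, so for $T_1$ as large as in Theorem~\ref{thm:3level} every sample count appearing below exceeds the constant $\estN$, which is what activates the nontrivial branch of Assumption~\ref{ass:embehave}.

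\emph{Second level.} Fix a second-level group $G'$ and an agent $t\in G'$; by construction her message is the subhistory of the matching first-level group. Let $N_a$ and $\bar{\mu}^t_a$ be the pull count and empirical mean of arm $a$ in that subhistory. By \event{1}, $N_a\ge\tfrac12\fdpN[a]T_1$; by \event{2}, $|\bar{\mu}^t_a-\mu_a|=O(\conf{N_a})$; and by Assumption~\ref{ass:embehave}, $|\hat{\mu}^t_a-\bar{\mu}^t_a|<\estC/\sqrt{N_a}$. Substituting the lower bound on $N_a$ into the last two inequalities (using $\conf{N_a}\le\sqrt2\,\conf{\fdpN[a]T_1}$ and $1/\sqrt{N_a}\le\sqrt2\,\conf{\fdpN[a]T_1}$ since $\log T\ge 1$) gives $|\hat{\mu}^t_a-\mu_a|\le c\,\conf{\fdpN[a]T_1}$ for an absolute constant $c$ collecting the constant from \event{2}, the value $\estC=\tfrac1{16}$, and the factor $\tfrac12$; a direct check gives $c<4$. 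Hence
\[
  \hat{\mu}^t_1-\hat{\mu}^t_2 \;\ge\; \Delta - c\bigl(\conf{\fdpN[1]T_1}+\conf{\fdpN[2]T_1}\bigr) \;=\; \Delta - c\sum_{a\in\A}\conf{\fdpN[a]T_1} \;>\; 0,
\]
where the last step uses the hypothesis $\Delta\ge 4\sum_{a\in\A}\conf{\fdpN[a]T_1}$ and $c<4$. So agent $t$ pulls arm~$1$; as $G'$ and $t$ were arbitrary, every second-level agent pulls arm~$1$.

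\emph{Third level.} An exploitation-level agent $t$ observes the full history of all rounds in Levels~$1$ and~$2$. Since $S_t\subset[t-1]$, the pulls of arm $a$ that she sees are exactly the first $n_a$ pulls of arm $a$ in the reward-tape ordering, for some $n_a$; summing \event{1} over the $\NG$ first-level groups gives $n_a\ge\tfrac12\NG\,\fdpN[a]T_1\ge\tfrac12\fdpN[a]T_1$. Now \event{4} bounds the empirical mean of arm $a$ over this prefix by $O(\conf{n_a})=O(\conf{\fdpN[a]T_1})$, and Assumption~\ref{ass:embehave} again controls $|\hat{\mu}^t_a-\bar{\mu}^t_a|$, so exactly the computation from the second level yields $\hat{\mu}^t_1-\hat{\mu}^t_2\ge\Delta-c\sum_{a\in\A}\conf{\fdpN[a]T_1}>0$ and agent $t$ pulls arm~$1$. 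Note this step never refers to the behavior of the Level-$2$ agents: \event{4} pins down the empirical mean of arm~$1$ over the observed prefix no matter how many of the Level-$2$ pulls went to it.

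\emph{Main obstacle.} There is no conceptual difficulty; the only real work is constant bookkeeping --- checking that every sample count invoked is indeed $\ge\estN$, and that the constants from \event{2}/\event{4}, the $\tfrac12$ loss in \event{1}, and $\estC$ together stay below the factor $4$ in the hypothesis (which is exactly why that factor is there). The one place where structure rather than arithmetic enters is the observation, in the third level, that the Level-$1$--$2$ pulls of any arm form a \emph{prefix} of that arm's pull sequence, so that \event{4} applies --- this uses that the info-graph respects the time ordering of rounds.
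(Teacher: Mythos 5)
Your proof is correct and follows essentially the same route as the paper: for Level~2 you combine the per-group pull-count and concentration events with Assumption~\ref{ass:embehave}, and for Level~3 you use the prefix-concentration event together with the Level-1 pull counts (never needing to know what Level-2 agents chose), exactly as in the paper's argument. The only cosmetic difference is that you absorb the paper's explicit constants into a single $c<4$, which is fine given $\estC=\tfrac1{16}$ and the factor-$\tfrac12$ loss in the pull counts.
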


In the \emph{medium gap} case, the data collected in a given first-level group is no longer guaranteed to determine the best arm. However, agents in the third level see the history of \emph{all} first-level groups, which enables them to correctly identify the best arm.


\begin{lemma}[Medium gap]\label{3levelmedium}
  All agents pull arm 1 in the third level, when $\Delta$ satisfies
\[ \textstyle \Delta\in \left[
    4\,\sum_{a\in\A}\; \conf{\NG\cdot \fdpN[a]\cdot T_1},\quad
    4\,\sum_{a\in\A}\; \conf{\fdpN[a]\cdot T_1}
 \right]. \]
\end{lemma}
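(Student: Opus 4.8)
The plan is to show that every agent $t$ in the third level sees, pooled over \emph{all} $\NG$ first-level groups, enough samples of each arm that the total estimation error $\sum_{a}|\hat{\mu}^t_a-\mu_a|$ is strictly below $\Delta$, forcing $\hat{\mu}^t_1>\hat{\mu}^t_2$ and hence the choice of arm~1. This is the same mechanism as in Lemma~\ref{3levelbigcase}: there a single first-level group already contains $\Theta(\fdpN[a]\cdot T_1)$ samples of arm $a$, whereas here we must aggregate all $\NG$ groups to obtain $\Theta(\NG\cdot\fdpN[a]\cdot T_1)$ samples and thereby shrink each confidence radius by a factor $\sqrt{\NG}$; the lower endpoint of the stated interval is exactly the gap at which this pooled estimate separates the two arms. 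Only the lower endpoint is used — the upper endpoint merely marks where Lemma~\ref{3levelbigcase} takes over (and there additionally pins down the second level).

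I would fix an arbitrary third-level agent $t$ and condition on the good events \event{1} and \event{4}. Since the third level is the exploitation level, the message to $t$ is the full subhistory of the entire exploration phase (Levels~1 and~2), so it contains \emph{every} pull of arm $a$ performed in those levels; and because all exploration rounds precede all exploitation rounds, these are precisely the first $n_a$ cells of arm $a$'s reward tape, where $n_a$ denotes their count. This reward-tape coupling is the crux: it lets us regard the pooled observations of arm $a$ as $n_a$ fresh i.i.d.\ draws and apply \event{4} with $n_a$ samples at once, which is \emph{not} obtainable by merely combining the per-group statements of \event{2} (a weighted average of $\NG$ per-group estimates is only guaranteed within $\approx\conf{\fdpN[a]\cdot T_1}$, with no $\sqrt{\NG}$ gain, since no single group is better than that). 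By \event{1} every first-level group contributes at least $\tfrac12\fdpN[a]\cdot T_1$ samples of arm $a$, so $n_a\ge\tfrac12\,\NG\cdot\fdpN[a]\cdot T_1$ (any further second-level pulls of $a$ only help). Then \event{4} gives $|\bar{\mu}^t_a-\mu_a|=O(\conf{n_a})=O(\conf{\NG\cdot\fdpN[a]\cdot T_1})$, and since $n_a\ge\estN$, Assumption~\ref{ass:embehave} gives $|\hat{\mu}^t_a-\bar{\mu}^t_a|<\estC/\sqrt{n_a}\le\conf{n_a}$; adding these, $|\hat{\mu}^t_a-\mu_a|\le c\cdot\conf{\NG\cdot\fdpN[a]\cdot T_1}$ for an absolute constant $c$ that, once the constants in \event{1}, \event{4} and $\estC=\tfrac1{16}$ are inserted, is strictly below $4$.

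Summing over the two arms and invoking the hypothesis $\Delta\ge 4\sum_{a}\conf{\NG\cdot\fdpN[a]\cdot T_1}$ yields $\sum_{a}|\hat{\mu}^t_a-\mu_a|\le c\sum_{a}\conf{\NG\cdot\fdpN[a]\cdot T_1}\le\tfrac{c}{4}\,\Delta<\Delta$, so
\[
  \hat{\mu}^t_1-\hat{\mu}^t_2\;\ge\;(\mu_1-\mu_2)-{\textstyle\sum_{a}|\hat{\mu}^t_a-\mu_a|}\;\ge\;\Delta-\tfrac{c}{4}\,\Delta\;>\;0,
\]
hence $t$ pulls arm~1. Since $t$ was arbitrary in the third level and \event{1}, \event{4} hold simultaneously with high probability, all third-level agents pull arm~1. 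I expect the only genuine subtlety to be the one just flagged — extracting the $\sqrt{\NG}$ improvement by viewing the union of the $\NG$ groups' observations as one reward-tape prefix, rather than patching together $\NG$ separate concentration bounds — together with the routine check that, since $\NG=\Theta(\log T)$ and $T_1$ is polynomial in $T$, the $o(1)$ fluctuation permitted by \event{1} is swallowed by the factor-$4$ slack in the threshold.
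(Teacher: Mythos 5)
Your proposal is correct and follows essentially the same route as the paper: it pools all exploration-phase pulls of each arm, lower-bounds their count via the Level-1 pull-count event (the paper's $W_1$), applies the prefix/reward-tape concentration event over the first two levels (the paper's $W_4$) to get the $\sqrt{\NG}$-improved error $O\left(\conf{\NG\cdot\fdpN[a]\cdot T_1}\right)$, and then adds the Assumption~\ref{ass:embehave} term before comparing against the lower endpoint of the gap interval. Your observation that stitching together the per-group bounds ($W_2$) would not yield the $\sqrt{\NG}$ gain is exactly why the paper invokes the prefix-concentration event here.
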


Finally, the \emph{small gap} case, when  $\Delta$ is between
$\tilde\Omega(\sqrt{1/T_2})$ and $\tilde O(\sqrt{1/(\NG\, T_1)})$
is more challenging since even aggregating the data from all $\NG$
first-level groups is not sufficient for identifying the best arm.
We need to ensure that both arms continue to be explored in the second level.
To achieve this, we leverage \event{3}, which implies
that each arm $a$ has a first-level group $s_a$ where it gets ``lucky", in the sense that its empirical mean reward is slightly higher than $\mu_a$, while the empirical mean reward of the other arm is slightly lower than its true mean. Since the
deviations are in the order of $\Omega(\sqrt{1/T_1})$, and Assumption~\ref{ass:embehave} guarantees the agents' reward estimates are also within $\Omega(\sqrt{1/T_1})$ of the empirical means, the sub-history
from this group $s_a$ ensures that all agents in the respective second-level group prefer arm $a$. Therefore, both arms are pulled at least $T_2$ times in the second level, which in turn gives the following guarantee:

\begin{lemma}[Small gap]\label{3levelsmallcase}
  All agents pull arm 1 in the third level, when 
\[ \textstyle
    \Delta\in \left( 3\sqrt{2}\cdot\conf{T_2},\quad
      4\,\sum_{a\in\A}\; \conf{\NG\cdot \fdpN[a]\cdot T_1}
\right).
\]
\end{lemma}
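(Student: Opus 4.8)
The plan is to condition on the four good events \event{1}--\event{4} and then argue in two stages. \textbf{Stage 1:} the anti-concentration event \event{3} forces \emph{both} arms to be pulled at least $T_2$ times in the second level. \textbf{Stage 2:} this guarantees every third-level agent sees $\Omega(T_2)$ samples of each arm, which, combined with \event{4} and the hypothesis $\Delta>3\sqrt2\,\conf{T_2}$, makes each of them prefer arm $1$. Recall the running convention $\mu_1\ge\mu_2$, so $\Delta=\mu_1-\mu_2$.

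For Stage 1, I would fix an arm $a$, write $a'$ for the other arm, and use \event{3} to produce a first-level group $s_a$ whose data makes arm $a$ look better than its mean and arm $a'$ worse than its: $\bar\mu^{(s_a)}_a\ge\mu_a+\gamma_a$ and $\bar\mu^{(s_a)}_{a'}\le\mu_{a'}-\gamma_{a'}$, where $\gamma_b$ is a fixed positive constant times $\conf{\fdpN[b]\,T_1}$, as guaranteed by \event{3}. (Here necessarily $s_1\neq s_2$: a single group cannot place the empirical mean of arm $2$ both above $\mu_2+\gamma_2$ and below $\mu_2-\gamma_2$.) Taking an agent $t$ in the second-level group $G'_a$ fed the subhistory of $s_a$, one has $\bar\mu^t_b=\bar\mu^{(s_a)}_b$ for each arm $b$, and by \event{1} agent $t$ has seen $N_{t,b}=\Omega(T_1)\ge\estN$ samples of each $b$, so Assumption~\ref{ass:embehave} applies and
\[
\hat\mu^t_a-\hat\mu^t_{a'}\;\ge\;(\mu_a-\mu_{a'})+\gamma_a+\gamma_{a'}-\frac{\estC}{\sqrt{N_{t,a}}}-\frac{\estC}{\sqrt{N_{t,a'}}}.
\]
I would then check: $\mu_a-\mu_{a'}\ge-\Delta$; the hypothesis $\Delta<4\sum_b\conf{\NG\,\fdpN[b]T_1}=4\,\NG^{-1/2}\sum_b\conf{\fdpN[b]T_1}$ makes $\Delta$ smaller than $\gamma_a+\gamma_{a'}=\Theta\bigl(\sum_b\conf{\fdpN[b]T_1}\bigr)$ by the factor $\sqrt\NG=\Theta(\sqrt{\log T})$; and each slack term $\estC/\sqrt{N_{t,b}}=O(1/\sqrt{T_1})$ is $o(\gamma_a+\gamma_{a'})$ since $\conf{\fdpN[b]T_1}=\Theta(\sqrt{\log T/T_1})$ beats $1/\sqrt{T_1}$ by $\sqrt{\log T}$. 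Hence $\hat\mu^t_a>\hat\mu^t_{a'}$, so every agent of $G'_a$ pulls arm $a$; thus arm $a$ is pulled at least $|G'_a|=T_2$ times in the second level, for both $a=1$ and $a=2$.

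For Stage 2, I would take an arbitrary third-level agent $t$; it observes the entire history of Levels $1$ and $2$ and nothing of Level $3$. By Stage 1 it has seen $N_{t,a}\ge T_2$ samples of each arm $a$, and since Levels $1$--$2$ entirely precede Level $3$ in time these are the first $N_{t,a}$ entries of arm $a$'s reward tape, so \event{4} gives $|\bar\mu^t_a-\mu_a|\le\sqrt2\,\conf{N_{t,a}}$ (the form of $W_2$ in the two-level proof). Assumption~\ref{ass:embehave} (valid since $N_{t,a}\ge T_2\ge\estN$) adds at most $\estC/\sqrt{N_{t,a}}$, and $\conf{N_{t,a}}\le\conf{T_2}$, so $|\hat\mu^t_a-\mu_a|\le\sqrt2\,\conf{T_2}+\estC/\sqrt{T_2}$. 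Therefore
\[
\hat\mu^t_1-\hat\mu^t_2\;\ge\;\Delta-2\Bigl(\sqrt2\,\conf{T_2}+\frac{\estC}{\sqrt{T_2}}\Bigr)\;>\;0,
\]
where the final inequality uses $\Delta>3\sqrt2\,\conf{T_2}$, $\estC=\tfrac1{16}$, and $\sqrt{2\log T}>\tfrac18$; hence agent $t$ pulls arm $1$, which would prove the lemma.

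The hard part will be Stage 1: I must confirm that the anti-concentration ``boost'' $\gamma_a+\gamma_{a'}$ dominates \emph{both} the gap (only true because we are in the small-gap regime, where $\Delta$ is a $\sqrt\NG$-factor below the boost) \emph{and} the Assumption~\ref{ass:embehave} confidence slack (a $\sqrt{\log T}$-factor below the boost), with constants that line up with the exact thresholds hard-wired into \event{3} and into the stated interval for $\Delta$. Stage 2 is then a routine concentration computation once $\Omega(T_2)$ samples per arm are in hand.
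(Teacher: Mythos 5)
Your two-stage plan is exactly the paper's argument (a ``lucky'' second-level group for each arm forces $T_2$ pulls of both arms, then prefix concentration settles the third level), and your Stage 2 computation matches the paper's and is correct. The problem is in Stage 1, and it is quantitative rather than structural: you assert that the anti-concentration boost guaranteed by \event{3} is $\gamma_b=\Theta(\conf{\fdpN[b]T_1})=\Theta\bigl(\sqrt{\log T/(\fdpN[b]T_1)}\bigr)$, and you then lean on this twice -- claiming $\Delta$ is below the boost by a factor $\sqrt{\NG}$, and that the Assumption~\ref{ass:embehave} slack $\estC/\sqrt{N_{t,b}}$ is below it by a factor $\sqrt{\log T}$. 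Neither cushion exists. With only $\NG=\Theta(\log T)$ independent first-level groups and a constant per-group probability (Berry--Esseen), the deviation you can force with high probability is only $\Theta(1/\sqrt{\fdpN[b]T_1})$ -- in the paper, exactly $1/(4\sqrt{\fdpN[b]T_1})$ after correcting the tape-level event for the random pull count via $W_1,W_2$ (Lemma~\ref{lem:luck}); a deviation of order $\sqrt{\log T/T_1}$ would require polynomially many groups, not logarithmically many. So the boost, the gap bound, and the estimate slack are all of the \emph{same} order $1/\sqrt{T_1}$, and the sign of $\hat\mu^t_a-\hat\mu^t_{a'}$ is decided purely by constants: the hypothesis $\Delta<4\sum_b\conf{\NG\,\fdpN[b]T_1}$ with $\NG=2^{10}\log T$ gives $\Delta<\tfrac18\sum_b 1/\sqrt{\fdpN[b]T_1}$, eating half of the boost $\tfrac14\sum_b 1/\sqrt{\fdpN[b]T_1}$, and the remaining margin $\tfrac18\sum_b 1/\sqrt{\fdpN[b]T_1}$ exceeds the slack $\sum_b \estC\sqrt{2}/\sqrt{\fdpN[b]T_1}$ only because $\estC=\tfrac1{16}$.

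In other words, the ``hard part'' you flagged is indeed where the proof lives, and the reason you give for expecting it to work (the spurious $\sqrt{\log T}$ and $\sqrt{\NG}$ dominance) is false; as written, Stage 1 does not establish $\hat\mu^t_a>\hat\mu^t_{a'}$ in the lucky group. The fix is the paper's Lemma~\ref{lem:luck} plus the explicit constant chase above: state the boost as $1/(4\sqrt{\fdpN[b]T_1})$ (derived from the tape event $W_3$ together with $W_1$ and $W_2$, since the realized number of pulls deviates from $\fdpN[b]T_1$ by $\pm\fdpL\sqrt{T_1\log T}$), subtract the gap using the exact value $\conf{\NG\,\fdpN[b]T_1}=\tfrac1{32}\cdot 1/\sqrt{\fdpN[b]T_1}$, and verify the residual margin beats $\estC/\sqrt{\fdpN[b]T_1/2}$. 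Once Stage 1 is repaired this way, your Stage 2 goes through verbatim.
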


\paragraph{Wrapping up: proof of \Cref{thm:3level}. } In negligible
gap case, the stated regret bound holds regardless of what the policy does. In the large gap case, the regret only comes from the
first level, so it is upper-bounded by the total number of agents in this level, which is
    $\NG\cdot \GdT \cdot T_1 = O(T^{4/7} \log T)$.
In both intermediate cases, it suffices to bound the regret from the
first and second levels, so
\[ \textstyle
\reg(T) \leq (\NG\,T_1\cdot \GdT + \NG\, T_2)
\cdot 4\,\sum_{a\in\A}\; \conf{\fdpN[a]\cdot T_1}
= O(T^{4/7} \log^{6/7}(T)).
\]
Therefore, we obtain the stated regret bound in all cases.

\OMIT{
\begin{proof}
Wlog we assume $\mu_1 \geq \mu_2$ as the recommendation policy is symmetric to both arms. We do a case analysis based on $\mu_1-\mu_2$.

Before we start with the case analysis, we first define several clean events and show that the intersection of them happens with high probability.

\begin{itemize}

\item \textbf{Concentration of the number of arm $a$ pulls in the first level:}
\OMIT{By Lemma \ref{lem:greedy}, we know $\GdP \leq \fdpN  \leq \GdT$. For the $s$-th first-level group, define $W_1^{a,s}$ to be the event that the number of arm $a$ pulls in the $s$-th first-level group is between $\fdpN  T_1- \GdT \sqrt{T_1\log(T)}$ and $\fdpN  T_1 + \GdT \sqrt{T_1\log(T)}$. By Chernoff bound,}
For the $s$-th first-level group, define $W_1^{a,s}$ to be the event
that the number of arm $a$ pulls in the $s$-th first-level group is
between $\fdpN  T_1- \GdT \sqrt{T_1\log(T)}$ and
$\fdpN  T_1 + \GdT \sqrt{T_1\log(T)}$. By Lemma~\ref{lem:t1runs}
\[
\Pr[W_1^{a,s}] \geq 1-2\exp(-2\log(T)) \geq 1-2/T^2.
\]
Let $W_1$  be the intersection of all these events (i.e.
$W_1 = \bigcap_{a,s}W_1^{a,s}$). By union bound, we have
\[
\Pr[W_1] \geq 1- \frac{4S}{T^2}.
\]}

\OMIT{\item \textbf{Concentration of the empirical mean of arm $a$ pulls in the first level:}
For each first-level group and arm $a$, imagine there is a tape of enough arm $a$ pulls sampled before the recommendation policy starts and these samples are revealed one by one whenever agents in this group pull arm $a$. For the $s$-th first-level group and arm $a$, define $W_2^{s,a,t_1,t_2}$ to be the event that the mean of $t_1$-th to $t_2$-th pulls in the tape is at most $\sqrt{\frac{2\log(T)}{t_2-t_1+1}}$ away from $\mu_a$. By Chernoff bound,\swcomment{a bit confused about what $t_1$ and $t_2$ mean?}
\[
\Pr[W_2^{s,a,t_1,t_2}] \geq 1 - 2\exp(-4\log(T)) \geq 1- 2/T^4.
\]

Define $W_2$ to be the intersection of all these events (i.e. $W_2 = \bigcap_{a,s,t_1,t_2} W_2^{s,a,t_1,t_2}$). By union bound, we have
\[
\Pr[W_2] \geq 1- \frac{4S}{T^2}.
\]
}
\OMIT{\item \textbf{Concentration of the empirical mean of arm $a$ pulls in the first two levels:}

For all the groups in the first two levels and arm $a$, imagine there is a tape of enough arm $a$ pulls sampled before the recommendation policy starts and these samples are revealed one by one whenever agents in the first two levels pull arm $a$. Define $W_3^{a,t}$ to be the event that the mean of the first $t$ pulls in the tape is at most $\sqrt{\frac{2\log(T)}{t}}$ away from $\mu_a$. By Chernoff bound,
\[
\Pr[W_3^{a,t}] \geq 1 - 2\exp(-4\log(T)) \geq 1- 2/T^4.
\]
Define $W_3$ to be the intersection of all these events (i.e. $W_3 = \bigcap_{a,t} W_3^{a,t}$). By union bound, we have
\[
\Pr[W_3] \geq 1- \frac{4}{T^3}.
\]
}
\OMIT{\item \textbf{Anti-concentration of the empirical mean of arm $a$ pulls in the first level:}

Consider the tapes defined in the second bullet again. For the $s$-th first-level group and arm $a$, define $W_4^{s,a,high}$  to be the event that first $\fdpN  T_1$ pulls of arm $a$ in the corresponding tape has empirical mean at least $\mu_a + 1/\sqrt{\fdpN  T_1}$ and define  $W_4^{s,a,low}$  to be the event that first $\fdpN  T_1$ pulls of arm $a$ in the corresponding tape has empirical mean at most $\mu_a - 1/\sqrt{\fdpN  T_1}$. By Berry-Essen Theorem and $\mu_a \in [1/3,2/3]$, we have
\[
\Pr[W_4^{s,a,high}] \geq (1-\Phi(1/2)) - \frac{5}{\sqrt{\fdpN T_1}} > 1/4.
\]
The last inequality follows when $T$ is larger than some constant.
Similarly we also have
\[
\Pr[W_4^{s,a,low}] > 1/4.
\]
Since $W_4^{s,a,high}$ is independent with $W_4^{s,3-a,low}$, we have
\[
\Pr[W_4^{s,a,high} \cap W_4^{s,3-a,low}] =\Pr[W_4^{s,a,high}] \cdot  \Pr[W_4^{s,3-a,low}]>(1/4)^2 = 1/16.
\]
Now define $W_4$ as $\bigcap_a \bigcup_s (W_4^{s,a,high} \cap W_4^{s,3-a,low})$. Notice that $(W_4^{s,a,high} \cap W_4^{s,3-a,low})$ are independent across different $s$'s. By union bound, we have
\[
\Pr[W_4] \geq 1- 2(1-1/16)^S \geq 1 -2 /T.
\]
\end{itemize}

By union bound, the intersection of these clean events (i.e. $\bigcap_{i=1}^4 W_i$) happens with probability $1-O(1/T)$. When this intersection does not happen, since the probability is $O(1/T)$, it contributes $O(1/T) \cdot T = O(1)$ to the expected regret.

Now we assume the intersection of clean events happens and we summarize what these clean events imply.

\begin{itemize}
\item For the $s$-th first-level group and arm $a$, define $\bar{\mu}_a^{1,s}$ to be the empirical mean of arm $a$ pulls in this group. $W_1^{a,s}$, $W_2^{a,s,1,t}$ for $ = \fdpN  T_1- \GdT \sqrt{T_1\log(T)},...,\fdpN  T_1- \GdT \sqrt{T_1\log(T)}$ together imply that
\[
|\bar{\mu}_a^{1,s} - \mu_a| \leq \sqrt{\frac{2\log(T)}{\fdpN  T_1- \GdT \sqrt{T_1\log(T)}}} \leq \sqrt{\frac{4\log(T)}{\fdpN  T_1}}.
\]
The last inequality holds when $T$ is larger than some constant.
\item For each arm $a$, define $\bar{\mu}_a$ to be the empirical mean of arm $a$ pulls in the first two levels. $W_1^{a,s}$ for $s=1,...,S$ and $W_3^{a,t}$ for $t \geq  (\fdpN  T_1- \GdT \sqrt{T_1\log(T)})S$ together imply that
\[
|\bar{\mu}_a - \mu_a| \leq \sqrt{\frac{2\log(T)}{S\left(\fdpN  T_1- \GdT \sqrt{T_1\log(T)}\right)}} \leq \sqrt{\frac{4\log(T)}{S \fdpN  T_1}} .
\]
The last inequality holds when $T$ is larger than some constant.

If there are at least $T_2$ pulls of arm $a$ in the first two levels,
\[
|\bar{\mu}_a-\mu_a| \leq \sqrt{\frac{2\log(T)}{T_2}}.
\]

\item For each $a \in \{1,2\}$, $W_4$ implies that there exists $s_a$ such that $W_4^{s_a,a,high}$ and $W_4^{s_a,3-a,low}$ happen. $W_4^{s_a,a,high}$,  $W_1^{s_a,a}$, $W_2^{s_a,a,t, \fdpN T_1}$ for $t = \fdpN  T_1- \GdT \sqrt{T_1\log(T)}+1, ...,\fdpN T_1-1$ and $W_2^{s_a,a,\fdpN T_1,t}$ for $t= \fdpN T_1,...,\fdpN  T_1+ \GdT \sqrt{T_1\log(T)}$ together imply that
\begin{align*}
\bar{\mu}_a ^{1,s_a} &\geq \mu_a + \left(\fdpN T_1 \cdot \frac{1}{\sqrt{\fdpN T_1}} - \GdT \sqrt{T_1\log(T)} \cdot \sqrt{\frac{2\log(T)}{ \GdT \sqrt{T_1\log(T)}}} \right) \cdot \frac{1}{\fdpN  T_1+ \GdT \sqrt{T_1\log(T)}} \\
&> \mu_a + \frac{1}{4\sqrt{\fdpN T_1}}.
\end{align*}
The second last inequality holds when $T$ is larger than some constant.
Similarly, we also have
\[
\bar{\mu}_{3-a} ^{1,s_a} < \mu_{3-a}   - \frac{1}{4\sqrt{q_{3-a} T_1}}.
\]
\end{itemize}

Finally we proceed to the case analysis. We give upper bounds on the expected regret conditioned on the intersection of clean events.

\begin{itemize}

\item $\mu_1 - \mu_2 \geq 2\left(\sqrt{\frac{4\log(T)}{\fdpN[1]T_1}}
+ \sqrt{\frac{4\log(T)}{\fdpN[2]T_1}}\right)$. In this case, we want to show that agents in the second and the third levels all pull arm 1.

First consider the $s$-th second-level group. We know that
\[
\bar{\mu}_1^{1,s} - \bar{\mu}_2^{1,s} \geq \mu_1 -\mu_2 - \sqrt{\frac{4\log(T)}{\fdpN[1]T_1}} - \sqrt{\frac{4\log(T)}{\fdpN[2]T_1}} \geq  \sqrt{\frac{4\log(T)}{\fdpN[1]T_1}} + \sqrt{\frac{4\log(T)}{\fdpN[2]T_1}}.
\]
For any agent $t$ in the $s$-th second-level group, by Assumption \ref{ass:embehave}, we have
\begin{align*}
\hat{\mu}_1^t - \hat{\mu}_2^t &>\bar{\mu}_1^{1,s} - \bar{\mu}_2^{1,s} - \frac{c_m}{\sqrt{\fdpN[1]T_1/2}} - \frac{c_m}{\sqrt{\fdpN[2]T_1/2}}\\
&\geq  \sqrt{\frac{4\log(T)}{\fdpN[1]T_1}} + \sqrt{\frac{4\log(T)}{\fdpN[2]T_1}}- \frac{c_m}{\sqrt{\fdpN[1]T_1/2}} - \frac{c_m}{\sqrt{\fdpN[2]T_1/2}}\\
 &> 0.
\end{align*}
Therefore, we know agents in the $s$-th second-level group will all pull arm 1.

Now consider the agents in the third level group. Recall $\bar{\mu}_a$ is the empirical mean of arm $a$ in the history they see. We have
\[
\bar{\mu}_1 - \bar{\mu}_2 \geq \mu_1 -\mu_2 - \sqrt{\frac{4\log(T)}{S\fdpN[1]T_1}} - \sqrt{\frac{4\log(T)}{S\fdpN[2]T_1}} \geq  \sqrt{\frac{4\log(T)}{\fdpN[1]T_1}}
+ \sqrt{\frac{4\log(T)}{\fdpN[2]T_1}}.
\]
Similarly as above, by Assumption \ref{ass:embehave}, we know $\hat{\mu}_1^t - \hat{\mu}_2^t > 0$ for any agent $t$ in the third level. So we know agents in the third-level group will all pull arm 1. Therefore the expected regret is at most $S T_G T_1 = O(T^{4/7} \log^{6/7}(T))$.

\item $2\left(\sqrt{\frac{4\log(T)}{S\fdpN[1]T_1}}
+ \sqrt{\frac{4\log(T)}{S\fdpN[2]T_1}}\right) \leq \mu_1-\mu_2 < 2\left(\sqrt{\frac{4\log(T)}{\fdpN[1]T_1}}
+ \sqrt{\frac{4\log(T)}{\fdpN[2]T_1}}\right)$. In this case, we want to show agents in the third level all pull arm 1. Recall $\bar{\mu}_a$ is the empirical mean of arm $a$ in the first two levels. We have
\[
\bar{\mu}_1 - \bar{\mu}_2 \geq \mu_1 -\mu_2 - \sqrt{\frac{4\log(T)}{S\fdpN[1]T_1}} - \sqrt{\frac{4\log(T)}{S\fdpN[2]T_1}} \geq  \sqrt{\frac{4\log(T)}{S\fdpN[1]T_1}}
+ \sqrt{\frac{4\log(T)}{S\fdpN[2]T_1}}.
\]
For any agent $t$ in the third level, by Assumption \ref{ass:embehave}, we have
\begin{align*}
\hat{\mu}_1^t - \hat{\mu}_2^t &>\bar{\mu}_1 - \bar{\mu}_2 - \frac{c_m}{\sqrt{S\fdpN[1]T_1/2}} - \frac{c_m}{\sqrt{S\fdpN[2]T_1/2}}\\
&\geq  \sqrt{\frac{4\log(T)}{S\fdpN[1]T_1}} + \sqrt{\frac{4\log(T)}{S\fdpN[2]T_1}}- \frac{c_m}{\sqrt{S\fdpN[1]T_1/2}} - \frac{c_m}{\sqrt{S\fdpN[2]T_1/2}}\\
 &> 0.
\end{align*}
So we know agents in the third-level group will all pull arm 1. Therefore the expected regret is at most
\[
(S T_G T_1 + S T_2) \cdot 2\left(\sqrt{\frac{4\log(T)}{\fdpN[1]T_1}}
+ \sqrt{\frac{4\log(T)}{\fdpN[2]T_1}}\right) = O(T^{4/7} \log^{6/7}(T))
\]

\item $ 3\sqrt{\frac{2\log(T)}{T_2}} < \mu_1-\mu_2 < 2\left(\sqrt{\frac{4\log(T)}{S\fdpN[1]T_1}}
+ \sqrt{\frac{4\log(T)}{S\fdpN[2]T_1}}\right)$. In this case, we just need to make sure that agents in the third level all pull arm 1. To do so, we need both arms to be pulled at least $T_2$ rounds in the second level.

Now consider the $s_a$-th second-level group. We have
\begin{align*}
\bar{\mu}_a^{1,s_a} - \bar{\mu}_{3-a}^{1,s_a} &> \mu_a + \frac{1}{4\sqrt{\fdpN T_1}} -\mu_{3-a} +\frac{1}{4\sqrt{q_{3-a}T_1}} \\
&> \frac{1}{4\sqrt{\fdpN[1]T_1}}+ \frac{1}{4\sqrt{\fdpN[2]T_1}} - 2\left(\sqrt{\frac{4\log(T)}{S\fdpN[1]T_1}}
+ \sqrt{\frac{4\log(T)}{S\fdpN[2]T_1}}\right) \\
&\geq \frac{1}{8\sqrt{\fdpN[1]T_1}}+ \frac{1}{8\sqrt{\fdpN[2]T_1}}.
\end{align*}
For any agent $t$ in the $s_a$-th second-level group, by Assumption \ref{ass:embehave}, we have
\begin{align*}
\hat{\mu}_a^t - \hat{\mu}_{3-a}^t &>\bar{\mu}_a^{1,s_a} - \bar{\mu}_{3-a}^{1,s_a} - \frac{c_m}{\sqrt{\fdpN[1]T_1/2}} - \frac{c_m}{\sqrt{\fdpN[2]T_1/2}}\\
&\geq   \frac{1}{8\sqrt{\fdpN[1]T_1}}+ \frac{1}{8\sqrt{\fdpN[2]T_1}}- \frac{c_m}{\sqrt{\fdpN[1]T_1/2}} - \frac{c_m}{\sqrt{\fdpN[2]T_1/2}}\\
 &> 0.
\end{align*}
So we know agents in the $s_a$-th second-level group will all pull arm $a$. Therefore in the first two levels, both arms are pulled at least $T_2$ times. Now consider the third-level. We have
\[
\bar{\mu}_1 - \bar{\mu}_2  \geq \mu_1 -\mu_2 - 2\sqrt{\frac{2\log(T)}{T_2}} \geq \sqrt{\frac{2\log(T)}{T_2}}.
\]
Similarly as above, by Assumption \ref{ass:embehave}, we know $\hat{\mu}_1^t - \hat{\mu}_2^t > 0$ for any agent $t$ in the third level. So we know agents in the third-level group will all pull arm 1.

Therefore the expected regret is at most
\[
(S T_G T_1 + S T_2) \cdot 2\left(\sqrt{\frac{4\log(T)}{S\fdpN[1]T_1}}
+ \sqrt{\frac{4\log(T)}{S\fdpN[2]T_1}}\right) \leq O(T^{4/7} \log^{6/7}(T))
\]

\item $\mu_1 - \mu_2 \leq 3\sqrt{\frac{2\log(T)}{T_2}}$. This is the easy case. Even always pulling the sub-optimal arm (i.e. arm 2) gives regret at most $T \cdot (\mu_1-\mu_2) = O(T^{4/7} \log^{6/7}(T))$.
\end{itemize}
\end{proof}}


\section{Optimal regret with a multi-level policy}
\label{sec:llevel}

We extend our three-level policy to a more adaptive multi-level policy in order to achieve the optimal regret rate of $\tildeO_K(\sqrt{T})$. This requires us to distinguish finer and finer gaps between the best and second-best arm.  A naive approach would be to recursively apply the 2-level structure, creating a tree of group aggregators, each level responsible for successively larger information sets.  This mimics the hierarchical information structure in many organizations, but it suffers large regret because the number of agents in focus groups grows exponentially.  Furthermore, each of these agents is forced to make decisions with access to a vanishingly-small amount of history, which is undesirable in-and-of itself.  In this section, we describe a method of interlacing information to reuse it without suffering from introduced correlations.  This careful reuse of information is the third and final step in our journey towards policies with optimal learning rates.

On a very high level, our multi-level policy implement the limited-adaptivity framework for multi-armed bandits \citep{Perchet2015BatchedBP}, defined is as follows. Suppose a bandit algorithm outputs a distribution $p_t$ over arms in each round $t$, and the arm $a_t$ is then drawn independently from $p_t$. This distribution can change only in a small number of rounds, called \emph{adaptivity rounds}, that need to be chosen by the algorithm in advance. Optimal regret rate requires at least $O(\log \log T)$ adaptivity rounds, where each ``level" $\ell\geq 2$ in our construction implements one adaptivity round. The limited-adaptivity bandit algorithm from \citet{Perchet2015BatchedBP} is much simpler compared to our construction below, as it can ensure the desired amount of exploration directly by choosing the appropriate alternatives.

We provide two results (for two different parameterizations of the same policy). The first result analyzes the $L$-level policy for an arbitrary $L\leq O(\log \log T)$, and achieves the root-$T$ regret rate with $O(\log \log T)$ levels.

\begin{theorem}
\label{thm:llevel-1}
There exists $L_{\max} = \Theta(\log\log T)$ such that
for each $L \in \{3, 4 \LDOTS L_{\max}\}$ there exists an order-based disclosure policy with $L$ levels and
regret
$$\reg(T) \leq O_K\rbr{ T^{\gamma} \cdot \polylog(T) },\quad
\text{where } \gamma = \frac{2^{L-1}}{2^L-1}.$$
In particular, we obtain regret
    $O_K(T^{1/2} \polylog(T))$
with $L= O(\log\log(T))$.
\end{theorem}

Our second policy achieves a gap-dependent regret
guarantee, as per \eqref{eq:model-OptRegret}. This policy has the same info-graph structure as the first
one in Theorem \ref{thm:llevel-1}, but requires a higher number of
levels $L = O(\log(T/\log\log(T)))$ and different group sizes. We will
bound its regret as a function of the gap parameter $\Delta$ even
though the construction of the policy does not depend on $\Delta$. In
particular, this regret bound outperforms the one in Theorem
\ref{thm:llevel-1} when $\Delta$ is much bigger than $T^{-1/2}$.  It
also has the desirable property that the policy does not withhold too
much information from agents---any agent $t$ observes a good fraction
of history in previous rounds.

\begin{theorem}
\label{thm:llevel-2}
There exists an order-based disclosure policy with
    $L = O(\log(T)/\log\log(T))$
levels such that for every bandit instance with gap parameter $\Delta$, the policy has regret
$$\reg(T) \leq O_K\rbr{ \min\rbr{ 1/\Delta, \; T^{1/2}} \cdot \polylog(T)}.$$
Under this policy,
each agent $t$ observes a subhistory of
size at least $\Omega( t/\polylog(T))$.
\end{theorem}

Note for constant number of arms, this result matches the optimal
regret rate (given in \Cref{eq:model-OptRegret}) for stochastic
bandits, up to logarithmic factors.

\begin{remark}
The multi-level policy can be applied to the first $T/\eta(T)$ agents only, for any fixed $\eta(T) = \polylog(T)$ (\ie, with reduced time horizon $T/\eta(T)$). Then the subsequent agents -- which comprise all but $1/\eta(T)$-fraction of the agents -- can observe the full history and enjoy instantaneous regret $\tildeO\rbr{T^{-1/2}}$. The regret bounds from both theorems carry over. This extension requires only minimal modifications to the analysis, which are omitted.
\end{remark}

Let us present our construction and the main ideas behind it; the full proofs are deferred to Section~\ref{sec:llevel-details}. We focus our explanations on the case of $K=2$ arms (but the construction itself applies to the general case).



A natural idea to extend the three-level policy is to insert more
levels as multiple ``check points", so the policy can incentivize the
agents to perform more adaptive exploration. In particular, each level will be responsible for some range of the gap parameter, collecting enough samples to rule out the bad arm if the gap parameter falls in this range. However, we need to
introduce two main modifications in the info-graph to accommodate some new challenges.


\OMIT{We will now present the main idea of extending from 3-level to
  $L$-level is that instead of using the second level as one
  ``check-point'', we use more levels to have multiple ``check
  points''.  Some new challenges appear in this process. Here we give
  an overview of the additional techniques we use to get our $L$-level
  results.}

\xhdr{Interlacing connections between levels.}
A tempting approach, described intuitively at the beginning of this section, generalizes the three-level policy to build an $L$-level
info-graph with the structure of a $\NG$-ary tree. Specifically, for every
$\ell\in \{2, \ldots , L\}$, each $\ell$-level group observes the
sub-history from a set of $\NG =\Theta(\log(T))$ groups in level $\ell-1$
such that these sets are mutually disjoint.
The
disjoint sub-histories observed by all the groups in level $\ell$ are
independent, and under the small gap regime (similar to
Lemma~\ref{3levelsmallcase}) it ensures that each arm has a
``lucky'' $\ell$-level group of agents that only pull this arm. This ``lucky''
property is crucial for ensuring that both arms will be explored in
level $\ell$.

However, in this construction, the first level will have $\NG^{L-1}$
groups, which introduces a multiplicative factor of $\NG^{\Omega(L)}$
to the regret rate. The exponential dependence in $L$ will heavily
limit the adaptivity of the policy, and prevents having the number of
levels for obtaining the result in \Cref{thm:llevel-2}. To overcome this, we will design an info-graph structure with $\NG^2 = \Theta(\log^2(T))$ groups at each level.%
\footnote{This exponential dependence on $L$ would not substantially affect \Cref{thm:llevel-1} (the one without the gap-dependent regret bound), and the mitigation -- the ``interlacing connection structure" described below -- may be unnecessary for this theorem. However, we present a unified construction for both theorems (and a mostly unified analysis), for the sake of simplicity. We note that \Cref{thm:llevel-1} does suffer from the ``boundary cases" issue described later in this section, and leverages the ``amplifying groups" aspect of our construction.}

We will leverage the following key observation: in order to maintain
the ``lucky'' property, it suffices to have $\Theta(\log T)$ $\ell$-th
level groups that observe disjoint sub-histories that take place in
level $(\ell - 1)$. Moreover, as long as the group size in levels lower
than $(\ell-1)$ are substantially smaller than group size of level $\ell-1$,
the ``lucky'' property does not break even if different groups in
level $\ell$ observe overlapping sub-history from levels
$\{1, \ldots, \ell-2\}$.

\OMIT{In our 3-level policy, the second level has
  $\NG = \Theta(\log(T))$ groups. We do so to ensure that in the small
  gap case (Lemma \ref{3levelsmallcase}), with high probability, both
  arms are pulled enough times in the second level. The argument
  relies on the fact that agents in different second-level groups
  observe disjoint histories of the first level and the independent
  randomness in first level groups guarantees each arm $a$ to have a
  ``lucky'' second-level group such that agents in that group all pull
  arm $a$ with high probability.

  Simply generalizing this idea to an $L$-level policy would give us a
  $\NG$-ary tree like structure and the first level will have
  $\NG^{L-1}$ groups. It incurs an extra $\NG^{\Omega(L)}$ factor in
  the regret. If we use $\log\log(T)$ levels as in
  \Cref{thm:llevel-1}, this factor super poly-logarithmic. If we use
  $\log(T)/\log\log(T)$ levels as in \Cref{thm:llevel-2}, this
  factor goes up to polynomial in $T$.

  In order to avoid this undesirable factor, we design a slightly
  different connecting structure between levels in our $L$-level
  policy.  The key observation is that, for any level
  $\ell \in \{3,\ldots,L-1\}$, we do want agents in different $\ell$-th level
  groups to see disjoint histories of the $(\ell-1)$-th level to ensure
  that there is some ``lucky'' group for each arm whose mean is close
  enough to the best arm. However, it does not matter much if pulls in
  levels below $\ell-1$ get observed by agents in different $\ell$-th level
  groups because group sizes in lower levels are much smaller than
  group sizes of level $\ell-1$ and the independent randomness in level
  $\ell-1$ is sufficient.}

This motivates the following interlacing connection structure between
levels. There are $\NG^2$ groups in each level of the info-graph.
The groups in the $\ell$-th level are
labeled as $G_{\ell,u,v}$ for $u,v\in[\NG]$. For any $\ell \in \{2,\ldots,L\}$
and $u,v,w\in [\NG]$, agents in group $G_{\ell,u,v}$ see the history of
agents in group $G_{\ell-1,v,w}$ (and by transitivity all agents in
levels below $\ell-1$). See Figure \ref{fig:llevel-connecting} for a
visualization of simple case with $\NG = 2$). Two observations are
in order:
\begin{itemize}
\item[(i)] Consider level $(\ell - 1)$ and fix the last group index to be
  $v$, and consider the set of groups
  $\cG_{\ell-1, v}=\{G_{\ell-1,i,v}\mid i \in [\NG]\}$ (\eg $G_{\ell-1,1,1}$
  and $G_{\ell-1,2,1}$ circled in red in the Figure
  \ref{fig:llevel-connecting}). The agents in any group of
  $\cG_{\ell-1, v}$ observe the same sub-history. As a result, if the
  empirical mean of arm $a$ is sufficiently high in their shared
  sub-history, then all groups in $\cG_{\ell-1, v}$ will become ``lucky''
  for $a$.

\item[(ii)] Every agent in level $\ell$ observes the sub-history from
  $\NG$ $(\ell-1)$-th level groups, each of which belonging to a
  different set $\cG_{\ell-1, v}$. Thus, for each arm $a$, we just need
  one set of groups $\cG_{\ell-1, v}$ in level $\ell-1$ to be ``lucky'' for
  $a$ and then all agents in level $\ell$ will see sufficient arm $a$
  pulls.
 \OMIT{For each set of groups
    in level $\ell-1$, agents in level $\ell$ see the history of agents in
    one of these groups. Therefore, for each arm $a$, we just need one
    set of groups in level $\ell-1$ to get ``lucky'' and then all agents
    in level $\ell$ will see enough pulls of arm $a$.}
\end{itemize}

\begin{figure}[h]
\centering
\begin{tikzpicture}
 \foreach \x in {0,3,6,9}
 {
 \filldraw[fill=purple!20!white]
 (\x+0,7.5)--(\x+2,7.5)--(\x+2,8.5)--(\x+0,8.5)--cycle;
 \filldraw[fill=green!20!white]
 (\x+0,5)--(\x+2,5)--(\x+2,6)--(\x+0,6)--cycle;
 \filldraw[fill=blue!20!white]
 (\x+0,2.5)--(\x+2,2.5)--(\x+2,3.5)--(\x+0,3.5)--cycle;
 }
\foreach \y in {2.5,5,7.5}
{
  \node at (12.5,\y+0.5){$\cdots$};
}

\foreach \y in {3.5,6}
{
  \draw (1,\y) -- (1,\y+1.5);
  \draw (1,\y) -- (7,\y+1.5);

  \draw (4,\y) -- (1,\y+1.5);
  \draw (4,\y) -- (7,\y+1.5);

  \draw (7,\y) -- (4,\y+1.5);
  \draw (7,\y) -- (10,\y+1.5);

  \draw (10,\y) -- (4,\y+1.5);
  \draw (10,\y) -- (10,\y+1.5);
}

\foreach \u in {1,2}
{
	\foreach \v in {1,2}
	{
	\pgfmathsetmacro{\x}{((\u-1)*2+(\v-1))*3};
	\pgfmathsetmacro{\xa}{((\u-1))*3};
	\pgfmathsetmacro{\xb}{(2+(\u-1))*3};
   \node at(\x+1,8){$G_{\ell,\u,\v}$};
   \node at(\x+1,5.5){$G_{\ell-1,\u,\v}$};
   \node at(\x+1,3){$G_{\ell-2,\u,\v}$};
	}
}

  \node at(-1.2,3){\textbf{Level $\ell-2$}};
  \node at(-1.2,5.5){\textbf{Level $\ell-1$}};
  \node at(-1.2,8){\textbf{Level $\ell$}};
  \draw[->] (13.3,2)--(13.3,8.5);
  \node at(13.7,5)[ rotate=90]{Time};

 \draw [rounded corners=3mm, line width=1mm, red](-0.2,4.8)--(2.2,4.8)--(2.2,6.2)--(-0.2,6.2)--cycle;
  \draw [rounded corners=3mm, line width=1mm, red](5.8,4.8)--(8.2,4.8)--(8.2,6.2)--(5.8,6.2)--cycle;
\end{tikzpicture}
\caption{Connections between levels for the $L$-level policy, for $\NG=2$.}
\label{fig:llevel-connecting}
\end{figure}

\xhdr{Amplifying groups for boundary cases.} Recall in the
three-level policy, the medium gap case (Lemma \ref{3levelmedium})
corresponds to the case where the gap $\Delta$ is between
$\Omega\left(\sqrt{{1}/{T_1}}\right)$ and
$O\left(\sqrt{{\log(T)}/{T_1}}\right)$. This is a boundary case since
$\Delta$ is neither large enough to conclude that with high
probability agents in both the second level and the third level all
pull the best arm, nor small enough to conclude that both arms are
explored enough times in the second level (due to
anti-concentration). In this case, we need to ensure that agents in
the third level can eliminate the inferior arm. This issue is easily
resolved in the three-level policy since the agents in the third level
observe the entire first-level history, which consists of
$\Omega(T_1\log(T))$ pulls of each arm and provides sufficiently
accurate reward estimates to distinguish the two arms.

\OMIT{about the situation in which the sub-optimal arm does not get
  pulled enough times in the first two levels and some agents in the
  third level may pull the sub-optimal arm.  Such situation is
  naturally ruled out in our 3-level policy for the following
  reason. Agents in the third level observe the history of the entire
  first level while agents in the second level only observes the
  history of a single first-level group. For each arm, even if it is
  not explored enough times in the second level, its empirical mean
  concentrates closer to its actual mean in the history observed by a
  third-level agent than the one observed by a second-level
  agent. Therefore, although the gap in the medium gap case is just
  not large enough for agents in the second level to all pull the best
  arm, it is large enough for agents in the third level to all pull
  the best arm.}

In the $L$-level policy, such boundary cases occur for each
intermediate level $\ell\in \{2, \ldots, l-1\}$, but the issue mentioned
above does not get naturally resolved since the ratios between the
upper and lower bounds of $\Delta$ increase from $\Theta\left(\sqrt{\log(T)}\right)$
to $\Theta(\log(T))$, and it would require more observations from
level $(\ell-2)$ to distinguish two arms at level $\ell$. The reason for
this larger disparity is that, except the first level, our guarantee on
the number of pulls of each arm is no longer tight. For example,
as shown in Figure \ref{fig:llevel-connecting}, when we talk about
having enough arm $a$ pulls in the history observed by agents in
$G_{\ell,1,1}$, it could be that only agents in group $G_{\ell-1,1,1}$ are
pulling arm $a$ and it also could be that most agents in groups
$G_{\ell-1,1,1},G_{\ell-1,1,2},...,G_{\ell-1,1,\NG}$ are pulling arm
$a$. Therefore our estimate of the number of arm $a$ pulls can be off
by an $\NG=\Theta(\log(T))$ multiplicative factor. This ultimately
makes the boundary cases harder to deal with.

We resolve this problem by introducing an additional type of \emph{amplifying groups}, called $\Gamma$-groups. For each $\ell \in [L], u,v \in [\NG]$, we create a $\Gamma$-group $\Gamma_{\ell,u,v}$. Agents in $\Gamma_{\ell,u,v}$ observe the same history as the one observed by agents in $G_{\ell,u,v}$ and the number of agents in $\Gamma_{\ell,u,v}$ is $\Theta(\log(T))$ times the number of agents in $G_{\ell,u,v}$. The main difference between $G$-groups and $\Gamma$-groups is that the history of $\Gamma$-groups in level $\ell$ is not sent to agents in level $\ell+1$ but agents in higher levels. When we are in the boundary case in which we don't have good guarantees about the $(\ell+1)$-level agents' pulls, the new construction makes sure that agents in levels higher than $\ell+1$ get to see enough pulls of each arm and all pull the best arm.


\xhdr{Recap and parameters.}
Let us recap our construction. The agents are partitioned into $L$ mutually disjoint subsets called \emph{levels}. Each level $\ell\in[L]$ is further partitioned into mutually disjoint subsets called \emph{groups}. There are two types of groups: resp., \emph{$G$-groups} and \emph{$\Gamma$-groups}. Each level $\ell\geq 2$ consists of $\NG^2$ groups of each type, labeled $G_{\ell,u,v}$ and $\Gamma_{\ell,u,v}$, resp., where $u$ and $v$ range over $[\NG]$. Level $1$ consists of $\NG^2$ $G$-groups, likewise labeled $G_{1,u,v}$; each these groups in turn consists of  several (mutually disjoint) \ALGGs of $\GdT$ agents each, where $\GdT$ is from Section~\ref{sec:warmup}.

The info-graph is defined as follows. Agents in level $1$ only observe the history from their respective \ALGG. For each level $\ell\geq 2$, all agents in a given group $G_{\ell,u,v}$ observe (i) all the history in the first $\ell-2$ levels (both $G$-groups and $\Gamma$-groups) and (i) the history from group $G_{\ell-1,v,w}$ for all $w \in [\NG]$. The agents in group $\Gamma_{\ell,u,v}$ observe the same history as those in  $G_{\ell,u,v}$.

Aside from the global parameter $\NG = \Theta(\log T)$ mentioned above, the structure of each level $\ell\in [L]$ is determined by a parameter $T_\ell$. Specifically, we have $T_1$ full-disclosure paths in each Level-1 group. For each level $\ell\geq 2$, each $G$-group contains $T_\ell$ agents, and each $\Gamma$-group contains $(\sigma-1) T_\ell$ agents. Parameters $T_1 \LDOTS T_L$ are specified in \Cref{sec:llevel-details}, differently for the two theorems.

Numerically, we set $\NG = 2^{10}\log(T)$ throughout, and
    $ L_{\max} :=  \log\rbr{ \frac{\ln T}{\log \NG^4} }$
in Theorem~\ref{thm:llevel-1}.


\section{Robustness}
\label{sec:robust}
We provide several results to illustrate that our constructions are robust to small amounts of misspecification. All these results require only minor changes in the analysis, which are omitted. First, we observe that all parameters in all policies can be increased by a constant factor.%
\footnote{For the two-level policy, this is a special case of Remark~\ref{rem:2level-tweaks}. We present it here for consistency.}

\begin{proposition}[parameters]
\label{thm:robust-params}
All results hold even if all parameters increase by at most a constant factor: specifically, parameters $(\fdpL,T_1)$ for the two-level policy (Theorem~\ref{thm:2level}), parameters $(\fdpL,\sigma, T_1, T_2)$ for the three-level policy (Theorem~\ref{thm:3level}), and parameters $(\fdpL,\sigma; T_1 \LDOTS T_L)$ for the $L$-level policy (Theorems~ \ref{thm:llevel-1} and \ref{thm:llevel-2}).
\end{proposition}

Let us consider a more challenging scenario when the \emph{structure} of the communication network is altered, introducing correlation between parts of the constructions that are supposed to be isolated from one another. Recall from Example~\ref{ex:robust-global} that even a small amount of such correlation can be extremely damaging if it comes early in the game.
Nevertheless, we can tolerate some undesirable correlation when it is sufficiently ``local" or happens in later rounds. Informally, the existence of a local side channel between consumers does not necessarily break the regret guarantees. Families and friends can share recommendations and the reviews they've received if their social networks are sufficiently disjoint and information doesn't travel too far.

Formally, we define a generalization of the two-level policy in which the exploration level can be wired in an arbitrary way, as long as it contains sufficiently many paths that are sufficiently long and sufficiently isolated. Agents in these paths may observe some agents that lie outside of these paths, but not too many, and these outside agents may not be shared among the paths. We need a definition: for a given subset $S$ of rounds, the \emph{span} of $S$ is the union of $S$ and all rounds $s$ that are observable in some round $t\in S$ (\ie rounds $s\leq t$ such that $s$ and $t$ are connected in the info-graph). We use quantity $\fdpL$ from Lemma~\ref{lem:greedy}.

\begin{proposition}[Robustness of the two-level policy]
\label{thm:robust-2level}
Fix some $N<T$. Consider an order-based disclosure policy such that each agent $t>N$ sees the full history: $S_t = [t-1]$. Suppose the info-graph on the first $N$ agents contains $M$ paths of length $\fdpL$ such that their spans are mutually disjoint and contain at most
    $2\cdot \fdpL$
rounds each. Then
\[ \reg(T) \leq \tilde{O}_K \rbr{ N + T/\sqrt{M}} . \]
In particular, we obtain
    $\reg(T) \leq \tilde{O}_K \rbr{ T^{2/3} }$
when $M = N = O(T^{2/3})$.
\end{proposition}

It is essential to bound the span size of the paths. Recall from Example~\ref{ex:robust-local} that too many ``leaf agents" observed by everyone in a given full-disclosure path would rule out the natural exploration in this path.

A similar but somewhat weaker result extends to multi-level policies.

\begin{proposition}[Undesirable correlations in Level 1]
\label{thm:robust-level1}
Consider the info-graph of either multi-layer policy (from  Theorem~\ref{thm:3level}, \ref{thm:llevel-1}, or \ref{thm:llevel-2}).
Suppose each full-disclosure path in Level 1 is replaced with subgraph
$H$ which contains at most  $2\cdot \fdpL$  rounds total, includes a path of length $\fdpL$, and is connected to the rest of the info-graph via $\max(H)$ only. Then the corresponding theorem still holds.
\end{proposition}

Moreover, we can handle some undesirable correlation outside of Level 1. As a proof of concept, we focus on the three-level disclosure policy, and allow each agent in Level $2$ to observe some  additional Level-$1$ agents. These agents can be chosen arbitrarily, \eg they could be the same for all Level-$2$ agents.

\begin{proposition}[undesirable correlations in Level $2$]
\label{thm:robust-global}
Consider the three-level policy from Theorem~\ref{thm:3level}. Add edges to the info-graph: connect each Level-$2$ agent to at most $O(\sqrt{T_1})$ arbitrarily chosen agents from Level-$1$, where $T_1$ is the parameter from Theorem~\ref{thm:3level}.  The resulting order-based  policy satisfies the guarantee in Theorem~\ref{thm:3level}.
\end{proposition}


\section{Numerical study}
\label{sec:expts}
\newcommand{\PoIE}{\term{PoIE}}
\newcommand{\LevOne}{\term{Level1}}
\newcommand{\FundProb}{\term{MaxMinExploration}}
\newcommand{\FDP}{full-disclosure path\xspace}
\newcommand{\FDPs}{full-disclosure paths\xspace}
\newcommand{\UCB}{\term{UCB}}
\newcommand{\LCB}{\term{LCB}}

We conduct a limited-scope numerical study to assess feasibility of our approach. We focus on the first level of our constructions (\ie parallel \FDPs, hence \LevOne), in the case of two arms. We numerically optimize its performance for various modeling choices, and discuss implications for the two-level construction. We compare the performance of our policies against bandit algorithms unrestricted by agents' incentives. Following \citet{Selke-PoIE-ec21}, the respective penalty in performance is called the \emph{price of incentivizing exploration} (\PoIE).

\xhdr{Foundations.}
We interpret \LevOne as a standalone algorithm which solves a fundamental exploration problem: choose each arm as much as possible within a given time horizon. Formally, given $T$ rounds, maximize $\min(N_1, N_2)$, where $N_a$ is the number of times arm $a\in[2]$ is chosen; call it \emph{\FundProb}.%
\footnote{\citet{Selke-PoIE-ec21} study a similar problem: minimize the number of rounds to collect $n$ samples of each arm. They work within the framing of BIC incentivized exploration, \ie not restricting to order-based disclosure policies, and theoretically characterize the optimal dependence on agents' beliefs and the number of arms.}
A primitive in our constructions, this problem is also significant on its own. Indeed, having collected the data, the platform could (i) exploit with respect to a different objective, misaligned with agents' rewards, and/or (ii) estimate some property other than the best arm. Moreover, the platform could consider different objectives/properties and they need not be known in advance.

For \LevOne, the objective $\min(N_1,N_2)$ simplifies as follows. Suppose \LevOne consists of $T_1$ \FDPs of length $P$ each. Then $\min(N_1,N_2)$ concentrates to $T_1\cdot \min\rbr{N_1^P,\, N_2^P}$ as $T_1$ increases, where $N_a^P$ is the expected number of times arm $a\in [2]$ is chosen within a single \FDP. (The difference can be upper-bounded by $O\rbr{P\sqrt{T_1}}$ via Chernoff Bounds.)

Thus, we have a simple comparison: \LevOne achieves $\min(N_1,N_2)\approx T_1\cdot \min\rbr{N_1^P,\, N_2^P}$, whereas an unrestricted bandit algorithm could sample each arm $T_1P/2$ times in the same time period.  With this in mind, we define \PoIE for \FundProb as the ratio of these two quantities:
\begin{align}
\PoIE := \min\rbr{N_1^P,\, N_2^P}\,/\, (P/2).
\end{align}
Note that $T_1$ cancels out, so that \PoIE is a property of a single \FDP.

This notion of \PoIE naturally connects to the two-level construction. Indeed, suppose an upper bound $\gamma\geq \PoIE$ is known. Then for large enough time horizon $T$ one could choose
    $T_1$
so as to achieve regret
    $R(T) \leq C\,T^{2/3}\cdot \rbr{\gamma \log T}^{1/3}  $,
for some (small) absolute constant $C>0$, as compared to the standard regret bound
    $R(T) \leq C\, T^{2/3}\cdot \rbr{\log T}^{1/3}  $
for the respective bandit algorithm of explore-then-commit.%
\footnote{We set $T_1\sim T^{2/3}\gamma^{1/3}/P$. The constant $C$ is the same in both cases, as it comes out of the same analysis.}
Thus, we have (only) $\gamma^{1/3}$ blow-up in the regret bound.

\xhdr{Experimental setup.}
We seek to characterize \PoIE across a wide range of problem instances. The problem instance here consists of a behavioral model (as per \Cref{sec:model-agents}) and a bandit instance $(\mu1,\mu_2)$. \PoIE can be numerically estimated for a particular problem instance and a particular $P$ by simulating the \FDP many times.%
\footnote{We used 100K iterates for \Cref{fig:expt-PoIE} and 20K iterates for \Cref{fig:expt-smallest-PoIE}.}
 The challenge is to reduce the ``search space" -- the number of problem instances considered.

We'll need some notation. Recall that $N_{t,a}$ and $\bar{\mu}_{t,a}$ denote the number of pulls and the empirical mean reward of arm $a$ in subhistory seen by agent $t$. Also, let $\UCB_{t,a}$ and $\LCB_{t,a}$ denote upper and lower confidence bounds allowed by \refeq{eq:ass:embehave} in our behavioral model, \ie
    $\bar{\mu}^t_a \pm \estC/\sqrt{N_{t,a}}$.

We focus on a particular ``shape" of problem instances. First, whenever a given arm $a$ is in the ``grey area", in the sense that $N_{t,a}\in [1, \estN)$, we posit that its reward estimate is not too small:
    $\bar{\mu}^t_a \geq \min\rbr{\nicefrac{1}{3}, \LCB_{t,a}}$.
Second, the mean rewards $\mu_1,\mu_2$ are in the range
    $[\nicefrac{1}{2}-\eps,\, \nicefrac{1}{2}+\eps]$
for some $\eps\in \sbr{0, \nicefrac{1}{2}}$. We call such problem instances \emph{\eps-well-formed}.

Within such problem instances, we identify the worst-case as far as \PoIE is concerned. Specifically, we skew everything for arm $1$ and against arm $2$:
\begin{OneLiners}
\item $\bar{\mu}^t_1=\UCB_{t,1}$ if $N_{t,1}\geq \estN$ and $\bar{\mu}^t_a=1$ otherwise;
\item $\bar{\mu}^t_2=\LCB_{t,1}$ if $N_{t,2}\geq \estN$ and $\bar{\mu}^t_a=\min\rbr{\nicefrac{1}{3}, \LCB_{t,a}}$ otherwise;
\item if $\bar{\mu}^t_1 = \bar{\mu}^t_2$, then arm $1$ is chosen;
\item $\mu_1 = 1+\eps$ and $\mu_2 = 1-\eps$.
\end{OneLiners}
Call this a \emph{canonical} problem instance with gap $2\eps$ (and parameters $\estN$ and $\estC$). It is indeed a worst-case for \PoIE, as per the following lemma (proved in \Cref{app:expts-proof}).

\begin{lemma}\label{lem:expts-canon}
Fix parameters $\estN,\estC$, path length $P$, and  $\eps\in \sbr{0, \nicefrac{1}{2}}$. Let $\mI$ be any $\eps$-well-formed problem instance with these parameters such that $N_1^P(\mI)\geq N_2^P(\mI)$. Let $\mI'$ be the corresponding canonical problem instance with gap $2\eps$. Then
    $\PoIE(\mI)\leq \PoIE(\mI')$.
\end{lemma}

\begin{figure*}[t]
\includegraphics[width=.5\textwidth]{\PATH 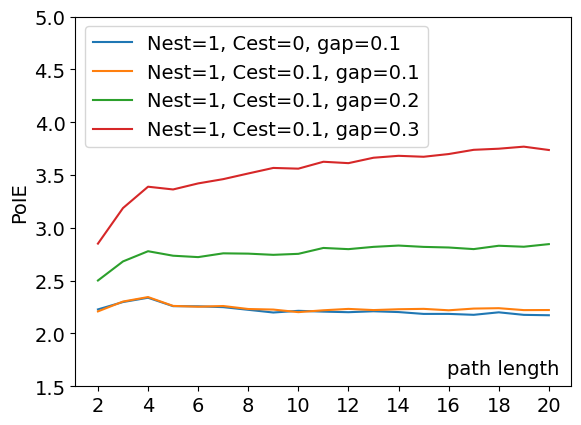}
\includegraphics[width=.5\textwidth]{\PATH 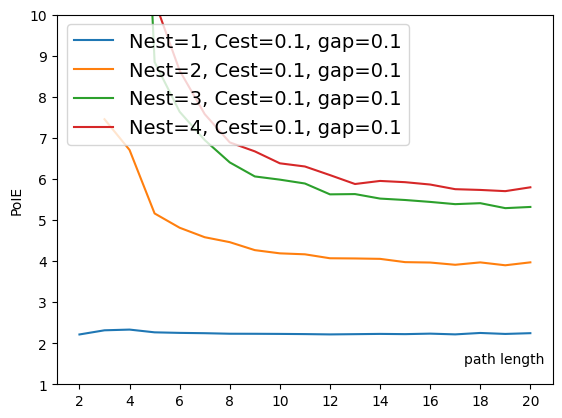}
\caption{PoIE as the path length grows: $\estN=1$ (left) and $\estN\in\cbr{1,2,3,4}$ (right).}
\label{fig:expt-PoIE}
\end{figure*}


\begin{wrapfigure}[13]{R}{0.5\textwidth}
  \begin{center}
\vspace{-6mm}
\includegraphics[width=.5\textwidth]{\PATH 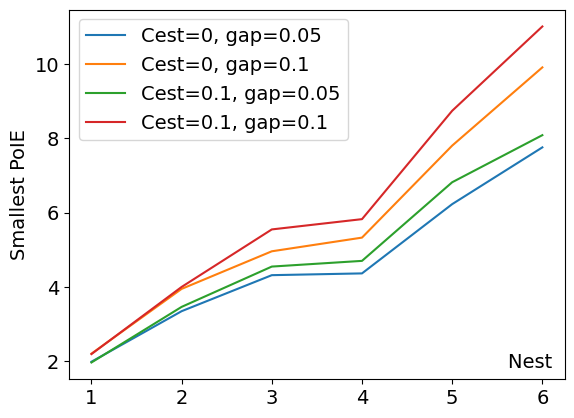}
  \end{center}
  \vspace{-5mm}
\caption{Smallest PoIE as $\estN$ grows.}
\label{fig:expt-smallest-PoIE}
\end{wrapfigure}

\xhdr{Experimental results.}
We focus on canonical problem instances defined above. We have several parameters: $\estN$, $\estC$, gap, and path length $P$. Note that $\estC=0$ corresponds to unbiased reward estimates, upon receiving enough at least $\estN$ samples. Recall that the paradigmatic case of our model, identified in Example~\ref{ex:paradigmatic}, posits $\estN=1$ and $\estC=0$.

In \Cref{fig:expt-PoIE}, we investigate how $\PoIE$ changes when $P$ grows, fixing everything else. The left-side plot focuses on $\estN=1$, whereas the right-side plot considers $\estN\in\cbr{1,2,3,4}$. We interpret $\text{gap}=0.1$ as ``already quite large" for multi-armed bandits, and $\estC=0.1$ as ``large enough" for our behavioral model. We observe that \PoIE is fairly stable as $P$ changes in the near-optimal region, and that the good/optimal values of $P$ are fairly small.

In \Cref{fig:expt-smallest-PoIE}, we optimize $P$ for a particular problem instance (obtaining the \emph{smallest} \PoIE), and investigate the change as $\estN$ grows. Each curve on this plot specifies the two remaining parameters, $\estC$ and $\text{gap}$. We plot 4 curves, corresponding to
       $\rbr{\estC,\, \text{gap}} \in \cbr{0,\, .1}\times\cbr{.05,\,.1}$.

From both figures, we conclude that $\estN$ is the crucial parameter for \PoIE, loosely corresponding to the strength of agents' beliefs. \PoIE is acceptable when $\estN$ is small, but tends to grow quickly as $\estN$  increases. The latter is consistent with the lower-bound result in \citet{Selke-PoIE-ec21} for BIC incentivized exploration. In that result, agents are endowed with a Bayesian prior based on $M$ data points, where $M$ is interpreted as the strength of beliefs.%
\footnote{Specifically, \citet{Selke-PoIE-ec21} consider conjugated Gaussian priors (resp., Beta-Bernoulli priors) obtained by conditioning the standard Gaussian prior (resp., a uniform prior) on $M$ data points.}
Then the smallest number of rounds needed to sample both arms is (at least) exponential in $M$.

\section{Conclusions}
\label{sec:conclusions}
We reformulate the problem of incentivized exploration as that of designing a fixed communication network for social learning. The new model substantially mitigates trust and rationality assumptions inherent in prior work on \OldProblem. We achieve optimally efficient social learning, in terms of how regret rate depends on the time horizon $T$. We do not restrict ourselves by the design choices adopted in the current recommendation platforms; instead, we hope to inform and  influence the designs in the future.

We start with a two-level communication network which is very intuitive and robust to misspecifications. The idea of splitting (some of) the early arrivals into many isolated ``focus groups" is plausibly practical. This construction implements the explore-then-commit paradigm from multi-armed bandits, and achieves vanishing regret. We obtain optimal regret rate via a more intricate, multi-level communication network. The conceptual challenge here is to make exploration optimally adaptive to past observation, despite the ``greedy" behavior of the agents. This requires intermediate ``group aggregators'' and information-sharing between groups, another plausibly actionable suggestion for recommendation platforms.

One could consider several well-motivated extensions of our model. However, they are not well-understood even in the BIC version, and arguably should first be resolved therein. To wit, one could
(i) allow reward-dependent biases in agents' reporting of the rewards.
(ii) allow long-lived agents that strive to optimize their long-term utility,
(iii) consider heterogenous agents, with public or private idiosyncratic signals, and
(iv) posit some unavoidable information leakage among the agents, \eg according to a pre-specified social network.
The first two extensions have not been studied even in the BIC version; the other two have been studied for \OldProblem, but are not yet well-understood.


\OMIT{ 
Another promising direction concerns further study of order-based disclosure policies, interpreted as a stylized model of social learning. Here, each agent faces one round of a multi-armed bandit problem, and the partial order specifies communication patterns between the agents. Rather than design order-based disclosure policies, let us characterize ones that admit (somewhat) efficient social learning, \eg ones that avoid herding.
} 



\bibliographystyle{plainnat}
\bibliography{bib-abbrv,bib-AGT,bib-bandits,bib-ML,bib-random,bib-competition,bib-slivkins,bib-nsi}

\newpage
\renewcommand{\contentsname}{ONLINE APPENDICES}
\tableofcontents

\crefalias{section}{appendix}
\crefalias{subsection}{appendix}

\begin{appendices}

\addtocontents{toc}{\setcounter{tocdepth}{1}} 

\newpage
\section{Prior work: greedy bandits and social learning}
\label{sec:discussion}

We present a more detailed discussion of some of the prior work, as alluded in Section~\ref{sec:related-work}.


\xhdr{Full disclosure and herding.}
The full-disclosure policy in \OldProblem reduces to the ``greedy" bandit algorithm which exploits in each round given the full history. Its herding effects are most lucidly summarized by focusing on the case of two arms and Bernoulli rewards.

For the Bayesian version, one has a Bayesian prior on the arm's mean rewards $(\mu_1,\mu_2)$, and the greedy algorithm chooses an arm with a largest posterior mean reward. Suppose arm $1$ is preferable according to the prior, \ie
    $\E[\mu_1-\mu_2]>0$.
Then the algorithm never tries arm $2$ with probability at least $\E[\mu_1-\mu_2]$.
This result holds for an arbitrary Bayesian prior, possibly correlated across arms. It implies very high regret (linear in $T$, the number of agents) under additional assumptions, \eg for independent priors with full support and for correlated priors with density bounded away from zero.

A similar but technically different result holds for a frequentist version, where the greedy algorithm chooses an arm with the highest empirical mean. Suppose $\mu_1,\mu_2$ are distinct and bounded away from $0$ and $1$. Assume a ``warm start" such that each arm is pulled $N_0$ times, for some $N_0<(\mu_1-\mu_2)^{-2}$. Then the best arm is never chosen with probability at least $\Omega(1/\sqrt{N_0})$.
Moreover, this result extends to a flexible behavioral model similar to ours, with failure probability that deteriorates exponentially in $\estC$.


These results can be found in \citet{BSL-myopic23}, along with extensions to $K>2$ arms. 



\xhdr{Social learning.}
As mentioned in Section~\ref{sec:related-work}, our work can be interpreted as coordinating \emph{social learning}, as we design a network on which the social learning happens. A large literature on social learning studies agents that learn over time in a shared environment, with no principal to coordinate them. A prominent topic is the  presence or absence of herding phenomena.  Models vary across several dimensions, to wit:
how an agent acquires new information;
which information is transmitted to others;
what is the structure / properties of the communication network;
whether agents are long-lived or only act once;
whether they optimize rewards (via Bayesian rationality or frequentist behavior), or merely follow a rule-of-thumb.
Below we discuss several directions in social learning that are most relevant, and point our the major differences compared to our model.

First, ``sequential social learning" posits that agents observe private signals, but only the chosen actions of neighbors are observable in the future; see \cite{Golub-survey16} for a survey.
While the early work focuses on a complete communication network,%
\footnote{See, for example, \cite{Banerjee-qje92,Welch92,Bikhchandani-jpe92}, as well as a very general result in \citet{SmithSorensen-econometrica00}.}
further work considers the impact of the network topology.  In particular, \cite{acemoglu2011bayesian} and \cite{lobel2015information} show that in a perfect Bayesian equilibrium, learning happens asymptotically if neighborhoods are sufficiently expansive or independent, features echoed in our own constructions.
To contrast these models with ours, consider the social planner that has access to all agents' knowledge and chooses their actions. In sequential social learning, such a planner only needs to choose the best action given the previous agents' signals, \ie only needs to \emph{exploit}, whereas in our model it also needs to \emph{explore}. Also, herding in sequential social learning occurs due to restricted information flow (\ie private signals are not observable in the future), whereas in our model there are no private signals\footnote{This follows from the transitivity of subhistories.} and herding happens even with full disclosure.

Another line of work, starting from \citet{DeGroot74}, posits that agents use ``naive", mechanical rules-of-thumb, \eg form beliefs based on naive averaging of observations.%
\footnote{Our frequentist agents may behave similarly, albeit  with more justification (because the subhistories they observe are unbiased and transitive). The original paper of \citet{DeGroot74} and much subsequent work study agents that act repeatedly, updating their beliefs over rounds.} In particular, even naive agents learn asymptotically so long as the network is not too imbalanced \citep[\eg][]{golub2010naive}.
\cite{chandrasekhar2020testing}
show experimentally that such a behavioral model is a good predictor of human behavior in some scenarios.
\cite{dasaratha2019experiment} show similar results for sequential social learning.  Theoretically, \cite{dasaratha2020learning} use this model of naivety to study the question of how to design the social network in a sequential learning model so as to induce optimal learning rates. They observe that silo structures akin to our two-level policy improve learning rates.


Third, ``strategic experimentation'', starting from \citet{Bolton-econometrica99,Keller-econometrica05}, studies long-lived learning agents that observe both actions and rewards of one another; see \citet{Horner-survey16} for a survey. This is similar to our work in that the social planner also solves a version of multi-armed bandits. The main difference is that the agents are long-lived and engage in a complex repeated game where each player deploys an exploration policy but would prefer to free-ride on exploration by others. There are also important technical differences. Agents exactly optimize their Bayesian-expected utility (using the Markov Perfect Equilibrium as a solution concept), whereas we consider a flexible frequentist model. Also, the social-planner problem is a very \emph{different} bandit problem, with Bayesian prior, time-discounting, ``safe" arm that is completely known, and ``risky" arm that follows a stochastic process.

Finally, \citet{Lazer-ASQ07} consider a network of myopic learners that strive to solve the same bandit problem. However, their work differs from ours in many ways. It is motivated by distributed problem solving within an organization (rather than recommendation systems). The communication network is endogenous (rather than designed by the platform). The bandit problem features a deterministic rewards and a large, combinatorially structured action space (rather than randomized rewards and a relatively small, unstructured action space). The agents are long-lived, acting all at once, and retain only their best-observed solution (rather than full history). Several network types are studied via extensive numerical simulations.

\newpage

\section{Additional discussion: beyond our assumptions}
\label{sec:discussion-model}
Let us discuss several realistic concerns that are left beyond the scope of our model.

\xhdr{Engineering concerns.}
While practical implementation is not a primary concern in this paper, our policies can be implemented efficiently as the number of agents grows. The two- and three-level policies are easy to scale, being tree structures, (and also somewhat robust, as discussed in \Cref{sec:robust}). The full-disclosure paths at Level 1 and the aggregation at the top level
are straightforward to implement at scale. Implementing the middle layer (in the three-level policy) only requires keeping track of which full-disclosure paths from Level 1 map to which middle-layer group. All of this only takes $o(T)$ extra space and a constant amount of computation per agent to generate the subhistory when a new agent arrives.

The full-blown multi-level policy, while much more intricate, still admits a scalable centralized implementation. Indeed, the platform needs to store the high-level structure ---  the connections between the groups --- and the two-way mapping between groups and agents. Again, this only takes $o(T)$ extra space. This structure suffices to generate the subhistory when a new agent arrives, by traversing through the high-level structure.

We do not explicitly handle delayed feedback (which may be present) and decentralized implementation (which may be necessary). In practice, datapoints with delayed feedback could simply be omitted from the history. This may be a non-issue for Level 1: assuming agents are assigned to full-disclosure paths in a round-robin-like fashion, all feedback may get submitted by the time it is needed. And even if some feedback is delayed, Level-1 agents may see a long enough subhistory for our analysis of Level 1 to work out. If so, the two-level  and three-level policies should be fine. Moreover, the hierarchical structure of the 2- and 3-level policies makes them easily parallelizable across multiple servers. These considerations are less clear for the multi-level policy.

\xhdr{Inconsistent subhistories.}
Agents from different ``regions" of the info-graph will see different subhistories for the same arm, and these subhistories may be inconsistent with one another. This may be a concern in practice, especially if these agents could easily communicate. We offer several considerations to mitigate this concern:

\begin{enumerate}
\item Substantial inconsistencies in average scores will be rare (because of concentration) once the subhistories get sufficiently long. The number of samples in all our policies grows near-uniformly over time, without big jumps. Slight inconsistencies in average scores (and sample counts) are less likely to be noticed, and they may happen even for full-disclosure policies due to engineering details such as batching and communication delays between servers. If the average scores and sample counts are similar, human users are not likely to notice (or mind) the discrepancy in the actual datapoints.

\item Some of the most blatant inconsistency may be preventable: a recommendation system may figure out which people are most related to each other (\eg family members) and assign them to the same observation path in the info-graph, so that their subhistories are similar and consistent with one another. So, the related people can be assigned to the same full-disclosure path or the same ``group" in our constructions.


\item Some observable inconsistency is arguably a fair fame, since the (principle of) order-based disclosure policy is announced to the public. Each agent individually would plausibly be OK with an unbiased history that comprises a substantial fraction of the full history.



\end{enumerate}

While inconsistent subhistories are less desirable compared to the ideal scenario of a full-disclosure policy, such comparison is arguably unfair (since this policy is very bad from the learning perspective). A more fair comparison would be to the prior work on BIC incentivized exploration, where each agent is only presented with a direct recommendation, with no supporting information. And users would arguably prefer order-based policies due to their transparency.

\xhdr{Arriving arms.} In practice, arms may arrive over time, \eg a new restaurant might open. While this issue is beyond our scope, we note that one can restart the whole mechanism whenever a new arm arrives. The resulting mechanism retains the incentives properties of order-based policies, and only increases regret by a factor of $O(\sqrt{K})$, where $K=\#\text{arms}$.%
\footnote{This easily follows from our $\sqrt{T}$-regret result: letting $T_i$ be the time between the $i$-th and $(i+1)$-th arm arrival, we obtain regret $\sum_{i=1}^K O(f(i) \sqrt{T_i}) \leq f(K) O(\sqrt{KT}) $,
where $f(K)$ is the (exponential) dependence on $K$ in our regret bound. Regret bounds for 2- and 3-level policies can be extended similarly.}
This is reasonable for a ``constant" $K$ (and not a major blow-up as $K$ grows since the dependence on $K$ is already exponential).
While this ``restarting trick" may be undesirable in practice, it is quite common in learning theory.%
\footnote{\Eg a well-known ``doubling trick" starts with an algorithm that has a regret bound for a known time horizon, restarts this algorithm at exponential times $t$ (\eg $t=2^i$, $i\in\N$), and achieves a regret bound that holds for every $t$.}
On the other hand, incorporating new arms without restarts and with roughly uniform accumulation of data over time may require some new ideas.

\section{Additional preliminaries}
\label{sec:pfs-prelims}
Let us spell out additional preliminaries needed for the proofs in the appendices.

We use the standard concentration and anti-concentration inequalities: respectively, Chernoff Bounds and Berry-Esseen Theorem. The former states that
    $\bar{X} = \frac{1}{n}\sum_{i=1}^n X_i$,
the average of $n$ independent random variables $X_1 \LDOTS X_n$, converges to its expectation quickly. The latter states that the CDF of an appropriately scaled average $\bar{X}$ converges to the CDF of the standard normal distribution pointwise. In particular, the average strays far enough from its expectation with some guaranteed probability. The theorem statements are as follows:

\begin{theorem}\label{thm:app-concentration}
Let $X_1 \LDOTS X_n$  be independent random variables, and $\bar{X} = \frac{1}{n}\sum_{i=1}^n X_i$.

\begin{OneLiners}
\item[(a)] \emph{(Chernoff Bounds)} 
Assume $X_i \in [0,1]$ for all $i$. Then
\[
\Pr[ |\bar{X} - \E[\bar{X}]| > \varepsilon] \leq 2\exp(-2n\varepsilon^2).
\]

\item[(b)] \emph{(Berry-Esseen Theorem)} 
Assume $X_1 \LDOTS X_n$ are i.i.d., with
    $\sigma^2 := \E\rbr{ (X_1 - \E[X_1])^2}$
and
    $\rho := \E\rbr{ |X_1 - \E[X_1]|^3} <\infty$.
Let $F_n$, $\Phi$ be the cumulative distribution functions of, resp.,
$(\bar{X} - \E[\bar{X}]) \frac{\sqrt{n}}{\sigma}$ and the standard normal distribution. Then
$|F_n(x) - \Phi(x) | \leq \rho/\rbr{2\sigma^3\sqrt{n}}$
for all $x\in\R$.

\end{OneLiners}
\end{theorem}

We use the notion of \emph{reward tape} to simplify the application of (anti-)concentration inequalities. This is a $K\times T$ random matrix with rows and columns corresponding to arms and rounds, respectively. For each arm $a$ and round $t$, the value in cell $(a,t)$ is drawn independently from Bernoulli distribution $\mD_a$. W.l.o.g., rewards in our model are defined by the rewards tape: namely, the reward for the $j$-th pull of arm $a$ is taken from the $(a,j)$-th entry of the reward matrix.

\section{The two-level policy (Theorem~\ref{thm:2level})}
\label{sec:pfs-2level}

We state two lemmas (which are also used to analyze the three- and multi-level policies). First, full-disclosure paths sample each arm with constant probability.

\begin{lemma}\label{lem:greedy}
There exist numbers $\fdpL>0$ and $\fdpP>0$ that depend only on $K$, the number of arms, with the following property. Consider an arbitrary disclosure policy, and let $S\subset [T]$ be a full-disclosure path in its info-graph, of length $|S|\geq \fdpL$. Under Assumption \ref{ass:embehave}, with probability at least $\fdpP$, it holds that subhistory $\SubH{S}$ contains at least one sample of each arm $a$.
\end{lemma}

\begin{proof}
  Fix any arm $a$. Let $\GdT = (K-1) \cdot \estN + 1$ and
  $\GdP = (1/3)^{\GdT}$. We will condition on the event that
    all the realized rewards in $\GdT$ rounds are 0, which occurs with
    probability at least $\GdP$ under Assumption~\ref{ass:embehave}.
  In this case, we want to show that arm $a$ is pulled at least
  once. We prove this by contradiction. Suppose arm $a$ is not pulled. By
  the pigeonhole principle, we know that there is some other arm $a'$
  that is pulled at least $\estN + 1$ rounds. Let $t$ be the round in
  which arm $a'$ is pulled exactly $\estN + 1$ times. By Assumption
  \ref{ass:embehave},
  $  \hat{\mu}_{a'}^t \leq 0 + \estC / \sqrt{\estN}  \leq \estC < 1/3$.
  On the other hand, we have
  $\hat{\mu}_a^t \geq 1/3 > \hat{\mu}_{a'}^t$. This contradicts 
  the fact that in round $t$, arm $a'\neq a$ is pulled.
\end{proof}

The second lemma concerns $\fdpN$, the expected number of samples of a given arm $a$ collected by a full-disclosure path of length $\fdpL$. It is a simple corollary of Theorem~\ref{thm:app-concentration}(a).

\begin{lemma}\label{lem:t1runs}
Suppose the info-graph contains $T_1$ full-disclosure paths of $\fdpL$ rounds each. Let $N_a$ be the number of samples of arm $a$ collected by all paths. Then
  \[
   \Pr\sbr{ | N_a - \fdpN T_1| \leq \fdpL\cdot \sqrt{T_1 \log(2K/\delta) / 2}
    \quad\text{for all arms $a\in \A$}} \geq 1-\delta.
  \]
\end{lemma}

We are now ready to prove Theorem~\ref{thm:2level}. We will set $T_1$ later in the proof, depending on whether the gap parameter $\Delta$ is known. For now, we just need to know we will
  make $T_1 \geq \frac{4(\GdT)^2}{(\GdP)^2}\log(T)$. Since this policy is
  agnostic to the indices of the arms, we assume w.l.o.g. that arm 1
  has the highest mean.

  The first $T_1 \cdot \GdT$ rounds will get total regret at most
  $T_1 \cdot \GdT$.  We focus on bounding the regret from the second
  level of $T - T_1 \cdot \GdT$ rounds. We consider the following two
   events. We will first bound the probability that both of them
  happen and then we will show that they together imply upper bounds
  on $|\hat{\mu}^t_a - \mu_a|$'s for any agent $t$ in the second
  level. Recall $\hat{\mu}^t_a$ is the estimated mean of arm $a$ by
  agent $t$ and agent $t$ picks the arm with the highest
  $\hat{\mu}^t_a$.

  \OMIT{\paragraph{Concentration of the number of arm $a$ pulls in the first
    level.}
By Lemma \ref{lem:greedy}, we know $\GdP \leq \fdpN \leq \GdT$.}
  Define $W_1^a$ to be the event that the number of arm $a$ pulls in
  the first level is at least $\fdpN T_1- \GdT \sqrt{T_1\log(T)}$.
  As long as we set
    $T_1 \geq 4\rbr{\GdT/\GdP}^2 \log(T)$,
    this implies that the number of arm $a$ pulls is then at least
    $\fdpN T_1/2$.
\OMIT{  By Chernoff bound,
  \[
    \Pr[W_1^a] \geq 1-\exp(-2\log(T)) \geq 1-1/T^2.
  \]
}
Let $W_1 = \bigcap_{a}W_1^a$ be the intersection of all these events. By Lemma~\ref{lem:t1runs}, we have
$\Pr[W_1] \geq 1- \frac{K}{T^2} \geq 1 - \tfrac{1}{T}.
$

\OMIT{\paragraph{Concentration of the empirical mean of arm $a$ pulls
    in the first level.}}  Next, we show that the empirical mean of
each arm $a$ is close to the true mean. To facilitate our reasoning,
let us imagine there is a tape of length $T$ for each arm $a$, with
each cell containing an independent draw of the realized reward from
the distribution $\cD_a$. Then for each arm $a$ and any $\z\in [T]$, we
can think of the sequence of the first $\z$ realized rewards of $a$
coming from the prefix of $\z$ cells in its reward tape. Define
$W^{a,\z}_2$ to be the event that the empirical mean of the first $\z$
realized rewards in the tape of arm $a$ is at most
$\sqrt{\frac{2\log(T)}{\z}}$ away from $\mu_a$. Define $W_2$ to be the
intersection of these events (i.e.  $\bigcap_{a,\z\in[T]} W^{a,\z}_2$).  By
Chernoff bound,
\[
\Pr[W^{a,\z}_2] \geq 1 - 2\exp(-4\log(T)) \geq 1-2/T^4.
\]
By union bound,
$
\Pr[W_2] \geq 1 - KT \cdot \frac{2}{T^4} \geq 1 - \frac{2}{T}.
$

By union bound, we have $\Pr[W_1 \cap W_2] \geq 1 - 3/T$. For the
remainder of the analysis, we will condition on the event
$W_1 \cap W_2$. Denote $\Lambda = \fdpN T_1/2$ for brevity.

Fix arm $a$ and agent $t$ in the second level. By events $W_1$ and $W_2$, we have
    $|\bar{\mu}^t_a - \mu_a| \leq \sqrt{2\log(T)/\Lambda}$.
By $W_1$ and Assumption \ref{ass:embehave}, we have
    $|\bar{\mu}^t_a - \hat{\mu}^t_a| \leq \estC/\sqrt{\Lambda}$.
Therefore,
\[
|\hat{\mu}^t_a - \mu_a|\leq \sqrt{2\log(T)/\Lambda}+\estC/\sqrt{\Lambda}
    \leq 3 \Phi,
    \quad\Phi :=\sqrt{\log(T)\,/ \, \rbr{\GdP T_1}}.
\]
So the second-level agents will pick an arm $a$ whose mean reward is at most $6\Phi$ away from the best arm.
To sum up, the total regret is at most
$T_1 \cdot \GdT + T \cdot (1-\Pr[W_1 \cap W_2]) + T \cdot  6 \Phi$.
By setting $T_1 = T^{2/3}\log(T)^{1/3}$, we get regret $O(T^{2/3}\log(T)^{1/3})$.

\section{Analysis of Example~\ref{ex:robust-global}}
\label{sec:robust-global}
We consider three events, denoted $\mE_1$, $\mE_2$, $\mE_3$. Event $\mE_1$ is that after the first $N_1=2$ rounds, arm 1 has empirical mean at most  $\mu' < \mu_2$ and arm 2 empirical mean at least $\mu_2$. (The proof can work for other constant $N_1$, too.) We pick $\mu'$ such that $\mu_2-\mu' =\Omega(1)$.  Event $\mE_2$ focuses on the next $N-N_1$ rounds. It asserts that arm $2$ is the only one chosen in these rounds, and the empirical mean in any prefix of these rounds is at least $\mu_2$. Event $\mE_3$ is that the last $T-N$ agents all choose arm 2.

We lower-bound $\Pr[\mE_1,\mE_2,\mE_3]$ by a positive constant by considering $\Pr[\mE_1]$, $\Pr[\mE_2\mid \mE_1]$ and $\Pr[\mE_3\mid  \mE_1, \mE_2]$. First, $\mE_1$ happens with a constant probability as arm 1 getting 0 in its first pull and arm 2 getting 1 in its first pull is a sub case of $\mE_1$.

Now we condition on $\mE_1$ happening. We show that $\mE_2$ happens with a positive-constant probability. We focus on the case when the first $N_2$ pulls of arm 2 in rounds $\{ N_1+1 \LDOTS N \}$
are all 1's for some large enough constant $N_2$ and then use Chernoff bound and union bound on the rest $N-N_1-N_2$ pulls.

\newcommand{\Etape}{\mE_{\mathtt{tape}}}

Now we condition on $\mE_1$ and $\mE_2$.
We consider a ``reward tape" generating rewards of arm 2, where the $t$-th ``cell"  in the tape corresponds to the reward of arm $2$ in round $t$ if this arm is chosen in this round. For each $t > N$, let $C_t$ be the subset of cells in the tape that correspond to rounds $S_t \cap  (N,T]$, where $S_t$ is the set of rounds observable by agent $t$. We can show that with very high probability, the empirical mean over $C_t$ is larger than $\mu'$ for all $t$. Let us focus on this event, call it $\Etape$. We show that under $\Etape$, each agents $t>N$ chooses arm $2$, using induction on $t$. This is because $C_t$, together with the history of the first $N$ rounds, is exactly the subhistory seen by agent $t$, if all agents in round $\{N+1,...,t-1\}$ pull arm 2.

\section{The three-level policy (Theorem~\ref{thm:3level})}
\label{sec:pfs-3level}

\subsubsection*{High-probability events (implied by Lemma~\ref{lem:t1runs})}


\begin{lemma}[Concentration of first-level number of pulls.]\label{3levelw1}
  Let $W_1$ be the event that for all groups $s\in [\NG]$ and arms
  $a\in \{1, 2\}$, the number of arm $a$ pulls in the $s$-th
  first-level group is within $\GdT \sqrt{T_1\log(T)}$ from $\fdpN  T_1$,
  where $\fdpN $ is the expected number of arm $a$ pulls in a \ALGG
  of length $\GdT$. Then $\Pr[W_1] \geq 1- \frac{4\NG}{T^2}$.
\end{lemma}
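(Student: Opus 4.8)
The plan is to reduce Lemma~\ref{3levelw1} to Lemma~\ref{lem:t1runs} applied separately to each of the $\NG$ first-level groups, and then take a union bound over groups. Recall that by construction each first-level group consists of exactly $T_1$ full-disclosure paths of length $\fdpL$ each, and these paths are mutually disjoint (and disjoint across groups), so the rewards realized within a single group are independent of those in other groups and of everything else. Thus for a fixed group $s$, Lemma~\ref{lem:t1runs} (with the parameter $T_1$ there set to our $T_1$, and $\delta$ to be chosen) tells us that, with probability at least $1-\delta$, the number $N_a^s$ of arm-$a$ samples collected in group $s$ satisfies $|N_a^s - \fdpN[a]\, T_1| \leq \fdpL\cdot\sqrt{T_1\log(2K/\delta)/2}$ simultaneously for both arms $a\in\{1,2\}$.

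The only remaining work is bookkeeping on constants. We want the deviation bound to read $\GdT\sqrt{T_1\log T}$ rather than $\fdpL\sqrt{T_1\log(2K/\delta)/2}$; since $\fdpL = \GdT$ in the present notation, it suffices to choose $\delta$ so that $\log(2K/\delta)/2 \leq \log T$, equivalently $\delta \geq 2K/T^2$. With $K=2$ arms this gives $\delta = 4/T^2$, and then the event $W_1^s$ that group $s$'s pull counts for both arms lie in the stated interval has $\Pr[W_1^s] \geq 1 - 4/T^2$. Finally, $W_1 = \bigcap_{s\in[\NG]} W_1^s$, so a union bound over the $\NG$ groups yields $\Pr[W_1] \geq 1 - 4\NG/T^2$, as claimed. (One should also note in passing that $\fdpN[a]\in[\fdpP,\fdpL]$ by Lemma~\ref{lem:greedy}, though this is not strictly needed for the statement itself, only for its later use.)

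I do not expect any genuine obstacle here: the lemma is essentially a one-line corollary of Lemma~\ref{lem:t1runs} plus a union bound, and the ``hard part'' is merely choosing $\delta$ to match the advertised logarithmic factor and confirming that the disjointness of paths across groups makes the per-group applications of Lemma~\ref{lem:t1runs} legitimate and the events independent (independence is not even required for the union bound, only for any later argument that conditions on $W_1$). If anything needs care, it is making sure the constant $\GdT$ absorbs the $\tfrac{1}{\sqrt 2}$ and the $\log(2K)$ term cleanly for all $T$ above an absolute constant; this is routine.
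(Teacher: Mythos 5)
Your proposal is correct and matches the paper's own argument: both reduce the lemma to Lemma~\ref{lem:t1runs} applied within each first-level group and then take a union bound over the $\NG$ groups (the paper phrases the per-group step as one event per arm-group pair with probability $\geq 1-2/T^2$ each, while you invoke the lemma once per group with $\delta = 4/T^2$; the bookkeeping is equivalent and yields the same $1-4\NG/T^2$ bound).
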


\begin{proof}
  For the $s$-th first-level group, define $W_1^{a,s}$ to be the event
  that the number of arm $a$ pulls in the $s$-th first-level group is
  between $\fdpN T_1- \GdT \sqrt{T_1\log(T)}$ and
  $\fdpN T_1 + \GdT \sqrt{T_1\log(T)}$. By Lemma~\ref{lem:t1runs},
    $\Pr[W_1^{a,s}] \geq 1-2\,e^{-2\log T} \geq 1-2/T^2$.
By union bound, the intersection
    $\bigcap_{a,s}W_1^{a,s}$,
has probability at least
    $1- \frac{4\NG}{T^2}$.
\end{proof}

To state the events, it will be useful to think of a
hypothetical reward tape $\cT^1_{s, a}$ of length $T$ for each
group $s$ and arm $a$, with each cell independently sampled from
$\cD_a$.  The tape encodes rewards as follows: the $j$-th time arm $a$
is chosen by the group $s$ in the first level, its reward is taken
from the $j$-th cell in this arm's tape. The following result
characterizes the concentration of the mean rewards among all
consecutive pulls among all such tapes, which follows from Chernoff
bound and union bound.

\begin{lemma}[Concentration of empirical means in the first level]\label{3levelw2}
  For any $\z_1, \z_2\in [T]$ such that $\z_1 < \z_2$, $s\in [\NG]$, and
  $a\in \{1,2\}$, let $W_2^{s,a,\z_1,\z_2}$ be the event that the mean
  among the cells indexed by $\z_1, (\z_1+1), \ldots, \z_2$ in the tape
  $\cT^1_{a, s}$ is at most $\sqrt{\frac{2\log(T)}{\z_2-\z_1+1}}$ away
  from $\mu_a$.  Let $W_2$ be the intersection of all these events
  (i.e.  $W_2 = \bigcap_{a,s,\z_1,\z_2} W_2^{s,a,\z_1,\z_2}$). Then
  $\Pr[W_2] \geq 1- \frac{4\NG}{T^2}$.
\end{lemma}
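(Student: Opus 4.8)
The plan is to realize $W_2$ as an intersection of polynomially many ``per-window'' events, bound the failure probability of each one by a single application of Chernoff Bounds to the corresponding reward tape, and conclude with a union bound. The crucial structural fact is that each tape $\cT^1_{s,a}$ is, by construction, a sequence of i.i.d.\ $\mathrm{Bernoulli}(\mu_a)$ cells fixed before the disclosure policy is run; consequently, for any pair of \emph{deterministic} indices $\z_1 < \z_2$, the quantity being controlled is the empirical mean of $n := \z_2 - \z_1 + 1$ independent $[0,1]$-valued random variables with expectation $\mu_a$, and this is true irrespective of how the agents actually behave in the first level.

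First I would fix a tuple $(s,a,\z_1,\z_2)$ and apply Chernoff Bounds (the concentration inequality stated in Section~\ref{sec:model}) with deviation parameter $\varepsilon = \sqrt{2\log(T)/n}$, where $n = \z_2 - \z_1 + 1$. This yields
\[
\Pr\big[\,\overline{W_2^{s,a,\z_1,\z_2}}\,\big] \;\le\; 2\exp(-2n\varepsilon^2) \;=\; 2\exp(-4\log T) \;=\; 2/T^4 .
\]
Next I would union-bound over all relevant tuples: there are $\NG$ first-level groups, $2$ arms, and at most $T^2$ index windows with $\z_1 < \z_2$ in $[T]$, for a total of at most $2\NG T^2$ tuples. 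Hence $\Pr[\,\overline{W_2}\,] \le 2\NG T^2 \cdot 2/T^4 = 4\NG/T^2$, i.e.\ $\Pr[W_2] \ge 1 - 4\NG/T^2$, as claimed.

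There is no genuine obstacle here; the argument is essentially bookkeeping, and this is reflected in the one-line remark in the paper that the lemma ``follows from Chernoff bound and union bound.'' The one conceptual point worth stating explicitly is why Chernoff applies cleanly: the reward-tape abstraction decouples the $j$-th sampled reward of arm $a$ in group $s$ from the history-dependent identity of the round in which that reward is actually collected, so each tape is a genuine i.i.d.\ sequence and every index window is a fixed set of coordinates chosen independently of the rewards. A minor loose end is the exact count of windows --- $\binom{T}{2}$, $T^2/2$, or $T^2$ all suffice --- but since the target constant $4$ is not tight, this need not be optimized.
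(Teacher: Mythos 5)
Your proposal is correct and follows exactly the paper's argument: a single Chernoff bound with deviation $\sqrt{2\log(T)/(\z_2-\z_1+1)}$ giving failure probability $2/T^4$ per window, then a union bound over the at most $2\NG T^2$ tuples $(s,a,\z_1,\z_2)$ to obtain $4\NG/T^2$. The additional remark on why the fixed i.i.d.\ reward tape makes Chernoff applicable despite history-dependent agent behavior is the same justification the paper relies on implicitly.
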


\begin{proof}
  By Chernoff bound,
$\Pr[W_2^{s,a,\z_1,\z_2}] \geq 1 - 2\,e^{-4\log T} \geq 1- 2/T^4$.
By union bound, we have
    $\Pr[W_2] \geq 1-  4\NG/T^2$.
\end{proof}

Our policy also relies on the anti-concentration of the empirical
means in the first round. We show that for each arm $a\in \{1, 2\}$,
there exists a group $s_a$ such that the empirical mean of $a$ is
slightly above $\mu_a$, while the other arm $(3 - a)$ has empirical
mean slightly below $\mu_{(3-a)}$. This event is crucial for inducing
agents in the second level to explore both arms when the their
mean rewards are indistinguishable after the first level.

\begin{lemma}[Co-occurence of high and low deviations in this first level]\label{3levelw4}
  For any group $s\in [\NG]$, any arm $a$, let $\tilde\mu_{a,s}$ be the
  empirical mean of the first $\fdpN  T_1$ cells in tape $\cT^1_{a, s}$.
  Let $W_3^{s,a,\text{high}}$ be the event
  $\tilde\mu_{a, s} \geq \mu_a + 1/\sqrt{\fdpN  T_1}$ and let
  $W_3^{s,a,\text{low}}$ be the event that
  $\tilde\mu_{a, s} \leq \mu_a - 1/\sqrt{\fdpN  T_1}$.  Let $W_3$ be the
  event that for every $a\in \{1, 2\}$, there exists a group
  $s_a\in [\NG]$ in the first level such that both $W_3^{s_a,a,\text{high}}$
  and $W_3^{s_a,3-a,\text{low}}$ occur. Then
    $\Pr[W_3]\geq 1 -2 /T$.
\end{lemma}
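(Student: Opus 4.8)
The plan is to reduce $W_3$ to a single anti-concentration estimate and then combine independent copies of it. First I would fix one group $s\in[\NG]$ and one arm $a\in\{1,2\}$ and show that each of $W_3^{s,a,\text{high}}$ and $W_3^{s,a,\text{low}}$ occurs with probability at least an absolute constant $\beta>0$, for $T$ larger than an absolute constant. Observe that $\tilde\mu_{a,s}$ is the average of $n:=\fdpN T_1$ independent draws $X_1,\dots,X_n$ from $\cD_a$, a Bernoulli with mean $\mu_a$ and variance $\sigma_a^2:=\mu_a(1-\mu_a)$, and that $W_3^{s,a,\text{high}}$ is exactly the event that the standardized average $(\tilde\mu_{a,s}-\mu_a)\sqrt n/\sigma_a$ exceeds $1/\sigma_a$. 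Since $\mu_a\in[\tfrac13,\tfrac23]$ we have $\sigma_a\in[\tfrac{\sqrt 2}{3},\tfrac12]$, so the threshold $1/\sigma_a\le 3/\sqrt 2$ is an absolute constant; moreover $\rho_a:=\E|X_1-\mu_a|^3\le\sigma_a^2\le\tfrac14$ while $\sigma_a^3\ge(\tfrac29)^{3/2}$, so the Berry-Esseen error $\rho_a/(2\sigma_a^3\sqrt n)$ is $O(1/\sqrt n)$ with an absolute constant. The Berry-Esseen theorem then yields
\[
\Pr[W_3^{s,a,\text{high}}]\ \ge\ 1-\Phi(1/\sigma_a)-\frac{\rho_a}{2\sigma_a^3\sqrt n}\ \ge\ \big(1-\Phi(3/\sqrt 2)\big)-O(1/\sqrt n),
\]
and since $n=\fdpN T_1\to\infty$ as $T\to\infty$ (recall $\fdpN$ is a positive constant depending only on $K$, by Lemma~\ref{lem:greedy}), for $T$ large enough this is at least $\beta:=\tfrac12\big(1-\Phi(3/\sqrt 2)\big)>0$. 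The lower-tail event $W_3^{s,a,\text{low}}$ is bounded by the same Berry-Esseen estimate applied to the lower tail (equivalently, to $1-X_i$), giving $\Pr[W_3^{s,a,\text{low}}]\ge\beta$ as well.

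The next step is to assemble $W_3$ using independence. Within a fixed group $s$, the event $W_3^{s,a,\text{high}}$ is determined by the tape $\cT^1_{a,s}$ alone and $W_3^{s,3-a,\text{low}}$ by the disjoint tape $\cT^1_{3-a,s}$; since all tape cells are independent, these events are independent, so $\Pr[E_s^a]\ge\beta^2$, where $E_s^a:=W_3^{s,a,\text{high}}\cap W_3^{s,3-a,\text{low}}$. Each $E_s^a$ depends only on the pair $(\cT^1_{1,s},\cT^1_{2,s})$, and distinct groups use disjoint tapes, so the events $(E_s^a)_{s\in[\NG]}$ are independent and $\Pr\big[\bigcap_{s\in[\NG]}\overline{E_s^a}\big]\le(1-\beta^2)^{\NG}$. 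Since $W_3=\bigcap_{a\in\{1,2\}}\bigcup_{s\in[\NG]}E_s^a$, a union bound over the two arms gives $\Pr[\overline{W_3}]\le 2(1-\beta^2)^{\NG}\le 2e^{-\beta^2\NG}$, which is at most $2/T$ once $\NG=2^{10}\log T$ (or any sufficiently large constant multiple of $\log T$).

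The only genuinely nontrivial step is the first one: it calls for a \emph{lower} bound on a tail probability, so Chernoff-type concentration is useless and anti-concentration via Berry-Esseen is essential. The two points to get right there are (i) that the standardized deviation threshold $1/\sigma_a$ is bounded by an absolute constant \emph{uniformly} over $\mu_a\in[\tfrac13,\tfrac23]$ (this is precisely where the restriction $\mu_a\in[\tfrac13,\tfrac23]$ enters), and (ii) that the Berry-Esseen correction is negligible because $n=\fdpN T_1$ is polynomially large in $T$ for the parameter setting of Theorem~\ref{thm:3level}. Everything after that is a routine independence-and-union-bound argument, parallel to the proofs of Lemmas~\ref{3levelw1} and~\ref{3levelw2}.
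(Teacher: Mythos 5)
Your proposal is correct and follows essentially the same route as the paper's proof: a Berry--Esseen anti-concentration lower bound for each high/low event (uniform over $\mu_a\in[\nicefrac13,\nicefrac23]$), independence of $W_3^{s,a,\text{high}}$ and $W_3^{s,3-a,\text{low}}$ within a group because they live on disjoint tapes, independence across the $\NG$ groups, and a union bound over the two arms. The only difference is that you track the constant honestly---the standardized threshold is $1/\sigma_a\in[2,3/\sqrt2]$, so the per-event probability is about $1-\Phi(3/\sqrt2)$ rather than the paper's claimed $(1-\Phi(1/2))>1/4$---which is exactly why your closing caveat that $\NG$ must be a sufficiently large constant multiple of $\log T$ is the right reading of the lemma.
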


\begin{proof}
By Berry-Esseen Theorem (\Cref{thm:app-concentration}(b)) and the fact that 
  $\mu_a \in [1/3,2/3]$, for any arm $a$ it holds that
\[
\Pr\sbr{W_3^{s,a,high}} \geq (1-\Phi(\nicefrac12)) - 5/\sqrt{\fdpN T_1} > \nicefrac14.
\]
The last inequality follows when $T$ is larger than some constant.
Similarly we also have
    $\Pr\sbr{W_3^{s,a,low}} > \nicefrac14$.
Since $W_3^{s,a,high}$ is independent with $W_3^{s,3-a,low}$, we have
\[
\Pr\sbr{W_3^{s,a,high} \cap W_3^{s,3-a,low}}
    =\Pr\sbr{W_3^{s,a,high}} \cdot  \Pr\sbr{W_3^{s,3-a,low}}>(1/4)^2 = 1/16.
\]
Notice that events $W_3^{s,a,high} \cap W_3^{s,3-a,low}$ are independent
across different $s$. By union bound, we have
    $\Pr[W_3] \geq 1- 2(1-1/16)^\NG \geq 1 -2 /T$.
\end{proof}

Lastly, we will condition on the event that the empirical means of
both arms are concentrated around their true means in any prefix of
their pulls. This guarantees that the policy obtains an accurate
estimate of rewards for both arms after aggregating all the data in
the first two levels.


\begin{lemma}[Concentration of empirical means in the first two
  levels]\label{3levelw3}
  With probability at least $1 - \frac{4}{T^3}$, the following event
  $W_4$ holds: for all $a\in \{1, 2\}$ and $\z \in [N_{T, a}]$, the
  empirical means of the first $\z$
  arm $a$ pulls is at most
  $\sqrt{\frac{2\log(T)}{\z}}$ away from $\mu_a$, where $N_{T, a}$ is
  the total number of arm $a$ pulls by the end of $T$ rounds.
\end{lemma}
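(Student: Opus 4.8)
The plan is to reduce the claim to a clean i.i.d.\ concentration bound via the reward-tape device from Section~\ref{sec:model}, so that the randomness in the number of pulls $N_{T,a}$ (which is determined by the agents' choices, and hence correlated with the rewards) causes no difficulty. Recall that, w.l.o.g., the rewards are generated by a fixed reward tape: the reward of the $j$-th pull of arm $a$ over the course of the execution is the $(a,j)$-th cell, an independent draw from $\cD_a$, and all cells are fixed before round~$1$. Consequently, for every arm $a$ and every $\z$, the empirical mean of the first $\z$ pulls of $a$ in the execution is exactly the average of the first $\z$ cells of arm $a$'s tape, no matter when or in which rounds those pulls happen to occur.

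The key steps are then as follows. For each arm $a\in\{1,2\}$ and each integer $\z\in[T]$, let $W_4^{a,\z}$ be the event that the average of the first $\z$ cells of arm $a$'s tape lies within $\sqrt{2\log(T)/\z}$ of $\mu_a$. Since these cells are i.i.d.\ $\{0,1\}$-valued with mean $\mu_a$, the Chernoff bound from Section~\ref{sec:model}, applied with $n=\z$ and $\varepsilon=\sqrt{2\log(T)/\z}$, gives $1-\Pr[W_4^{a,\z}]\le 2\exp(-2\z\varepsilon^2)=2\exp(-4\log T)=2/T^4$. A union bound over the two arms and the at most $T$ values of $\z$ then shows that the event $W_4':=\bigcap_{a\in\{1,2\},\,\z\in[T]} W_4^{a,\z}$ holds with probability at least $1-2T\cdot 2/T^4=1-4/T^3$.

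It remains to observe that $W_4'$ implies $W_4$. The realized count $N_{T,a}$ of arm-$a$ pulls by the end of round $T$ satisfies $N_{T,a}\le T$, so every prefix length $\z\in[N_{T,a}]$ is among the deterministic values already covered by $W_4'$; together with the identity between execution empirical means and tape averages noted above, this is exactly the statement of $W_4$. I do not expect a genuine obstacle here; the one point to handle carefully is precisely what motivates the tape abstraction, namely that one cannot apply the Chernoff bound directly to ``the first $N_{T,a}$ pulls'' since $N_{T,a}$ is a data-dependent count, so the union bound must range over all prefix lengths $\z\in[T]$ fixed in advance. (If one prefers the sharper reading suggested by the lemma's name, the same argument restricted to pulls that occur in the first two levels works verbatim, with $N_{T,a}$ serving as an upper bound on that count.)
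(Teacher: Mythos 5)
Your proposal is correct and matches the paper's own argument: the paper likewise defines a hypothetical reward tape per arm, applies the Chernoff bound with $\varepsilon=\sqrt{2\log(T)/\z}$ to each prefix length $\z$ fixed in advance, and takes a union bound over the two arms and at most $T$ prefix lengths to get failure probability $4/T^3$. Your extra remark about the ``first two levels'' reading versus the realized count $N_{T,a}\le T$ is consistent with how the paper sets up its tape, so there is nothing to fix.
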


\begin{proof}
  For any arm $a$, let's imagine a hypothetical tape of length $T$,
  with each cell independently sampled from $\cD_a$. The tape encodes
  rewards of the first two levels as follows: the $j$-th time arm $a$
  is chosen in the first two levels, its reward is taken from the
  $j$-th cell in the tape. Define $W_4^{a,\z}$ to be the event that the
  mean of the first $t$ pulls in the tape is at most
  $\sqrt{\frac{2\log(T)}{\z}}$ away from $\mu_a$. By Chernoff bound,
\[
\Pr[W_4^{a,\z}] \geq 1 - 2\exp(-4\log(T)) \geq 1- 2/T^4.
\]
By union bound, the intersection
    $W_4$ of all these events
has
    $\Pr[W_4] \geq 1- 4/T^3$.
\end{proof}

Let $W = \bigcap_{i=1}^4 W_i$ be the intersection of all 4
events.  By union bound, $W$ occurs with probability $1-O(1/T)$. Note
that the regret conditioned on $W$ not occurring is at most
$O(1/T) \cdot T = O(1)$, so it suffices to bound the regret conditioned on $W$.

\subsubsection*{Case Analysis}

Now we assume the intersection $W$ of events $W_1,\cdots,W_4$ happens. We will
first provide some helper lemmas for our case analysis.

\begin{lemma}
  For the $s$-th first-level group and arm $a$, define
  $\bar{\mu}_a^{1,s}$ to be the empirical mean of arm $a$ pulls in
  this group. If $W$ holds, then
    $|\bar{\mu}_a^{1,s} - \mu_a| \leq \sqrt{4\log(T)\,/\,\rbr{\fdpN T_1}}$.
\end{lemma}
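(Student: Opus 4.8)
The plan is to combine the first-level pull-count concentration (event $W_1$, Lemma~\ref{3levelw1}) with the empirical-mean concentration on the group-specific reward tapes (event $W_2$, Lemma~\ref{3levelw2}). Fix a first-level group $s\in[\NG]$ and an arm $a\in\{1,2\}$, and write $n$ for the number of times arm $a$ is pulled inside group $s$. By the way the tape $\cT^1_{a,s}$ is coupled to the process, the realized rewards of arm $a$ observed within group $s$ are exactly the first $n$ cells of $\cT^1_{a,s}$, so $\bar{\mu}^{1,s}_a$ is the average of cells $1,\dots,n$ of that tape.

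Next I would lower-bound $n$. Since $W\subseteq W_1$, event $W_1$ gives $n \geq \fdpN T_1 - \GdT\sqrt{T_1\log T}$. Using the standing lower bound $T_1 \geq 4(\GdT)^2/(\GdP)^2\,\log T$ together with $\fdpN \geq \GdP$ (which follows from Lemma~\ref{lem:greedy}, since a single full-disclosure path of length $\GdT$ collects at least one sample of $a$ with probability at least $\GdP$, so its expected number of $a$-samples is at least $\GdP$), the subtracted term satisfies $\GdT\sqrt{T_1\log T}\leq \tfrac12\,\fdpN T_1$, hence $n \geq \tfrac12\,\fdpN T_1$; in particular $n\geq 2$ once $T$ exceeds an absolute constant, so the next step applies.

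Finally, applying $W_2$ with $\z_1=1$ and $\z_2=n$ yields $|\bar{\mu}^{1,s}_a-\mu_a|\leq \sqrt{2\log(T)/n}$, and substituting $n\geq\tfrac12\,\fdpN T_1$ gives $|\bar{\mu}^{1,s}_a-\mu_a|\leq \sqrt{2\log(T)/(\tfrac12\fdpN T_1)}=\sqrt{4\log(T)/(\fdpN T_1)}$, as claimed; this holds on $W$ because $W\subseteq W_1\cap W_2$. (Only the lower half of the $W_1$ interval is needed, since $\sqrt{2\log T/n}$ is decreasing in $n$.) The only step requiring any care — and it is minor — is the factor-of-two slack in the second paragraph, i.e.\ verifying that $T_1$ (set to $T^{4/7}\log^{-1/7}T$ in Theorem~\ref{thm:3level}, and in any case assumed $\geq 4(\GdT)^2/(\GdP)^2\log T$) is large enough that $\GdT\sqrt{T_1\log T}\leq\tfrac12\fdpN T_1$; this is equivalent to $T_1/\log T \geq 4(\GdT/\fdpN)^2$ and follows from $\fdpN\geq\GdP$. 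Everything else is direct substitution.
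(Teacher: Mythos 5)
Your proof is correct and follows essentially the same route as the paper: on $W$, the pull count $n$ in group $s$ is pinned down by $W_1$, the tape-concentration event $W_2^{s,a,1,n}$ gives $|\bar{\mu}_a^{1,s}-\mu_a|\leq\sqrt{2\log(T)/n}$, and the lower bound $n\geq \tfrac12 \fdpN T_1$ (valid for $T$ above a constant, since $\fdpN\geq\GdP$) yields the stated bound. Your extra bookkeeping about only needing the lower end of the $W_1$ interval and about $n\geq 2$ is fine but not a substantive departure.
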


\begin{proof}
  The events $W_1$ and $W_2^{a,s,1,\z}$ for
  $\z = \fdpN T_1- \GdT \sqrt{T_1\log(T)},...,\fdpN T_1 + \GdT
  \sqrt{T_1\log(T)}$ together imply that
\[
|\bar{\mu}_a^{1,s} - \mu_a| \leq \sqrt{\frac{2\log(T)}{\fdpN T_1- \GdT \sqrt{T_1\log(T)}}} \leq \sqrt{\frac{4\log(T)}{\fdpN T_1}}.
\]
The last inequality holds when $T$ is larger than some constant.
\end{proof}

\begin{lemma}
  For each arm $a$, define $\bar{\mu}_a$ to be the empirical mean of
  arm $a$ pulls in the first two levels. If $W$ holds, then
    $|\bar{\mu}_a - \mu_a| \leq \sqrt{4\log(T)\,/\,\rbr{\NG \fdpN T_1}} .$
Furthermore, if there are at least $T_2$ pulls of arm $a$ in the first two levels,
\[
|\bar{\mu}_a-\mu_a| \leq \sqrt{2\log(T)/T_2}.
\]
\end{lemma}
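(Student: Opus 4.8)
The plan is to reuse the argument from the preceding lemma, now applied to the ``first two levels'' reward tape underlying event $W_4$ (Lemma~\ref{3levelw3}), fed with the lower bound on the number of first-level pulls coming from $W_1$ (Lemma~\ref{3levelw1}). Throughout I work on the event $W = \bigcap_{i=1}^4 W_i$.

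First I would lower-bound $\z_0$, the number of arm-$a$ pulls in the first two levels. Since every first-level group sits below the second level in the info-graph, $\z_0$ is at least the total number of arm-$a$ pulls across all $\NG$ first-level groups, and under $W_1$ each such group contributes at least $\fdpN T_1 - \GdT\sqrt{T_1\log(T)}$ of them, so
\[
\z_0 \;\ge\; \NG\left(\fdpN T_1 - \GdT\sqrt{T_1\log(T)}\right).
\]
Using $\fdpN \ge \GdP$ (from Lemma~\ref{lem:greedy}) together with the parameter bound $T_1 \ge 4(\GdT/\GdP)^2\log(T)$, the correction term $\GdT\sqrt{T_1\log(T)}$ is at most $\tfrac12\,\fdpN T_1$ once $T$ exceeds a constant, giving $\z_0 \ge \tfrac12\,\NG\,\fdpN\,T_1$.

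Next I would invoke the prefix-concentration event. By the tape encoding in the proof of Lemma~\ref{3levelw3}, $\bar{\mu}_a$ equals the empirical mean of the first $\z_0$ cells of the arm-$a$ tape for the first two levels; since $\z_0 \le N_{T,a}$, the event $W_4^{a,\z_0}$ (one of the events whose intersection is $W_4 \subseteq W$) yields $|\bar{\mu}_a-\mu_a| \le \sqrt{2\log(T)/\z_0}$. Plugging in $\z_0 \ge \tfrac12\,\NG\,\fdpN\,T_1$ gives the first inequality $|\bar{\mu}_a - \mu_a| \le \sqrt{4\log(T)/(\NG\,\fdpN\,T_1)}$. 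For the ``furthermore'' clause, the hypothesis $\z_0 \ge T_2$ fed into the same event $W_4^{a,\z_0}$ gives $|\bar{\mu}_a - \mu_a| \le \sqrt{2\log(T)/T_2}$ directly.

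I expect no real obstacle here: the only points to handle carefully are the ``$T$ larger than a constant'' step that turns $\fdpN T_1 - \GdT\sqrt{T_1\log(T)}$ into $\tfrac12\,\fdpN T_1$ (which relies on the same lower bound on $T_1$, and on $\fdpN \ge \GdP > 0$, exactly as in the warm-up analysis), and being sure that the prefix index in $W_4$ is the count $\z_0$ of arm-$a$ pulls in the first two levels rather than a global round count. Everything else is routine arithmetic.
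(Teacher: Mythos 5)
Your proposal is correct and follows essentially the same route as the paper: use $W_1$ to lower-bound the number of arm-$a$ pulls contributed by the first level, note that these pulls form a prefix of the arm-$a$ pull sequence, and apply the prefix-concentration event $W_4^{a,\z}$ together with a ``$T$ large enough'' step to absorb the $\GdT\sqrt{T_1\log(T)}$ correction into a factor of $2$. The ``furthermore'' clause is handled exactly as the paper intends, by applying the same event at a prefix of length at least $T_2$.
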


\begin{proof}
The events $W_1$ and $W_4^{a,\z}$ for $\z \geq  (\fdpN T_1- \GdT \sqrt{T_1\log(T)})\NG$ jointly imply
  \[
    |\bar{\mu}_a - \mu_a| \leq \sqrt{\frac{2\log(T)}{\NG\left(\fdpN T_1- \GdT \sqrt{T_1\log(T)}\right)}} \leq \sqrt{\frac{4\log(T)}{\NG \fdpN T_1}} .
\]
The last inequality holds when $T$ is larger than some constant.
\end{proof}

\begin{lemma}\label{lem:luck}
  For the $s$-th first-level group and arm $a$, define
  $\bar{\mu}_a^{1,s}$ to be the empirical mean of arm $a$ pulls in
  this group. For each $a \in \{1,2\}$, there exists a group $s_a$
  such that
\[
\bar{\mu}_a ^{1,s_a} > \mu_a + 1/4\sqrt{\fdpN T_1} \quad \mbox{and} \quad
\bar{\mu}_{3-a} ^{1,s_a} < \mu_{3-a}   - 1/4\sqrt{\fdpN[3-a] T_1}.
\]
\end{lemma}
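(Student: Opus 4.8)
The plan is to derive this purely from the anti-concentration event $W_3$ (Lemma~\ref{3levelw4}) together with the concentration events $W_1$ (Lemma~\ref{3levelw1}) and $W_2$ (Lemma~\ref{3levelw2}); the event $W_4$ plays no role here. Fix an arm $a\in\{1,2\}$. By $W_3$ there is a first-level group $s_a$ for which the prefix mean $\tilde\mu_{a,s_a}$ of the first $\fdpN T_1$ cells of the tape $\cT^1_{a,s_a}$ exceeds $\mu_a+1/\sqrt{\fdpN T_1}$, while the prefix mean $\tilde\mu_{3-a,s_a}$ of the first $\fdpN[3-a] T_1$ cells of $\cT^1_{3-a,s_a}$ is below $\mu_{3-a}-1/\sqrt{\fdpN[3-a] T_1}$. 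The quantity we must bound, $\bar\mu_a^{1,s_a}$, is the mean of the first $N_a^{1,s_a}$ cells of the \emph{same} tape, where $N_a^{1,s_a}$ is the actual number of arm-$a$ pulls in group $s_a$; by $W_1$ it differs from $\fdpN T_1$ by at most $m:=\GdT\sqrt{T_1\log T}$. So the whole task is to transfer the anti-concentration of $\tilde\mu_{a,s_a}$ to $\bar\mu_a^{1,s_a}$, absorbing the error from the at most $m$ mismatched cells.

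I would do this by writing $N_a^{1,s_a}\,\bar\mu_a^{1,s_a}$ as $\fdpN T_1\,\tilde\mu_{a,s_a}$ plus (respectively minus) the sum of the cells lying strictly between indices $N_a^{1,s_a}$ and $\fdpN T_1$, and bounding that sum with the member of the $W_2$ family indexed by exactly that block, which controls its average to within $\sqrt{2\log(T)/m'}$ of $\mu_a$, where $m'\le m$ is the exact number of mismatched cells (for the degenerate cases $m'\in\{0,1\}$ one simply uses the trivial $[0,1]$ bounds). In the case $N_a^{1,s_a}\ge\fdpN T_1$ this yields
\[
\bar\mu_a^{1,s_a}\ \ge\ \mu_a+\frac{\sqrt{\fdpN T_1}-\sqrt{2m'\log T}}{\fdpN T_1+m'}.
\]
Since $\fdpN$ is a positive constant depending only on $K$ (by Lemma~\ref{lem:greedy} a full-disclosure path of length $\GdT=\fdpL$ samples arm $a$ at least once with probability at least $\fdpP$, so $\fdpN\ge\fdpP>0$, and trivially $\fdpN\le\GdT$), the numerator is $\sqrt{\fdpN T_1}-O\!\left(T_1^{1/4}(\log T)^{3/4}\right)\ge\tfrac12\sqrt{\fdpN T_1}$ and the denominator is at most $\fdpN T_1+\GdT\sqrt{T_1\log T}\le 2\fdpN T_1$, both once $T$ exceeds an absolute constant, giving $\bar\mu_a^{1,s_a}\ge\mu_a+\tfrac1{4\sqrt{\fdpN T_1}}$. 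In the case $N_a^{1,s_a}<\fdpN T_1$ the same computation (now subtracting the missing cells) gives the even stronger bound $\bar\mu_a^{1,s_a}\ge\mu_a+\tfrac1{2\sqrt{\fdpN T_1}}$, since one divides by $N_a^{1,s_a}\le\fdpN T_1$ rather than by something larger. The bound $\bar\mu_{3-a}^{1,s_a}<\mu_{3-a}-\tfrac1{4\sqrt{\fdpN[3-a] T_1}}$ follows by the mirror-image argument: start from $\tilde\mu_{3-a,s_a}\le\mu_{3-a}-1/\sqrt{\fdpN[3-a] T_1}$, flip the inequalities, use the trivial upper bound $1$ (in place of the lower bound $0$) on each reward, and invoke the matching $W_2$ event on $\cT^1_{3-a,s_a}$. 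Running this for both $a=1$ and $a=2$ produces the two groups claimed.

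The only genuinely delicate point — and the step I expect to fuss over — is checking that the mismatch term is of strictly lower order than the anti-concentration ``signal''. The number of mismatched cells $m=\GdT\sqrt{T_1\log T}$ is of order $T_1^{1/2}$, larger than the deviation $1/\sqrt{\fdpN T_1}$ we are trying to preserve; what saves us is that these cells enter $\bar\mu_a^{1,s_a}$ only through their \emph{average}, which $W_2$ pins near $\mu_a$, so they shift the numerator by only $\sqrt{2m\log T}=\Theta\!\left(T_1^{1/4}(\log T)^{3/4}\right)=o(\sqrt{T_1})$ and inflate the denominator by only $o(T_1)$. One just has to keep the $\log$-factors and the constant $\fdpN$ explicit so that ``$T$ large enough'' means larger than some constant depending only on $K$, consistent with the convention used throughout this section.
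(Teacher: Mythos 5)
Your proposal is correct and follows essentially the same route as the paper: invoke $W_3$ to select the lucky group $s_a$, bound the number of mismatched tape cells by $\GdT\sqrt{T_1\log T}$ via $W_1$, and absorb their contribution using the $W_2$ event on exactly that block, yielding the same bound
$\bar\mu_a^{1,s_a}\geq \mu_a+\bigl(\sqrt{\fdpN T_1}-\sqrt{2m\log T}\bigr)/\bigl(\fdpN T_1+m\bigr)$
that the paper displays. Your explicit split into the cases $N_a^{1,s_a}\gtrless \fdpN T_1$ and the degenerate small-block cases is just a more careful write-up of the paper's single computation, not a different argument.
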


\begin{proof}
Denote $\Psi = \sqrt{T_1\log T}$ for brevity.
  For each $a \in \{1,2\}$, $W_3$ implies that there exists $s_a$ such
  that both $W_3^{s_a,a,high}$ and $W_3^{s_a,3-a,low}$ happen.  The
  events $W_3^{s_a,a,high}$, $W_1$, $W_2^{s_a,a,\z, \fdpN T_1}$
  for $\z = \fdpN T_1- \GdT \Psi+1, ...,\fdpN T_1-1$ and
  $W_2^{s_a,a,\fdpN T_1,\z}$ for
  $\z= \fdpN T_1,...,\fdpN T_1+ \GdT \Psi$ together imply that
\begin{align*}
\bar{\mu}_a ^{1,s_a} &\geq \mu_a + \left(\fdpN T_1 \cdot \frac{1}{\sqrt{\fdpN T_1}} - \GdT \Psi \cdot \sqrt{\frac{2\log(T)}{ \GdT \Psi}} \right) \cdot \frac{1}{\fdpN T_1+ \GdT \Psi}
> \mu_a + \frac{1}{4\sqrt{\fdpN T_1}},
\end{align*}
when $T$ is larger than some constant.
Similarly, we
$\bar{\mu}_{3-a} ^{1,s_a} < \mu_{3-a}   - \tfrac14/\sqrt{\fdpN[3-a] T_1}.$
\end{proof}

Now we proceed to the case analysis.

\begin{proof}[Proof of Lemma~\ref{3levelbigcase} (Large gap case)]
For brevity, denote
    $\Lambda := \rbr{\fdpN[1]T_1}^{-1/2} + \rbr{\fdpN[2]T_1}^{-1/2}$.
Observe that for any group $s$ in the first level, the empirical
  means satisfy
\[
\bar{\mu}_1^{1,s} - \bar{\mu}_2^{1,s}
    \geq \mu_1 -\mu_2 - \Lambda\sqrt{4\log T}
    \geq \Lambda\sqrt{4\log T}.
\]

For any agent $t$ in the $s$-th second-level group, by Assumption \ref{ass:embehave}, we have
\begin{align*}
\hat{\mu}_1^t - \hat{\mu}_2^t
>\bar{\mu}_1^{1,s} - \bar{\mu}_2^{1,s} - \Lambda\estC/\sqrt{2}
\geq
    \Lambda\rbr{\sqrt{4\log T}-\estC/\sqrt{2}}
    > 0.
\end{align*}
Therefore, we know agents in the $s$-th second-level group will all pull arm 1.

Now consider the agents in the third level group. Recall $\bar{\mu}_a$
is the empirical mean of arm $a$ in the history they see. We have
\[
\bar{\mu}_1 - \bar{\mu}_2
    \geq \mu_1 -\mu_2
        -\Lambda\sqrt{4\log(T)/\NG}
    \geq  \Lambda\sqrt{4\log(T)}.[
\]
Similarly as above, by Assumption \ref{ass:embehave}, we know
$\hat{\mu}_1^t - \hat{\mu}_2^t > 0$ for any agent $t$ in the third
level. Therefore, the agents in the third-level group will all pull
arm 1.  \OMIT{Therefore the expected regret is at most
  $\NG \GdT T_1 = O(T^{4/7} \log^{6/7}(T))$.}
\end{proof}

\begin{proof}[Proof of Lemma~\ref{3levelmedium} (Medium gap case)]
For brevity, denote
    $\Lambda := \rbr{\NG\fdpN[1]T_1}^{-1/2} + \rbr{\NG\fdpN[2]T_1}^{-1/2}$.

  Recall $\bar{\mu}_a$ is
  the empirical mean of arm $a$ in the first two levels. We have
\[
\bar{\mu}_1 - \bar{\mu}_2 \geq \mu_1 -\mu_2
    -\Lambda\sqrt{4\log T}
    \geq \Lambda\sqrt{4\log T}.
\]
For any agent $t$ in the third level, by Assumption \ref{ass:embehave}, we have
\begin{align*}
\hat{\mu}_1^t - \hat{\mu}_2^t &>\bar{\mu}_1 - \bar{\mu}_2
    -\Lambda\estC/\sqrt{2}
\geq -\Lambda\rbr{\sqrt{4\log T}-\estC/\sqrt{2}}
 > 0.
\end{align*}
So we know agents in the third-level group will all pull arm 1. \OMIT{Therefore the expected regret is at most
\[
(\NG \GdT T_1 + \NG T_2) \cdot 2\left(\sqrt{\frac{4\log(T)}{\fdpN[1]T_1}}
+ \sqrt{\frac{4\log(T)}{\fdpN[2]T_1}}\right) = O(T^{4/7} \log^{6/7}(T))
\]
}
\end{proof}

\begin{proof}[Proof of Lemma~\ref{3levelsmallcase} (Small gap case)]
We need both arms to be pulled at least $T_2$ rounds
  in the second level. For every arm $a$, consider the $s_a$-th
  second-level group, with $s_a$ given by Lemma~\ref{lem:luck}.
Denote
    $\Lambda := \rbr{\fdpN[1]T_1}^{-1/2} + \rbr{\fdpN[2]T_1}^{-1/2}$.
  We have
\begin{align*}
\bar{\mu}_a^{1,s_a} - \bar{\mu}_{3-a}^{1,s_a}
    &> \mu_a + \frac14/\sqrt{\fdpN T_1} -\mu_{3-a} +\frac14/\sqrt{\fdpN[3-a]T_1}
> \Lambda/4-4\Lambda\sqrt{4\log(T)/\NG}
\geq \Lambda/8.
\end{align*}
For any agent $t$ in the $s_a$-th second-level group, by Assumption \ref{ass:embehave}, we have
\begin{align*}
\hat{\mu}_a^t - \hat{\mu}_{3-a}^t
    &>\bar{\mu}_a^{1,s_a} - \bar{\mu}_{3-a}^{1,s_a}
        -\Lambda\estC/\sqrt{2}
\geq   \Lambda\rbr{\nicefrac{1}{8}-\estC/\sqrt{2}}
 > 0.
\end{align*}
So, we know agents in the $s_a$-th second-level group will all pull arm $a$. Therefore in the first two levels, both arms are pulled at least $T_2$ times. Now consider the third level. We have
\[
\bar{\mu}_1 - \bar{\mu}_2  \geq \mu_1 -\mu_2 - 2\sqrt{2\log(T)/T_2} \geq \sqrt{2\log(T)/T_2}.
\]
Similarly as above, by Assumption \ref{ass:embehave}, we know $\hat{\mu}_1^t - \hat{\mu}_2^t > 0$ for any agent $t$ in the third level. So we know agents in the third-level group will all pull arm 1.\OMIT{
Therefore the expected regret is at most
\[
(\NG \GdT T_1 + \NG T_2) \cdot 2\left(\sqrt{\frac{4\log(T)}{\NG\fdpN[1]T_1}}
+ \sqrt{\frac{4\log(T)}{\NG\fdpN[2]T_1}}\right) \leq O(T^{4/7} \log^{6/7}(T))
\]
}
\end{proof} 

\section{The multi-level policy}
\label{sec:llevel-details}

In this subsection, we analyze our $L$-level policy for $L > 3$, proving Theorems~\ref{thm:llevel-1} and Theorem~\ref{thm:llevel-2}. We first analyze it for the case of $K=2$ arms. The bulk of the analysis, joint for both theorems, is presented in \Cref{sec:Llevel-analysis-joint}. We provide two different endings where the details differ: \Cref{sec:Llevel-analysis-1} and
\Cref{sec:Llevel-analysis-2}, respectively. We extend the analysis to $K>2$ arms in \Cref{sec:Llevel-analysis-K}.

\OMIT{ 
\begin{theorem}
\label{thm:llevel}
The $L$-level recommendation policy gets regret
\[ O\left(T^{2^{L-1}/(2^L-1)} \log^2(T) \right)
\quad\text{for}\quad
L \leq L_{\max} :=  \log\rbr{ \frac{\ln T}{\log \NG^4} }.\]
In particular, if we pick $L = L_{\max}$, the regret is $O(T^{1/2} \polylog(T))$.
\end{theorem}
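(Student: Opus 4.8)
The plan is to follow the template of the three‑level analysis (Sections~\ref{sec:3level-events}--\ref{sec:3level-case}), replacing its single ``check‑point'' level by an induction that propagates control through all $L$ levels. Throughout fix $\NG = 2^{10}\log T$ and, w.l.o.g., let arm $1$ be optimal with gap $\Delta = \mu_1-\mu_2$. First I would pin down the group sizes as $T_l \asymp T^{\alpha_l}\,\polylog(T)$ for $l=1,\dots,L$, where $\alpha_1 = e_L := 2^{L-1}/(2^L-1)$ and $\alpha_l = \tfrac12\,\alpha_{l-1} + e_L$: the arithmetic identity defining $e_L$ is precisely what forces $\alpha_L = 1$, so that the total headcount $\NG^2\GdT\,T_1 + \NG^3\sum_{l\ge2}T_l$ is $\Theta(T)$ (dominated by $T_L$) while, level by level, the regret of the case analysis below telescopes to $T^{e_L}\polylog(T)$ via the identity $\alpha_l - \tfrac12\alpha_{l-1} = e_L$. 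For each level $l\ge2$ I also record the \emph{resolution} $\rho_l := \conf{T_{l-1}}$ (inflated by a $\polylog(T)$ safety factor), the precision with which an agent entering level $l$ can be guaranteed to compare the two arms, since through the interlacing edges such an agent reliably sees $\Omega(T_{l-1})$ samples of each arm; note $\rho_2 \gg \rho_3 \gg \cdots$.

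Next I would set up high‑probability ``good events'' generalizing $W_1,\dots,W_4$ of Section~\ref{sec:3level-events}: (a) concentration of the per‑arm pull counts in every level‑1 $G$‑group (via Lemma~\ref{lem:t1runs}: each collects $\Theta(T_1)$ samples of each arm, regardless of $\Delta$); (b) concentration of empirical means within each level‑1 group across all sub‑ranges of its pulls; (c) the anti‑concentration (``luck'') event --- for every arm $a$ and level $l\in\{1,\dots,L-1\}$ there is an index $v_a^{(l)}$ such that the sub‑history \emph{shared} by all groups of the set $\cG_{l,v_a^{(l)}}$ (observation~(i) and Figure~\ref{fig:llevel-connecting}) makes arm $a$ look $\Omega(1/\sqrt{T_{l-1}})$ better than $\mu_a$ while arm $3-a$ looks $\Omega(1/\sqrt{T_{l-1}})$ worse; and (d) prefix‑concentration of each arm's empirical mean along the global order of its pulls. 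Each follows from a suitable reward tape plus Chernoff / Berry--Esseen and a union bound; since every level has only $\Theta(\NG^2)$ groups, the intersection $W$ of all of these still holds with probability $1-O(1/T)$, so it suffices to bound regret conditioned on $W$.

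The core step is an induction on $l$ showing, conditioned on $W$: (a) if $\Delta \ge \theta_l$ for a threshold $\theta_l = \Theta(\rho_l)$, then every agent in every group of levels $l,l+1,\dots,L$ pulls arm $1$; and (b) if $\Delta$ lies in the band just below $\theta_l$, then at level $l$ both arms are pulled $\Omega(T_l)$ times inside their lucky sets $\cG_{l,v_1^{(l)}}$, $\cG_{l,v_2^{(l)}}$, so that each arm accumulates $\Omega(T_l)$ pulls visible at level $l+1$ and --- because every matching $\Gamma$‑group sees the same history as its $G$‑group and routes its outcomes to levels $\ge l+2$ --- $\Omega(\NG^2 T_l)$ pulls visible at levels $\ge l+2$; this feeds the induction at level $l+1$ (with resolution improved from $\rho_l$ to $\rho_{l+1}$) and supplies at level $l+2$ the extra precision needed for the level‑$(l+1)$ boundary band. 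The base case is Lemmas~\ref{3levelbigcase}--\ref{3levelsmallcase}. The inductive step reuses their two mechanisms: Assumption~\ref{ass:embehave} turns a $\gg\rho_l$ gap in observed empirical means into a strict gap in agents' estimates, so everyone picks the better arm; and event~(c) together with the two structural facts about the interlacing construction (every level‑$(l+1)$ agent observes one representative group from each set $\cG_{l,w}$, and all groups of a fixed set observe identical histories, observation~(ii)) forces a lucky set at level $l$ to produce a whole block of level‑$(l+1)$‑visible pulls of the under‑explored arm.

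Finally I would run a case analysis on $\Delta$, partitioning $[0,1]$ into $O(L)$ bands delimited by the thresholds $\theta_2 > \theta_3 > \cdots$ (and the slightly narrower ``luck thresholds'', whose multiplicative gap from $\theta_l$ is exactly the $\Theta(\NG)$‑wide boundary band handled in the induction). In the bottom band $\Delta$ is small enough that $\reg(T)\le T\Delta \le T^{e_L}\polylog(T)$ outright; in every other band, part~(a) of the propagation lemma confines all regret to a prefix of levels $\{1,\dots,j\}$, whose headcount is $\Theta(T^{\alpha_j}\polylog(T))$ and whose per‑agent loss is at most the band width $\Theta(\rho_j)=\Theta(\conf{T_{j-1}})$, and the identity $\alpha_j - \tfrac12\alpha_{j-1} = e_L$ makes the product $T^{e_L}\polylog(T)$. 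Summing the $O(L)=O(\log\log T)$ bands gives $\reg(T)=O\!\left(T^{e_L}\log^2 T\right)$; at $L=\log(\ln T/\log\NG^4)$ one has $T^{1/(2^L-1)}=\polylog(T)$, hence $T^{e_L}=\sqrt T\cdot\polylog(T)$, which is the ``in particular'' claim. I expect the main obstacle to be the part of the induction handling the boundary bands: unlike the three‑level case, the number of reliable per‑arm samples available to a level‑$l$ agent is pinned down only up to a $\Theta(\NG)$ factor, so a band of \emph{multiplicative} width $\Theta(\log T)$ (rather than $\Theta(\sqrt{\log T})$) opens at each level, and proving that the $\Gamma$‑group amplification closes it --- so that for every $l$ the range just above $\theta_l$ is absorbed by ``everyone pulls arm $1$'' and the range just below by ``$\Omega(\NG^2 T_l)$ reliable samples two levels up'' --- together with checking that the schedule $(\alpha_l)$ keeps all $O(L)$ band‑regrets simultaneously $\le T^{e_L}\polylog(T)$, is the technical heart.
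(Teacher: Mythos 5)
Your proposal follows essentially the same route as the paper's proof: the same choice of group sizes (your recursion $\alpha_l=\tfrac12\alpha_{l-1}+e_L$ reproduces the paper's exponents), the same high-probability events (level-1 pull counts, tape-based concentration, per-level anti-concentration ``luck'' events), and the same two-part induction corresponding to the paper's exploration claim (Claim~\ref{clm:l2_explore}) and exploitation claim (Claim~\ref{clm:l2_exploit}), including the $\Gamma$-group amplification routed two levels up to close the $\Theta(\log T)$-wide boundary bands. The only cosmetic differences are that the paper takes a maximum over the gap bands rather than a sum (the bands are mutually exclusive, so this avoids any extra $O(L)$ factor) and that it states the level-$l$ anti-concentration events conditionally on the realized history $H_l$ as implications of the form ``$n^{l,u,a}\ge T_l$ implies a deviation,'' a detail your ``suitable reward tape'' step would need to make explicit.
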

} 

\subsection{Joint analysis for $K=2$ arms (for both theorems)}
\label{sec:Llevel-analysis-joint}

We rely on the property (which holds in both theorems) that parameters $T_\ell$ satisfy
\begin{align}\label{eq:Llevel-params-assn}
 T_1 \leq \NG^4 \leq T_{\ell}\,/\,T_{\ell-1}
 \quad\text{for}\quad
  \ell\in \{ 2 \LDOTS L-1\},
 \end{align}
Wlog we assume $\mu_1 \geq \mu_2$ as the recommendation policy is symmetric to both arms.

Similarly to the proof of Theorem \ref{thm:3level}, we characterize some ``clean events".

\xhdr{Concentration of the number of arm $a$ pulls in the first level.}
For $a \in \{1,2\}$, define $\fdpN$ to be the expected number of arm $a$ pulls in one run of \ALGG used in the first level. By Lemma \ref{lem:greedy}, we know $\GdP \leq \fdpN \leq \GdT$ For group $G_{1,u,v}$, define $W_1^{a,u,v}$ to be the event that the number of arm $a$ pulls in this group is between $\fdpN T_1- \GdT \sqrt{T_1\log(T)}$ and $\fdpN T_1 + \GdT \sqrt{T_1\log(T)}$. By Chernoff bound,
\[
\Pr\sbr{W_1^{a,u,v}} \geq 1-2\,e^{-2\log T} \geq 1-2/T^2.
\]
Define $W_1$ to be the intersection of all these events (i.e. $W_1 = \bigcap_{a,u,v}W_1^{a,u,v}$). By union bound, we have
    $\Pr[W_1] \geq 1- 4\NG^2/T^2$.

\xhdr{Concentration of empirical mean for arm $a$ in the history observed by agent $t$.}
For each agent $t$ and arm $a$, imagine there is a tape of enough arm $a$ pulls sampled before the recommendation policy starts and these samples are revealed one by one whenever agents in agent $t$'s observed history pull arm $a$.  Define $W_2^{t,a,\z_1,\z_2}$ to be the event that the mean of $\z_1$-th to $\z_2$-th pulls in the tape is at most $\sqrt{\frac{3\log(T)}{\z_2-\z_1+1}}$ away from $\mu_a$. By Chernoff bound,
$\Pr\sbr{W_2^{t,a,\z_1,\z_2}} \geq 1 - 2e^{-6\log T} \geq 1- 2/T^6$.

Define $W_2$ to be the intersection of all these events (i.e. $W_2 = \bigcap_{t,a,\z_1,\z_2} W_2^{t,a,\z_1,\z_2}$). By union bound, we have
    $\Pr[W_2] \geq 1- 4/T^3$.

\xhdr{Anti-concentration of empirical mean of arm $a$ pulls in the $\ell$-th level  for $\ell \geq 2$.}
For $2\leq \ell \leq L-1$, $u\in [\NG]$ and each arm $a$, define $n^{\ell,u,a}$ to be the number of arm $a$ pulls in groups $G_{\ell,u,1},...,G_{\ell,u,\NG}$. Define $W_3^{\ell,u,a,high}$ as the event that $n^{\ell,u,a} \geq T_{\ell}$ implies the empirical mean of arm $a$ pulls in group $G_{\ell,u,1},...,G_{\ell,u,\NG}$ is at least $\mu_a + 1/\sqrt{n^{\ell,u,a}}$. Define $W_3^{\ell,u,a,low}$ as the event that $n^{\ell,u,a} \geq T_{\ell}$ implies the empirical mean of arm $a$ pulls in group $G_{\ell,u,1},...,G_{\ell,u,\NG}$ is at most $\mu_a - 1/\sqrt{n^{\ell,u,a}}$.

Define $H_{\ell}$ to be random variable the history of all agents in the first $\ell-1$ levels and which agents are chosen in the $\ell$-th level. Let $h_{\ell}$ be some realization of $H_{\ell}$. Notice that once we fix $H_{\ell}$, $n^{\ell,u,a}$ is also fixed.

Now consider $h_{\ell}$ to be any possible realized value of $H_{\ell}$. If fixing $H_{\ell}= h_{\ell}$ makes $n^{\ell,u,a}<T_{\ell}$, then $\Pr[W_3^{\ell,u,a,high} |H_{\ell} = h_{\ell}]=1$  If fixing $H_{\ell} = h_{\ell}$ makes $n^{\ell,u,a} \geq T_{\ell}$, by Berry-Esseen Theorem (\Cref{thm:app-concentration}(b)) and the fact that $\mu_a \in [1/3,2/3]$, we have
\[
\Pr\sbr{ W_3^{\ell,u,a,high} \mid H_{\ell}
    = h_{\ell}} \geq (1-\Phi(\nicefrac12)) - 5/\sqrt{T_{\ell}} > \nicefrac14.
\]
Similarly we also have
$\Pr\sbr{W_3^{\ell,u,a,low}\mid H_{\ell} = h_{\ell}}  > \nicefrac14$.

Since $W_3^{\ell,u,a,high}$ is independent with $W_3^{\ell,u,3-a,low}$ when fixing $H_{\ell}$, we have
\[
\Pr\sbr{ W_3^{\ell,u,a,high} \cap W_3^{\ell,u,3-a,low} \mid H_{\ell} = h_{\ell}}  > (\nicefrac14)^2 = 1/16.
\]
Now define $W_3^{\ell,a} = \bigcup_u (W_3^{\ell,u,a,high} \cap W_3^{\ell,u,3-a,low})$. Since  $(W_3^{\ell,u,a,high} \cap W_3^{\ell,u,3-a,low})$ are independent across different $u$'s when fixing $H_{\ell}=h_{\ell}$, we have
\[
\Pr\sbr{ W_3^{\ell,a}\mid H_{\ell}= h_{\ell}} \geq 1- (1-1/16)^\NG \geq 1 - 1/T^2.
\]
Since this holds for all $h_{\ell}$'s, we have $\Pr[W_3^{\ell,a}] \geq 1-1/T^2$. Finally define $W_3 = \bigcap_{\ell,a} W_3^{\ell,a}$. By union bound, we have
    $\Pr[W_3] \geq 1 - 2L/T^2$.

\xhdr{Anti-concentration of the empirical mean of arm $a$ pulls in the first level.}
For first-level groups $G_{1,u,1},...,G_{1,u,\NG}$ and arm $a$, imagine there is a tape of enough arm $a$ pulls sampled before the recommendation policy starts and these samples are revealed one by one whenever agents in these groups pull arm $a$. Define $W_4^{u,a,high}$  to be the event that first $\fdpN T_1 \NG$ pulls of arm $a$ in the tape has empirical mean at least $\mu_a + 1/\sqrt{\fdpN T_1 \NG}$ and define $W_4^{u,a,low}$  to be the event that first $\fdpN T_1\NG$ pulls of arm $a$ in the tape has empirical mean at most $\mu_a - 1/\sqrt{\fdpN T_1\NG }$. By Berry-Esseen Theorem (\Cref{thm:app-concentration}(b)) and the fact that $\mu_a \in [1/3,2/3]$, we have
\[
\Pr\sbr{W_4^{u,a,high}} \geq (1-\Phi(\nicefrac12)) - 5/\sqrt{\fdpN T_1\NG} > \nicefrac14.
\]
The last inequality holds if $T$ exceeds some constant.
Similarly,
    $\Pr\sbr{W_4^{u,a,low}} > \nicefrac14$.

Since $W_4^{u,a,high}$ is independent with $W_4^{u,3-a,low}$, we have
\[
\Pr\sbr{W_4^{u,a,high} \cap W_4^{u,3-a,low}}
    = \Pr\sbr{W_4^{u,a,high}} \cdot  \Pr\sbr{ W_4^{u,3-a,low}}>(1/4)^2 = 1/16.
\]
Now define $W^{a}_4$ as $\bigcup_u (W_4^{u,a,high} \cap W_4^{u,3-a,low})$. Notice that $(W_4^{u,a,high} \cap W_4^{u,3-a,low})$ are independent across different $u$'s. So, we have
    $\Pr[W^{a}_4] \geq 1- (1-1/16)^\NG \geq 1 -1/T^2$.
Finally we define $W_4 := \bigcap_{a} W^{a}_4$. By the union bound,
    $\Pr[W_4] \geq 1- 2/T^2$.

\medskip

Thus, we've defined four ``clean events" $W_1 \LDOTS W_4$ such that their intersection
    $W = \bigcap_{i=1}^4 W_i$
has probability $1-O(1/T)$. Consequently, the event $\neg W$
contributes at most $O(1/T) \cdot T = O(1)$ to the regret. Henceforth we assume $W$ happens.


By event $W_1$, we know that in each first-level group, there are at least $\fdpN T_1- \GdT \sqrt{T_1\log(T)}$ pulls of arm $a$. We prove in the next claim that there are enough pulls of both arms in higher levels if $\mu_1-\mu_2$ is small enough. For notation convenience, we set $\eps_0 = 1$,
    $\eps_1 = \frac{1}{4\sqrt{\fdpN T_1\NG}} + \frac{1}{4\sqrt{\fdpN[3-a] T_1\NG}}$
and $\eps_{\ell} = 1/(4\sqrt{T_{\ell}\NG})$ for $\ell \geq 2$.

\begin{claim}
\label{clm:l2_explore}
For any arm $a$ and level $\ell\in [2,L]$, if $\mu_1 - \mu_2 \leq \eps_{\ell-1}$, then for any $u \in [\NG]$, there are at least $T_{\ell}$ pulls of arm $a$ in groups
    $G_{\ell,u,v}$, $v\in [\NG]$,
and there are at least $T_{\ell}\cdot \NG(\NG-1)$ pulls of arm $a$ in the $\ell$-th level $\Gamma$-groups.
\end{claim}

\begin{proof}
We are going to show that for each level $\ell$ and arm $a$ there exists $u_a$ such that agents in groups
    $G_{\ell,\,u,\,u_a}$
and
    $\Gamma_{\ell,\,u,\,u_a}$, $u\in[\NG]$
all pull arm $a$. This suffices to prove the claim.

We prove the above via induction on the level $\ell$. 
We start by the base case when $\ell=2$. For each arm $a$, event $W_4$ implies there exists $u_a$ such that $W^{u_a,a,high}_4$ and $W^{u_a,3-a,low}_4$ happen. Fix some agent $t$ in groups
    $G_{2,\,u,\,u_a}$
and
    $\Gamma_{2,\,u,\,u_a}$, $u\in[\NG]$.
Events $W_4^{u_a,a,high}$,  $W_1^{a,u_a,v}$ and $W_2$ together imply that, letting
 $\Psi = \NG\cdot \sqrt{T_1\log(T)}$,
\begin{align*}
\bar{\mu}_a ^t
    &\geq \mu_a + \left(\fdpN T_1\NG \cdot \frac{1}{\sqrt{\fdpN T_1\NG}} - \GdT \,\Psi\cdot \sqrt{\frac{3\log(T)}{ \GdT \,\Psi}} \right)
    \cdot \frac{1}{\fdpN T_1\NG+ \GdT \,\Psi}
> \mu_a + \frac{1}{4\sqrt{\fdpN T_1\NG}}.
\end{align*}
The last inequality holds when $T$ is larger than some constant.
Similarly,
\[
\bar{\mu}_{3-a}^t< \mu_{3-a}   -
    \tfrac14\,/\,\sqrt{\fdpN[3-a] T_1\NG}.
\]
Denoting
    $\Lambda = \rbr{\fdpN T_1\NG}^{-1/2} + \rbr{\fdpN[3-a] T_1\NG}^{-1/2}$
for brevity, we have
\begin{align*}
\bar{\mu}^t_a - \bar{\mu}^t_{3-a} &> \mu_a - \mu_{3-a} +
    \Lambda/4
\geq -\eps_1+
    \Lambda/4
\geq
    \Lambda/8.
\end{align*}
By Assumption \ref{ass:embehave}, we have
\begin{align*}
\hat{\mu}_a^t - \hat{\mu}_{3-a}^t
> \bar{\mu}^t_a - \bar{\mu}^t_{3-a} -
       \estC\Lambda/\sqrt{2}
>
    \Lambda/8 - \estC\Lambda/\sqrt{2}
>0.
\end{align*}
The last inequality holds since $\estC$ is a small enough constant from Assumption \ref{ass:embehave}. Therefore, agents in groups
    $G_{2,\,u,\,u_a}$
and
    $\Gamma_{2,\,u,\,u_a}$, $u\in[\NG]$.
all pull arm $a$.

Now we consider the case when $\ell > 2$ and assume the claim is true for all smaller levels. For each arm $a$, event $W_3$ implies that there exists $u_a$ such that events $W^{\ell-1,u_a,a,high}_3$ and $W^{\ell-1,u_a,3-a,low}_3$ happen. Recall $n^{\ell-1,u_a,a}$ is the number of arm $a$ pulls in groups
    $G_{\ell-1,\, u_a,\, v}$, $v\in[\NG]$.
The induction hypothesis implies that $n^{\ell-1,u_a,a} \geq T_{\ell-1}$. Event $W^{\ell-1,u_a,a,high}_3$ together with the fact that $n^{\ell-1,u_a,a} \geq T_{\ell-1}$ implies that the empirical mean of arm $a$ pulls in groups
    $G_{\ell-1,\, u_a,\, v}$, $v\in[\NG]$.
is at least $\mu_a + 1/\sqrt{n^{\ell-1,u_a,a}}$. For any agent $t$ in groups
    $G_{\ell,\, u,\, u_a}$
and
     $\Gamma_{\ell,\, u,\, u_a}$, $u\in[\NG]$
it observes history of groups
    $G_{\ell-1,\, u_a,\, v}$, $v\in[\NG]$
and all groups in levels below $\ell-1$. Notice that in each group in the first $\ell-2$ levels, the number of agents is at most
\[ S := \NG^3\cdot (T_1 \GdT + T_2 + \cdots +T_{\ell-2})
    \leq T_{\ell-1}/(12\log(T)) \leq n^{\ell-1,u_a,a}/(12\log(T)).\]
By event $W_2$, we have
\begin{align*}
\bar{\mu}_a ^t &\geq \mu_a + \left(n^{\ell-1,u_a,a}  \cdot \frac{1}{\sqrt{n^{\ell-1,u_a,a} }}- S\cdot \sqrt{\frac{3\log(T)}{ S}} \right)
\cdot \frac{1}{n^{\ell-1,u_a,a}+ S}
> \mu_a + \frac{1}{4\sqrt{n^{\ell-1,u_a,a}  }}.
\end{align*}
The last inequality holds when $T$ larger than some constant.
Similarly, we prove
\[
\bar{\mu}_{3-a}^t < \mu_{3-a}   -\tfrac14 \,/\, \sqrt{n^{\ell-1,u_a,3-a}}.
\]
Denoting
    $\Lambda = \rbr{n^{\ell-1,u_a,a}}^{-1/2} +\rbr{n^{\ell-1,u_a,3-a}}^{-1/2}$
for brevity, we have
\begin{align*}
\bar{\mu}^t_a - \bar{\mu}^t_{3-a} &> \mu_a - \mu_{3-a}+
    \Lambda/4
\geq -\eps_{\ell-1}
    +\Lambda/4
\geq
   \Lambda/8.
\end{align*}
The last inequality holds because $n^{\ell-1,u_a,a}$ and $n^{\ell-1,u_a,3-a}$ are at most $T_{\ell-1} \NG$. By Assumption \ref{ass:embehave},
\begin{align*}
\hat{\mu}_a^t - \hat{\mu}_{3-a}^t
> \bar{\mu}^t_a - \bar{\mu}^t_{3-a}
        -\estC\Lambda
>  \Lambda/8-\estC\Lambda
>0.
\end{align*}
The last inequality holds since $\estC$ is a small enough constant from Assumption \ref{ass:embehave}.
So, agents in groups $G_{\ell,1,u_a},...,G_{\ell,\NG,u_a}$ and $\Gamma_{\ell,1,u_a},...,\Gamma_{\ell,\NG,u_a}$ all pull arm $a$.
\end{proof}

\begin{claim}
\label{clm:l2_exploit}
Fix any level $\ell\in [2,L]$
and assume $\eps_{\ell-1}\, \NG\leq \mu_1 - \mu_2 < \eps_{\ell-2}\, \NG$. Then arm $2$ is not pulled in groups with level $\ell,...,L$.
\end{claim}

\begin{proof}
We argue in 2 cases $\eps_{\ell-1} \sqrt{\NG} \leq \mu_1 - \mu_2 \leq \eps_{\ell-2}$ for $\ell \geq 2$ and $\eps_{\ell-2}  \leq \mu_1 - \mu_2 \leq \eps_{\ell-2} \sqrt{\NG}$ for $\ell > 2$. Since our recommendation policy's first level is slightly different from other levels, we need to argue case $\eps_{\ell-1} \sqrt{\NG} \leq \mu_1 - \mu_2 \leq \eps_{\ell-2}$ for $\ell=2$ and case $\eps_{\ell-2}  \leq \mu_1 - \mu_2 \leq \eps_{\ell-2} \sqrt{\NG}$ for $\ell =3$ separately. Thus, we have four cases to consider.

\begin{description}
\item[Case 1]  $\eps_{\ell-1} \NG \leq \mu_1 - \mu_2 \leq \eps_{\ell-2}$ for $\ell = 2$ (i.e. $\eps_1\NG\leq \mu_1 - \mu_2 \leq \eps_0$).

    We know agents in level at least 2 will observe at least $\fdpN T_1/2$ pulls of arm $a$ for $a \in \{1,2\}$. By event $W_2$, for any agent in level at least 2, we have
\[
|\bar{\mu}_a^t - \mu_a| \leq
    \sqrt{6\,\log(T)\,/\,(\NG\fdpN T_1)}.
\]
Denoting
    $\Lambda = \rbr{\NG\fdpN[1]T_1/2}^{-1/2} -  \rbr{\NG\fdpN[2]T_1/2}^{-1/2}$,
by Assumption \ref{ass:embehave} we have
\begin{align*}
\hat{\mu}_1^t - \hat{\mu}_2^t &\geq \bar{\mu}_1^t - \bar{\mu}_2^t
    -\estC\Lambda
\geq \mu_1 -\mu_2 - \Lambda\rbr{\sqrt{3\log T}+ \estC}
\geq \Lambda\rbr{ \tfrac{\NG}{4\sqrt{2}} -\sqrt{3\log T} - \estC }
>0.
\end{align*}
Therefore agents in level at least 2 will all pull arm 1.

\item[Case 2]  $\eps_{\ell-1} \NG \leq \mu_1 - \mu_2 \leq \eps_{\ell-2}$ for $\ell > 2$.

By claim \ref{clm:l2_explore}, for any agent $t$ in level at least $\ell$, that agent will observe at least $T_{\ell-1}$ arm $a$ pulls. By $W_2$, we have
\[
|\bar{\mu}_a^t - \mu_a| \leq \sqrt{3\log(T)\,/\,T_{\ell-1}}.
\]
By Assumption \ref{ass:embehave}, we have
\begin{align*}
\hat{\mu}_1^t - \hat{\mu}_2^t
    &\geq \bar{\mu}_1^t - \bar{\mu}_2^t - 2\estC/\sqrt{T_{\ell-1}}
    \geq \mu_1 -\mu_2 - 2 \sqrt{3\log(T)\,/\,T_{\ell-1}}- 2\estC/\sqrt{T_{\ell-1}} \\
    &\geq\sqrt{\frac{\NG}{16T_{\ell-1}}} -  2 \sqrt{\frac{3\log(T)}{T_{\ell-1}}}- \frac{2\estC}{\sqrt{T_{\ell-1}}}
    >0.
\end{align*}
Therefore agents in level at least $\ell$ will all pull arm 1.

\item[Case 3]  $\eps_{\ell-2} < \mu_1 - \mu_2 < \eps_{\ell-2}\NG$ for $\ell =3$ (i.e. $\eps_1 < \mu_1 - \mu_2 < \eps_1\NG$).

    By Claim \ref{clm:l2_explore}, for any agent $t$ in level at least $3$, that agent will observe at least $T_1\fdpN \NG^2/2$ arm $a$ pulls (just from the first level). By event $W_2$, we have
\[
|\bar{\mu}_a^t - \mu_a| \leq \sqrt{\frac{3\log(T)}{\NG^2\fdpN T_1/2}}.
\]
Denoting
    $\Lambda = \rbr{\NG^2\fdpN[1]T_1/2}^{-1/2}+\rbr{\NG^2\fdpN[2]T_1/2}^{-1/2}$
by Assumption \ref{ass:embehave} we have
\begin{align*}
\hat{\mu}_1^t - \hat{\mu}_2^t &\geq \bar{\mu}_1^t - \bar{\mu}_2^t
    -\Lambda\estC
\geq \mu_1 -\mu_2 - \Lambda\rbr{\sqrt{3\log T} - \estC}
\geq \Lambda\rbr{ \sqrt{\NG/2}/4- \sqrt{3\log T} - \estC}
>0.
\end{align*}
Therefore agents in level at least 3 will all pull arm 1.

\item[Case 4]  $\eps_{\ell-2} < \mu_1 - \mu_2 < \eps_{\ell-2}\NG$ for $\ell >3$.

Since $\mu_1-\mu_2 < \eps_{\ell-2}\NG < \eps_{\ell-3}$, by Claim \ref{clm:l2_explore}, for any agent $t$ in level at least $\ell$, that agent will observe at least $T_{\ell-2}\NG^2$ arm $a$ pulls (just from level $\ell-2$). By event $W_2$, we have
\[
|\bar{\mu}_a^t - \mu_a| \leq \sqrt{\frac{3\log(T)}{\NG^2T_{\ell-2}}}.
\]
By Assumption \ref{ass:embehave}, we have
\begin{align*}
\hat{\mu}_1^t - \hat{\mu}_2^t
    &\geq \bar{\mu}_1^t - \bar{\mu}_2^t - \frac{2\estC}{\sqrt{\NG^2T_{\ell-2}}}
    \geq \mu_1 -\mu_2 - 2 \sqrt{\frac{3\log(T)}{\NG^2T_{\ell-2}}}- \frac{2\estC}{\sqrt{\NG^2T_{\ell-2}}} \\
    &\geq\frac{1}{4\sqrt{\NG T_{\ell-2}}} -  2 \sqrt{\frac{3\log(T)}{T_{\ell-1}}}- \frac{2\estC}{\sqrt{T_{\ell-1}}}
    >0.
\end{align*}
Therefore agents in level at least $\ell$ will all pull arm 1.\qedhere
\end{description}
\end{proof}

\subsection{Finishing the proof of Theorem~\ref{thm:llevel-1} for $K=2$ arms}
\label{sec:Llevel-analysis-1}

We set the parameters $T_{\ell}$ for each level $\ell\in \{1 \LDOTS L-1\}$:
\begin{align}\label{eq:Llevel-params}
T_{\ell} = T^{\gamma_\ell}/\NG^3,
\quad\text{where}\quad
\gamma_\ell := \frac{2^{L-1} + 2^{L-2} + \cdots + 2^{L-\ell}}{2^{L-1}+ 2^{L-2} + \cdots + 1}
= \frac{2^L-2^{L-\ell}}{2^L-1}.
\end{align}

\noindent Note
    $T_{\ell} / T_{\ell-1} \geq T^{1/2^L} \geq \NG^4$
as required by \refeq{eq:Llevel-params-assn}.
Level $L$ has all remaining nodes:
\begin{align}\label{eq:Llevel-params-L}
T_L = (T-S)/\NG^3,
    \quad\text{where}\quad
    S:= T_1 \cdot \GdT\cdot \NG^2 - (T_2 + \cdots + T_{\ell-1}) \NG^3.
\end{align}


\noindent By Claim \ref{clm:l2_exploit}, the regret conditioned the intersection of clean events is at most
\begin{align*}
\max\cbr{T_1\GdT \NG^2 \;,\; \
    \max_{\ell \geq 2}
        \eps_{\ell-1}\cdot \NG\cdot S}
\leq  \max\cbr{ T_1\GdT \NG^2 \;,\; \max_{\ell \geq 2} 2 \eps_{\ell-1} T_{\ell} \NG^4 }
= O\left(T^{2^{L-1}/(2^L-1)} \log^2(T) \right).\qedhere
\end{align*}

\subsection{Finishing the proof of Theorem~\ref{thm:llevel-2} for $K=2$ arms}
\label{sec:Llevel-analysis-2}

We set the parameters as follows. The number of levels is $L=\log(T)/\log(\NG^4)$. For each level
$\ell\in \{1 \LDOTS L-1\}$ we have $T_{\ell} = \NG^{4\ell}$.
$T_L$ is defined via \refeq{eq:Llevel-params-L}. Note  these settings satisfy \refeq{eq:Llevel-params-assn}, as required.
\OMIT{ 
\begin{corollary}
\label{cor:llevel}
With the proper setting of $L$ and $T_1,...,T_L$ described above, the $L$-level recommendation policy gets regret $O(\min(1/\Delta, T^{1/2})\polylog(T))$. Here $\Delta = |\mu_1 -\mu_2|$ and the $L$-level recommendation policy does not need to know $\Delta$. Moreover, agent $t$ observes a subhistory of size at least $\Omega( \lfloor t/\polylog(T)\rfloor)$.
\end{corollary}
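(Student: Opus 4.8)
The plan is to specialize the proof of Theorem~\ref{thm:llevel}, whose argument used only two structural facts about the group sizes: $T_1\ge\NG^4$ and $T_l/T_{l-1}\ge\NG^4$ for $2\le l\le L-1$. Both hold here, since $T_l=(\NG^4)^l$ gives $T_l/T_{l-1}=\NG^4$ exactly and $T_1=\NG^4$; moreover $(\NG^4)^L=T$ by the choice $L=\log(T)/\log(\NG^4)$, so $T_{L-1}=T/\NG^4$ and $T_L=\Theta(T/\NG^3)$, i.e.\ a $1-O(1/\NG)$ fraction of all agents sits in the last level. First I would invoke verbatim the four clean events $W_1,\dots,W_4$ and condition on their intersection $W$; as before $\Pr[W]=1-O(1/T)$, so the complement contributes $O(1)$ to regret and it suffices to bound $\reg(T)$ on $W$. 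Assume w.l.o.g.\ $\mu_1\ge\mu_2$, write $\Delta=\mu_1-\mu_2$ (the case $\Delta=0$ is trivial), and recall the decreasing thresholds $\varepsilon_0=1>\varepsilon_1>\varepsilon_2>\cdots$ from that proof, with $\varepsilon_l=1/(4\sqrt{T_l\NG})$ for $l\ge2$. Since $\mu_a\in[\tfrac13,\tfrac23]$ forces $\Delta\le\tfrac13<\NG=\varepsilon_0\NG$, exactly one of the following holds: (i) $\Delta\in[\varepsilon_{l^*-1}\NG,\ \varepsilon_{l^*-2}\NG)$ for some $l^*\in\{2,\dots,L\}$; or (ii) $\Delta<\varepsilon_{L-1}\NG$.

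In case (i), Claim~\ref{clm:l2_exploit} (which itself only needs $W$ and Claim~\ref{clm:l2_explore}) says no agent in levels $l^*,\dots,L$ ever pulls arm~$2$, so the number of arm-$2$ pulls is at most the population of levels $1,\dots,l^*-1$, namely $T_1\GdT\NG^2+\NG^3\sum_{j=2}^{l^*-1}T_j=O(\NG^3 T_{l^*-1})$ by the geometric growth (ratio $\NG^4$) of the per-level counts; hence $\reg(T)=\Delta\cdot O(\NG^3 T_{l^*-1})$. I would then feed in the upper threshold $\Delta<\varepsilon_{l^*-2}\NG$ two ways. For $l^*\ge4$ it reads $\Delta<\tfrac{\sqrt\NG}{4\sqrt{T_{l^*-2}}}$, so $T_{l^*-2}<\NG/(16\Delta^2)$ and $T_{l^*-1}=\NG^4 T_{l^*-2}<\NG^5/(16\Delta^2)$, giving $\reg(T)=O(\NG^8/\Delta)=O(\polylog(T)/\Delta)$; alternatively $\Delta\,\NG^3 T_{l^*-1}=O(\NG^7\Delta\,T_{l^*-2})=O(\NG^{7.5}\sqrt{T_{l^*-2}})=O(\NG^{7.5}\sqrt{T_{L-1}})=O(\NG^{5.5}\sqrt T)=O(\sqrt T\,\polylog(T))$, using $T_{l^*-2}\le T_{L-1}=T/\NG^4$. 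The residual cases $l^*\in\{2,3\}$ are even easier: the population of levels $1,\dots,l^*-1$ is $O(\polylog(T))$, so $\reg(T)=O(\polylog(T))$ outright. In case (ii), $\varepsilon_{L-1}\NG=\tfrac{\sqrt\NG}{4\sqrt{T_{L-1}}}=\tfrac{\NG^{5/2}}{4\sqrt T}$, so the trivial bound $\reg(T)\le T\Delta$ is already $O(\sqrt T\,\polylog(T))$, and since $T\Delta^2<T\cdot\NG^5/(16T)=\NG^5/16$ it is also $O(\polylog(T)/\Delta)$. In every case $\reg(T)=O(\min(1/\Delta,\ T^{1/2})\cdot\polylog(T))$, the bare $O(\polylog(T))$ estimates being absorbed because $\Delta\le\tfrac13$ makes $\min(1/\Delta,T^{1/2})\ge1$.

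For the subhistory-size claim I would argue by levels using the same geometric structure. An agent $t$ in level~$1$ has $t\le T_1\GdT\NG^2=O(\polylog(T))$, so $\lfloor t/\polylog(T)\rfloor=0$ for a suitable polylog and nothing is needed. For $t$ in level $l\ge2$ whose group has second index $v$, the revealed subhistory contains all of $G_{l-1,v,1},\dots,G_{l-1,v,\NG}$, hence at least $\NG\cdot T^{\mathrm{eff}}_{l-1}$ rounds, where $T^{\mathrm{eff}}_1=T_1\GdT$ and $T^{\mathrm{eff}}_j=T_j$ for $j\ge2$; meanwhile the number of rounds preceding $t$ is at most the total over levels $1,\dots,l$, which by geometric growth of the per-level counts (ratio $\Omega(\NG^4)$) is $O(\NG^3 T^{\mathrm{eff}}_l)=O(\NG^7 T^{\mathrm{eff}}_{l-1})$, so $t$ sees $\Omega(t/\NG^6)=\Omega(t/\polylog(T))$ rounds. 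The step I expect to need the most care is keeping the two regret sub-bounds ($1/\Delta$ and $T^{1/2}$) synchronized across the boundary levels $l^*\in\{2,3\}$ (where $\varepsilon_1$ has a slightly different, $T_1$- and $\GdT$-dependent form) and across the sub-threshold regime of case (ii), and in particular verifying that the forced identity $T_{L-1}=T/\NG^4$ is exactly what makes the trivial $T\Delta$ bound collapse to $O(\sqrt T\,\polylog(T))$; everything else is a routine specialization of Theorem~\ref{thm:llevel}.
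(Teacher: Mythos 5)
Your proposal is correct and follows essentially the same route as the paper: it reuses Claims~\ref{clm:l2_explore} and~\ref{clm:l2_exploit} (noting only the ratio conditions $T_1\ge\NG^4$, $T_l/T_{l-1}\ge\NG^4$ are needed), locates $\Delta$ among the thresholds $\varepsilon_{l-1}\NG$, bounds regret by $\Delta$ times the population of the pre-threshold levels using the geometric group sizes to get both the $O(\polylog(T)/\Delta)$ and $O(\sqrt{T}\,\polylog(T))$ forms, handles $\Delta<\varepsilon_{L-1}\NG$ by the trivial $T\Delta$ bound, and derives the subhistory guarantee from the polylogarithmic ratio between adjacent levels' populations. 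Your explicit treatment of the boundary indices $l^*\in\{2,3\}$ (where the paper's chain $T_{l-1}=\NG^4T_{l-2}$ and $T_{l-2}=\Theta(1/(\varepsilon_{l-2}^2\NG))$ is not literally applicable) is a small tightening of the same argument, not a different approach.
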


Notice that in the proof of Theorem \ref{thm:llevel}, by the end of Claim \ref{clm:l2_exploit}, the only constraint we need about $T_{\ell}$'s is that $T_{\ell} / T_{\ell-1} \geq \NG^4$ for $\ell=2,...,L-1$ and $T_1 \geq \NG^4$. And our new settings of $T_{\ell}$'s still satisfy this constraint. So we can reuse the proof of Theorem \ref{thm:llevel} till the end of Claim \ref{clm:l2_exploit}.
}
%
Recall from \Cref{sec:Llevel-analysis-joint} that $\eps_{\ell} =\Theta(1/\sqrt{T_{\ell} \NG})$ for $\ell \in [L-1]$ and $\eps_0 = 1$.
Then if $\Delta < \eps_{L-1} \NG$, notice that even always picking the sub-optimal arm gives expected regret at most $T(\mu_1-\mu_2) = T\Delta = O(T^{1/2} \polylog(T))$. On the other hand, $T^{1/2} = O(\polylog(T)/\Delta)$.
    So, regret is $O(\min(\nicefrac{1}{\Delta}, T^{1/2})\polylog(T))$.
Otherwise $\Delta \geq \eps_{L-1} \NG$. In this case, we can find $\ell \in \{2,...,L\}$ such that $\eps_{\ell-1} \NG\leq \Delta < \eps_{\ell-2} \NG$. By Claim \ref{clm:l2_exploit}, we can upper bound the regret by
\begin{align*}
&\Delta \cdot \rbr{T_1 \GdT \NG^2  +T_2 \NG^3+ \cdots T_{\ell-1} \NG^3}
=O\rbr{\NG^3\Delta T_{\ell-1}}
=O\rbr{\NG^7\Delta T_{\ell-2}} \\
&\qquad=O\rbr{\NG^6\Delta\cdot  \eps^{-2}_{\ell-2}}
=O\rbr{\NG^8\Delta \cdot \Delta^{-2}}
=O\rbr{\polylog(T)/\Delta}.
\end{align*}
We also have $1/\Delta \leq 1/(\eps_{L-1}\NG)= O(T^{1/2})$.
So, regret is $O(\min(\nicefrac{1}{\Delta}, T^{1/2})\;\polylog(T))$.

Finally to analyze subhistory sizes, note that agents in level $\ell$ observe the history of all agents at or below level $\ell-2$. Furthermore, the ratio between the number of agents below level $\ell$ and the number of agents below level $\ell-2$ is bounded by $O(\polylog(T))$, implying the result.


\subsection{Extending the analysis to $K>2$ arms.}
\label{sec:Llevel-analysis-K}
Here we discuss how to extend Theorems~\ref{thm:llevel-2} and~\ref{thm:llevel-2} to $K>2$ arms. The analysis is very similar to the $K=2$ case, so we only sketch the necessary changes.

\OMIT{ 
\begin{theorem}
\label{thm:constarm}
Theorem \ref{thm:llevel} and Corollary \ref{cor:llevel} can be extended to the case when $K$ is constant larger than 2. In the extension of Corollary \ref{cor:llevel}, $\Delta$ is defined as the difference between means of the best and the second best arm.
\end{theorem}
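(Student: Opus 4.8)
The plan is to rerun the two-arm arguments of Theorem~\ref{thm:llevel} and Corollary~\ref{cor:llevel} almost verbatim, making only three structural changes to accommodate $K$ arms. (i) Lemma~\ref{lem:greedy} and Lemma~\ref{lem:t1runs} already hold for arbitrary constant $K$, so each first-level full-disclosure path samples every arm with constant (though $K$-dependent) probability, and the per-group pull counts $\fdpN$ are still well-controlled. (ii) The anti-concentration ``luck'' events must be strengthened: for a group $s$ and a target arm $a$ we now ask that arm $a$'s empirical mean exceed $\mu_a$ by the appropriate $\Theta(1/\sqrt{T_{l-1}\NG})$-margin \emph{while every other arm $b\neq a$ falls below $\mu_b$ by the matching margin}. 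By Berry--Esseen and $\mu_b\in[\tfrac13,\tfrac23]$, each of these $K$ one-sided deviations has probability at least an absolute constant $c$, and since they live on independent reward tapes their intersection has probability at least $c^{K}$ per group; hence it suffices to take $\NG = 2^{\Theta(K)}\log T$ groups per level so that, by independence across $s$, every arm gets a lucky group with probability $1-1/\poly(T)$. For constant $K$ this is still $\NG=\Theta(\log T)$, it changes $L$ only by a constant factor (since $\log(\NG^4)=\Theta_K(\log\log T)$), and it inflates the regret only by a $2^{\Theta(K)}=O_K(1)$ factor. (iii) All union bounds over ``$a\in\{1,2\}$'' become union bounds over $a\in\A$, costing only constant factors absorbed into $O_K(\cdot)$.

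With this in hand, write $\Delta_a=\mu_1-\mu_a$ (w.l.o.g.\ arm $1$ is best), so $\Delta=\min_{a\neq 1}\Delta_a$. The exploration claim (analogue of Claim~\ref{clm:l2_explore}) becomes: for every level $l$ and every arm $a$ with $\Delta_a\le \varepsilon_{l-1}$, there is a lucky set of level-$(l-1)$ groups whose shared subhistory makes arm $a$'s empirical mean exceed that of \emph{every} other arm, so all agents in the corresponding level-$l$ groups and $\Gamma$-groups pull $a$, giving $\ge T_l$ pulls of $a$ at level $l$. The only point beyond the two-arm case is the comparison against an arm $b$ with $\mu_b>\mu_a$: on the lucky group arm $b$ reads at most $\mu_b-\delta_b$ and arm $a$ reads at least $\mu_a+\delta_a$, and since $\mu_a-\mu_b=\Delta_b-\Delta_a\ge -\Delta_a\ge -\varepsilon_{l-1}$ regardless of which $b$, the single condition $\Delta_a\le\varepsilon_{l-1}$ controls every competitor at once; subtracting the $\estC/\sqrt{\cdot}$ slack from Assumption~\ref{ass:embehave} (with $\estC=\tfrac1{16}$) leaves a strictly positive margin exactly as in the two-arm proof. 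In particular arm $1$ (with $\Delta_1=0$) is explored at every level, so its empirical mean is always accurate, and the induction on $l$ proceeds as before, using the $\Gamma$-groups to bridge the boundary ranges of $\Delta_a$ just as in Claim~\ref{clm:l2_exploit}.

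The exploitation claim (analogue of Claim~\ref{clm:l2_exploit}) is then applied once per suboptimal arm: if $\varepsilon_{l-1}\NG\le\Delta_a<\varepsilon_{l-2}\NG$, then every agent at levels $l,\dots,L$ sees $\Omega(T_{l-1})$ pulls of arm $a$ and of arm $1$ (coming either from level $l-1$, when $\Delta_a\le\varepsilon_{l-2}$ triggers exploration there, or from level $l-2$ and its $\Gamma$-groups otherwise), and since $\Delta_a$ exceeds the relevant confidence term $\conf{\cdot}$ by a sufficiently large constant factor, arm $a$ is never pulled at levels $\ge l$. Summing the regret arm by arm: arm $a$ is pulled only within the first $\asymp T_{l(a)}\poly(\NG)$ rounds, where $l(a)$ is its elimination level, contributing $\Delta_a\cdot T_{l(a)}\poly(\NG)=O_K\!\bigl(\Delta_a/\varepsilon_{l(a)-2}^{2}\cdot\polylog T\bigr)=O_K(\polylog(T)/\Delta_a)$; adding over the $K-1$ suboptimal arms yields $O_K\!\bigl((K-1)\polylog(T)/\Delta\bigr)=O_K(\min(1/\Delta,\sqrt T)\cdot\polylog T)$, with the same choices of $L$ and of $T_1,\dots,T_L$ as in Theorem~\ref{thm:llevel}/Corollary~\ref{cor:llevel} (the $\sqrt T$ bound of Theorem~\ref{thm:llevel} extends identically by the same case analysis). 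The info-graph is literally unchanged, so the subhistory-size bound $\Omega(t/\polylog T)$ carries over verbatim, and $\Delta$ is read off as the gap to the second-best arm.

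The main obstacle I expect is exactly the lucky-group step for $K\ge 3$: one must verify that, on the event where arm $a$ deviates up and every other arm deviates down, the lucky arm genuinely dominates \emph{all} $K-1$ competitors after the estimation slack — which hinges on the bookkeeping observation above that $\mu_a-\mu_b\ge-\varepsilon_{l-1}$ uniformly in $b$ — and that the only $K$-dependence introduced anywhere is through $\fdpL,\fdpP$ (from Lemma~\ref{lem:greedy}) and the constant in $\NG=2^{\Theta(K)}\log T$, so that every regret and subhistory-size bound stays $O_K(\cdot)$ with no hidden dependence on $T$. Everything else is a mechanical, if tedious, replication of the two-arm union bounds and case analysis.
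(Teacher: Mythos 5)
Your overall architecture matches the paper's: extend the clean events so that for each arm $a$ there is a ``lucky'' group where $a$ deviates up and the other arms deviate down, prove per-arm analogues of Claims~\ref{clm:l2_explore} and~\ref{clm:l2_exploit}, and sum the regret arm by arm (the paper keeps $\NG=\Theta(\log T)$ and absorbs the $K$-dependence into constants, exactly as you do). However, there is a genuine gap in your lucky-group step, and it is precisely the point the paper flags as needing an additional argument. Your claim that ``on the lucky group arm $b$ reads at most $\mu_b-\delta_b$'' for \emph{every} competitor $b\neq a$, and that the bookkeeping inequality $\mu_a-\mu_b\ge-\varepsilon_{l-1}$ therefore controls all competitors at once, is not justified for arms $a'$ with a large gap, $\mu_1-\mu_{a'}>\varepsilon_{l-2}$. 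The anti-concentration events (like $W_3^{l-1,u,a',\text{low}}$) concern the empirical mean of \emph{actual pulls} of $a'$ at level $l-1$ and only have content when $a'$ is pulled at least $T_{l-1}$ times there; but by the exploitation claim at lower levels such an arm is no longer pulled at level $l-1$ at all, so the ``deviates down'' event is vacuous for it. Its estimate in the subhistory seen by a level-$l$ agent comes from levels $\le l-2$ (including the $\Gamma$-groups) and \emph{concentrates near} $\mu_{a'}$ rather than falling below it, so your one-line comparison does not apply. You also misidentify the hard case: competitors $b$ with $\mu_b>\mu_a$ are the easy ones, since $\Delta_b<\Delta_a\le\varepsilon_{l-1}<\varepsilon_{l-2}$ guarantees (by induction) that they are still explored at level $l-1$ and the downward deviation does apply to them.

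To close the induction one must argue separately, for each already-eliminated arm $a'$ with $\mu_1-\mu_{a'}>\varepsilon_{l-2}\ge\varepsilon_{l-1}\NG$, that $\hat\mu_a^t>\hat\mu_{a'}^t$ via a concentration (exploitation-style) argument: the level-$l$ agent sees enough pulls of $a'$ from the level at which it was last explored (this is exactly what the $\Gamma$-groups guarantee), so $\bar\mu_{a'}^t$ is within a confidence radius much smaller than $\Delta_{a'}$ of $\mu_{a'}$, while $\bar\mu_a^t\ge\mu_a+\Omega(1/\sqrt{T_{l-1}\NG})\ge\mu_1-\varepsilon_{l-1}$, and the gap $\mu_a-\mu_{a'}\ge\Delta_{a'}-\varepsilon_{l-1}$ dominates both the confidence radius and the $\estC/\sqrt{\cdot}$ slack of Assumption~\ref{ass:embehave}. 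This is the paper's ``additional argument'': it reuses the proof of Claim~\ref{clm:l2_exploit} at smaller levels, which in turn relies on Claim~\ref{clm:l2_explore} at smaller levels, so the two claims must be established by an interleaved induction over $l$ rather than the exploration claim standing alone. The rest of your proposal (per-arm exploitation, $(K-1)$-fold regret summation, unchanged info-graph and subhistory bound) agrees with the paper's proof sketch.
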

} 

We still wlog assume arm 1 has the highest mean (i.e. $\mu_1 \geq \mu_a, \forall a \in \A$. We first extend the clean events $W_1,W_2,W_3,W_4$ in \Cref{sec:Llevel-analysis-joint} to the case $K>2$. Events $W_1$ and $W_2$ extend naturally: we still set $W_1 = \bigcap_{a,s}W_1^{a,s}$ and $W_2 = \bigcap_{t,a,\z_1,\z_2} W_2^{t,a,\z_1,\z_2}$.
For event $W_3$, we change the definition $W_3^{\ell,a} = \bigcup_u \left(W_3^{\ell,u,a,high}  \cap \left(\bigcap_{a' \neq a} W_3^{\ell,u,a',low}\right) \right)$ and $W_3 = \bigcap_{\ell,a} W_3^{\ell,a}$. Event $W_4$ is extended similarly: $W^{a}_4 := \bigcup_u \left(W_4^{u,a,high} \cap \left(\bigcap_{a' \neq a} W_4^{u,a',low}\right) \right)$ and $W_4 = \bigcap_a W^a_4$. Since $K$ is a constant, the same proof technique shows that the intersection of these clean events happen with probability  $1-O(1/T)$. So the case when some clean event does not happen contributes $O(1)$ to the regret.

Now we proceed to extend Claim \ref{clm:l2_explore} and Claim \ref{clm:l2_exploit}. The statement of Claim \ref{clm:l2_explore} should be changed to ``For any arm $a$ and $2\leq \ell \leq L$, if $\mu_1 - \mu_a \leq \eps_{\ell-1}$, then for any $u \in [\NG]$, there are at least $T_{\ell}$ pulls of arm $a$ in groups $G_{\ell,u,1},G_{\ell,u,2}, ... ,G_{\ell,u,\NG}$ and there are at least $T_{\ell}\NG(\NG-1)$ pulls of arm $a$ in the $\ell$-th level $\Gamma$-groups''. The statement of Claim \ref{clm:l2_exploit} should be changed to ``For any $2 \leq \ell \leq L$, if $\eps_{\ell-1} \NG\leq \mu_1 - \mu_a < \eps_{\ell-2} \NG$, there are no pulls of arm $a$ in groups with level $\ell,...,L$.''

The proof of Claim \ref{clm:l2_exploit} can be easily changed to prove the new version by changing ``arm 2'' to ``arm $a$''. The proof of Claim \ref{clm:l2_explore} needs some additional argument. In the proof of Claim \ref{clm:l2_explore}, we show that $\hat{\mu}_a^t - \hat{\mu}_{3-a} > 0 $ for agent $t$ in the chosen groups. When extending to more than 2 arms, we need to show $\hat{\mu}_a^t - \hat{\mu}_{a'}^t > 0$ for all arm $a' \neq a$. The proof of Claim \ref{clm:l2_explore} goes through if $\mu_1- \mu_{a'} \leq \eps_{\ell-2}$ since then there will be enough arm $a'$ pulls in level $\ell-1$. We need some additional argument for the case when $\mu_1 - \mu_{a'} > \eps_{\ell-2}$. Since $\mu_1- \mu_{a'} > \eps_{\ell-2} > \eps_{\ell-1}\NG$, we can use the same proof of Claim \ref{clm:l2_exploit} (which rely on Claim \ref{clm:l2_explore} but for smaller $\ell$'s) to show that there are no arm $a'$ pulls in level $\ell$ and therefore $\hat{\mu}_a^t - \hat{\mu}_{a'}^t > 0$.

Finally we proceed to bound the regret conditioned on the intersection of clean events happens. The analysis for $K=2$
bounds it by consider the regret from pulling the suboptimal arm (i.e. arm 2). When extending to more than 2 arms, we can do the exactly same argument for all arms except arm 1. This will blow up the regret by a factor of $(K-1)$ which is a constant.

\newpage
\section{Proof of Lemma~\ref{lem:expts-canon}}
\label{app:expts-proof}


Consider a \FDP of length $P$. The ``reward tape" of a given arm $a$, denoted $\mT_a$, is an array of length $P$ such that each entry $\mT_{a,j}$,
$j\in [P]$ is the realized reward of arm $a$ when/if this arm is chosen for the $j$-th time. Say that $\mT_a$ \emph{dominates} another reward tape $\mT'_a$ if there is a weak inequality $\mT_{a,j}\geq \mT'_{a,j}$ for each entry $j$.

Fix the reward tapes for both arms and both problem instances such that $\mT_1(\mI')$ dominates $\mT_1(\mI)$ and $\mT_2(\mI)$ dominates $\mT_2(\mI')$. (In words, changing from $\mI$ to $\mI'$ rewards for arm $1$ weakly increase, and rewards for arm $2$ weakly decrease.) Let $M_{t,a}(\mI)$ be the (realized) number of times arm $a$ is chosen by time $t$ in problem instance $\mI$.

We claim $M_{t,1}(\mI')\geq M_{t,1}(\mI)$ for all rounds $t\in [P]$. This is proved by induction on $t$. Indeed, there's equality in round $1$. For the inductive step, assume $M_{t,1}(\mI')\geq M_{t,1}(\mI)$ in a given round $t$. Then either we have a strict inequality, in which case the weak inequality trivially holds for the next round $t+1$, or we have equality, in which case arm $1$ being chosen in instance $\mI$ implies it is chosen in $\mI'$, too. Claim proved.

To complete the proof, we observe that the reward tapes can be correlated across the two problem instances so that their realizations align as assumed above. (This is because $\mu_1$ weakly increases from $\mI$ to $\mI'$, and $\mu_2$ weakly decreases.)

\end{appendices}


\end{document}